\newcommand{\M}{\mathcal{M}_{(\Lambda,m)}}
\newcommand{\m}{m_\Lambda}
\newcommand{\rH}{r_{\mathcal{H}}}
\newcommand{\rc}{r_{\mathcal{C}}}
\newcommand{\rcb}{\overline{\rc}}
\newcommand{\vrcb}{\lvert\rcb\rvert}
\newcommand{\onc}{\bigr\rvert_{\mathcal{C}^+}}
\newcommand{\kC}{\kappa_{\mathcal{C}}}
\newcommand{\ic}{\iota_\mathcal{C}}
\newcommand{\VC}{\Bigr\rvert_{\mathcal{C}^+}}
\newcommand{\Mp}{{M^\prime}}
\newcommand{\ud}{\mathrm{d}}
\newcommand{\g}{\stackrel{\circ}{\gamma}}
\newcommand{\dm}[1]{\ud\mu_{#1}}
\newcommand{\gb}[1]{\overline{g}_{#1}}
\newcommand{\cg}{\stackrel{\circ}{g}}
\newcommand{\nablab}{\nabla\!\!\!\!\!/\:}
\newcommand{\nablabc}{\stackrel{\circ}{\nablab}}
\newcommand{\nablac}{\stackrel{\circ}{\nabla}}
\newcommand{\db}{\ud\!\!\!/}
\newcommand{\Lb}{\mathcal{L}\!\!\!/\:}
\newcommand{\Oi}[1]{\Omega_{(#1)}}
\newcommand{\pd}[1]{\frac{\partial}{\partial #1}}
\newcommand{\dd}[2]{\frac{\partial #2}{\partial #1}}
\newcommand{\dr}{\frac{\partial}{\partial r}}
\newcommand{\sq}[1]{\bigl(#1\bigr)^2}
\newcommand{\sqv}[1]{\bigl\lvert #1 \bigr\rvert^2}
\newcommand{\deformt}[1]{{}^{(#1)}\pi}
\newcommand{\OdO}{\mathcal{O}\bigl(\frac{1}{r^{\delta}}\bigr)}
\newcommand{\Od}[1]{\mathcal{O}\bigl(\frac{1}{r^{#1+\delta}}\bigr)}
\newcommand{\pirm}{({}^\prime\pi_m)}
\numberwithin{equation}{section}
\DeclareMathOperator{\tr}{tr}
\theoremstyle{plain}
\newtheorem{prop}{Proposition}[section]
\newtheorem{thm}[prop]{Theorem}
\newtheorem{cor}[prop]{Corollary}
\newtheorem{lemma}[prop]{Lemma}
\theoremstyle{definition}
\newtheorem{defn}[prop]{Definition}
\theoremstyle{remark}
\newtheorem*{nota}{Notation}
\newtheorem{rmk}[prop]{Remark}
\begin{document}

\title[Linear waves on Kerr de Sitter]{Global results for linear waves on expanding Kerr and Schwarzschild de Sitter cosmologies}
\author{Volker Schlue}
\address{Department of Pure Mathematics and Mathematical Statistics\\ University of Cambridge\\ Wilberforce Road\\ Cambridge CB3 0WB\\ United Kingdom}
\curraddr{Mathematical Sciences Research Institute\\ 17 Gauss Way\\ Berkeley CA 94720\\ United States}
\email{vschlue@math.utoronto.ca}
\date{\today}
\thanks{The author would like to thank the UK Engineering and Physical Sciences Research Council, the Cambridge European Trust, and the European Research Council for their financial support.}
\subjclass[2010]{35Q75}

\begin{abstract}
In this global study of solutions to the linear wave equation on Schwarzschild de Sitter spacetimes we attend to the cosmological region of spacetime which is bounded in the past by cosmological horizons and to the future by a spacelike hypersurface at infinity.
We prove an energy estimate capturing the expansion of that region which combined with earlier results for the static region yields a global boundedness result for linear waves. 
It asserts that a general finite energy solution to the global initial value problem has a limit on the future boundary at infinity that can be viewed as a function on the standard cylinder with finite energy, and that moreover any decay along the cosmological horizon is inherited along the future boundary.
In particular, we exhibit an explicit nonvanishing quantity on the future boundary of the spacetime consistent with our expectations for the nonlinear stability problem.
Our results apply to a large class of expanding cosmologies near the Schwarzschild de Sitter geometry, in particular subextremal Kerr de Sitter spacetimes.
\end{abstract}

\maketitle

\tableofcontents

\section{Introduction}

The primary impetus for the study of the linear wave equation on black hole spacetimes is that some aspects of the stability or instability of dynamical black holes are already foreshadowed in the global behaviour of linear waves \cite{dr:bhslp,a:in}. 
In this paper, we affirm the global linear stability of a range of \emph{cosmological} black holes, namely a class of spacetimes sufficiently close to spherically symmetric subextremal vacuum black holes with positive cosmological constant. While some important aspects of this problem pertaining to the linear stability of the \emph{stationary} domains have already been resolved in \cite{bh:sds,dr:sds,mbv:sds,sd:beyond,vd:microlocal,dr:kerr:bounded}, we focus here on the \emph{cosmological} region to the future of the stationary domains undergoing accelerated expansion. We identify a \emph{global redshift effect} as the relevant stability mechanism which combined with previous work for the stationary region yields a global linear stability result.

\begin{figure}
\includegraphics{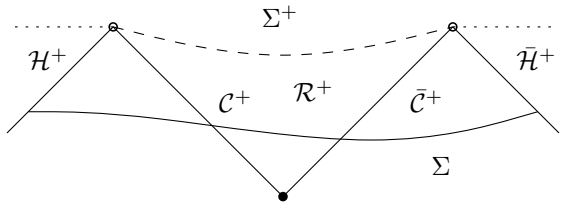}
\caption{Global geometry of Schwarzschild de Sitter. The expanding region is the domain bounded in the past by the cosmological horizons $\mathcal{C}^+$ and $\bar{\mathcal{C}^+}$ and to the future by the spacelike hypersurface $\Sigma^+$ at infinity. $\mathcal{H}^+$ and $\bar{\mathcal{H}}^+$ denote the event horizons of the adjacent black hole interiors.}
\label{fig:cauchyexpanding}
\end{figure}

The cosmological black hole spacetimes underlying the present study are precisely the members of the Schwarzschild de Sitter family in the subextremal range.
We may depict these spacetimes as an infinite chain of black hole interiors, \emph{static} black hole exteriors, and \emph{cosmological} regions separated by event horizons and cosmological horizons respectively \cite{gh:ceh}. 
The distinctive feature in comparison to the case of vanishing cosmological constant is here that future null infinity is a \emph{spacelike} hypersurface $\Sigma^+$, along with the existence of cosmological horizons $\mathcal{C}^+$, $\bar{\mathcal{C}}^+$ as future boundaries of the static regions in the outgoing direction. The domain bounded in the past by the cosmological horizons $\mathcal{C}^+\cup\bar{\mathcal{C}}^+$ and to the future by $\Sigma^+$ has the property that the area of each sphere is strictly increasing towards the future and shall thus be referred to as the \emph{expanding region} $\mathcal{R}^+$; see figure~\ref{fig:cauchyexpanding}.

The expanding region is foliated by spacelike hypersurfaces of constant area radius. We exhibit in this paper an energy of linear waves which is \emph{monotone} with respect to this foliation. Moreover, we show that any solution to the linear wave equation with initially finite energy on a Cauchy hypersurface $\Sigma$ lying in the past of $\Sigma^+$ (such that $\Sigma^+$ is in the domain of dependence of $\Sigma$, see figure \ref{fig:cauchyexpanding}) \emph{attains a limit on $\Sigma^+$ which can be viewed as a function on the standard cylinder $\mathbb{R}\times\mathbb{S}^2$ with finite energy}. In addition, it will be shown that as a consequence of the global redshift effect, and the local redshift effect on the horizon, polynomial or exponential decay along the cosmological horizon $\mathcal{C}^+$ will be inherited along the future boundary $\Sigma^+$. 

The results of this paper apply to solutions of the wave equation on a large class of spacetimes which preserve the global causal geometry and are sufficiently close to a Schwarzschild de Sitter spacetime \emph{due to the stability of the redshift effect}. In particular our results apply to \emph{subextremal Kerr de Sitter spacetimes}. In fact however, these spacetimes do not need to possess any symmetries in the expanding region. 
The ideas of the proof originate in the author's treatment of linear waves on de Sitter cosmologies \cite{vs:thesis} referred to in the epilogue of this paper.

While the resolution of the \emph{black hole stability conjecture} has remained elusive for many years \cite{d:bhsp}, the motivation for the linear theory developed in this paper is largely drawn from our expectation that the nonlinear stability problem is within reach in the context of \emph{expanding spacetimes}. 
The favorable role of a \emph{positive} cosmological constant, which allows for expanding cosmological solutions to the Einstein equations, was first recognized by Friedrich in his proof of the stability of the de Sitter solution \cite{f:ds}. While the conformal method has since then been extended to scalar field models \cite{k:fs}, the stabilizing effect of the cosmological constant may be seen most clearly manifested in the recent works of \cite{hr:s} on the scalar field and \cite{rs:siee,s:se} on perfect fluids, (see also \cite{kl:pureradiation} for the pure radiation case).

In view of these results for the nonlinear field equations with positive cosmological constant and the linear theory presented in this paper,
we expect the characteristic initial value problem for perturbations of Schwarzschild de Sitter data on $\mathcal{C}^+\cup\bar{\mathcal{C}}^+$ to be more tractable than the major open problem of establishing the nonlinear stability of the past of $\mathcal{H}^+\cup\mathcal{C}^+$; c.f.~figure~\ref{fig:cauchyexpanding}.

\section{Boundedness and decay results}
\label{sec:r}

\subsection{Geometry of the Schwarzschild de Sitter family}\label{sec:r:geometry}
The black hole spacetimes providing the starting point for this paper are the members of the Schwarzschild de Sitter family $(\M,g)$ for which $\Lambda>0$ and $0<3m\sqrt{\Lambda}<1$ (where $\Lambda$ denotes the cosmological constant and $m$ the mass of the black hole). 
An important role in our analysis and in the depiction of the causal geometry of this \emph{spherically symmetric} spacetime family is played by the function $r$ on $\mathcal{Q}=\mathcal{M}/\mathrm{SO(3)}$ which takes the value of the area radius on a sphere of symmetry. In other words, $r$ is constant on the orbits of the $\mathrm{SO}(3)$ subgroup of the isometry group of the spacetime $(\M,g)$.
In the given parameter range the spacetime $(\M,g)$ is partitioned into black/white hole interiors $\mathcal{B}$, static exteriors $\mathcal{S}$ and cosmological regions $\mathcal{R}$ in the sense that
\begin{equation}
  \label{eq:sds:topology}
  \mathcal{M}=\ldots\cup\overline{\mathcal{B}\cup\mathcal{S}\cup\mathcal{R}\cup\mathcal{S}\cup\mathcal{B}}\cup\ldots
\end{equation}
where a sphere is in $\mathcal{B}$, $\mathcal{S}$ or $\mathcal{R}$ according as to whether its area radius is in the interval $0<r<\rH$, $\rH<r<\rc$ or $r>\rc$ respectively; here $2m<\rH<3m<\rc$ are the positive roots of a cubic polynomial with coefficients in $\Lambda$ and $m$ (for explicit expressions see \cite{kr:effects}) which mark the \emph{event}, $\mathcal{H}$, $\bar{\mathcal{H}}$, and \emph{cosmological horizons}, $\mathcal{C}$, $\bar{\mathcal{C}}$, respectively. 

In the following we fix $\Lambda>0$, $0<3m<1/\sqrt{\Lambda}$ and pick a domain $\mathcal{R}$; as discussed in \cite{kr:effects,vs:thesis} --- and included for completeness in Section \ref{sec:sds:dn} --- the cosmological region $\mathcal{R}$ and its adjacent static domains $\mathcal{S}$ can be covered with a single \emph{double null coordinate system $(u,v)$} (i.e.~$u$, $v$ are functions on $\mathcal{Q}$ increasing towards the future such that the level sets of $u$ are outgoing, and the levels sets of $v$ are ingoing null lines) which identifies the cosmological horizons $\mathcal{C}$, $\bar{\mathcal{C}}$, with the null hypersurfaces $v=0$, $u=0$, respectively (see figure \ref{fig:dn}).
We refer to the component of $\mathcal{R}$ to the future of the static regions $\mathcal{S}$ as the \emph{expanding} region $\mathcal{R}^+$ which in the above chart corresponds to $0<uv<1$. The hyperbola $uv=1$ represents the future timelike boundary of $\mathcal{R}^+$: a spacelike cylinder composed of spheres of infinite radius. We shall also denote by $\mathcal{C}^+=\mathcal{C}\cap\{u\geq 0\}$, $\bar{\mathcal{C}}^+=\bar{\mathcal{C}}\cap\{v\geq 0\}$, the components of $\mathcal{C}$, $\bar{\mathcal{C}}$ to the future of $\mathcal{S}$, respectively, so that the past boundary of $\mathcal{R}^+$ is precisely $\mathcal{C}^+\cup\bar{\mathcal{C}}^+$.

\begin{figure}
  \includegraphics{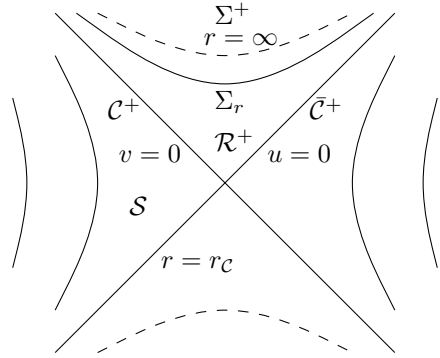}
  \caption{Causal geometry of the cosmological region $\mathcal{R}=\{0<uv<1\}$ and the adjacent static exteriors $\mathcal{S}=\{uv<0\}$. The expanding region $\mathcal{R}^+$ is the component of $\mathcal{R}$ to the future of the cosmological horizons $\mathcal{C}=\{v=0\}$ and $\bar{\mathcal{C}}=\{u=0\}$, and the future boundary $\Sigma^+$ of $\mathcal{R}^+$ is represented by the hyperbola $uv=1$.}
  \label{fig:dn}
\end{figure}

In the expanding region the area radius $r>\rc$ is strictly increasing along any future-directed timelike curve. We shall thus view $r$ as a \emph{time function} for the expanding region and obtain a foliation of $\mathcal{R}^+$ by the level sets $\Sigma_r$ of $r:\mathcal{R}^+\to(\rc,\infty)$ with the \emph{lapse function}
\begin{equation}
  \label{eq:sds:lapse}
  \phi\doteq\frac{1}{\sqrt{-g(V,V)}}=\frac{1}{\sqrt{\frac{\Lambda r^2}{3}+\frac{2m}{r}-1}}\,,
\end{equation}
where $V^\mu=-g^{\mu\nu}\partial_\nu r$ denotes the gradient vectorfield of $r$.
The metric $g$ decomposed in this way reads
\begin{equation}
  \label{eq:sds:metric:levelsets}
  g=-\phi^2\ud r^2+\gb{r}\,,
\end{equation}
where $\gb{r}$ is the induced metric on $\Sigma_r$; in fact we can choose a function $t:\mathcal{R}^+\to(-\infty,\infty)$ which is constant on the lines $u=\alpha v$, $\alpha\in(0,\infty)$, such that 
\begin{equation}
  \label{eq:sds:inducedmetric}
  \gb{r}=\Bigl(\frac{\Lambda r^2}{3}+\frac{2m}{r}-1\Bigr)\ud t^2+r^2\g\,,
\end{equation}
where $\g$ is the standard metric on the unit sphere.
The level sets $\Sigma_r$ are thus \emph{conformal to the standard cylinder} $(\mathbb{R}\times\mathbb{S}^2,\cg=\ud t^2+\g)$, and we note in particular for the volume form
\begin{equation}
  \label{eq:sds:overview:volumeform}
  \phi\:\dm{\gb{r}}=r^2\:\dm{\cg}\,.
\end{equation}
The normal to $\Sigma_r$ is given by $n=\phi\,V$, while we denote by
\begin{equation}
  \label{eq:sds:dr}
  \dr=\phi^2\:V=\phi\,n
\end{equation}
the vectorfield colinear to $V$ with the property that $\partial_r r=1$; (in the above $(t,r)$ coordinates $\partial_r$ is indeed the coordinate vectorfield).

Furthermore there exists a global Killing vectorfield $T$ which coincides with the stationary vectorfield in the static regions (where it is timelike), and with the coordinate vectorfield $\pd{t}$ in the expanding region (where it is spacelike); its integral curves in the static region correspond to physical time translations, and its properties on the cosmological horizons (where it is null) are intimately linked to the positivity of the surface gravity (see Section~\ref{sec:sds:dn}).
Note that $T$ and the generators of the spherical isometry $\Oi{i}:i=1,2,3$ (which are discussed in Appendix~\ref{a:coercivity}) are orthogonal to $n$; the tangent space to $\Sigma_r$ is thus spanned by \emph{Killing vectorfields} at each point.

\subsection{Linear wave equations}
This paper is devoted to the global study of solutions to the linear wave equation
\begin{equation}
  \label{eq:sds:wave}
  \Box_g\psi=0
\end{equation}
on the family of black hole spacetimes $(\M,g)$ described in Section \ref{sec:r:geometry}.
It is entirely based on the use of energy currents which enlarges its scope of applicability to wave equations on general perturbations of $(\M,g)$ that preserve the global causal geometry. This is demonstrated explicitly for a class of nearby cosmologies that include the entire range of \emph{rotating subextremal Kerr de Sitter spacetimes} in Section~\ref{sec:general}. 
In Section~\ref{sec:kg} we shall also explain how our currents can be modified to incorporate a mass term,
\begin{equation}
  \label{eq:sds:kg}
  \Box_g\psi=\m\psi\,,\qquad \m\geq 0\,.
\end{equation}

Our results apply to a class of \emph{finite energy} solutions for the Cauchy problem associated to \eqref{eq:sds:wave}.
Let us recall for this purpose briefly the standard energy current $J^X$ associated to the multiplier vectorfield $X$, a 1-form defined by
\begin{equation}
  \label{eq:r:ec}
  J^X[\psi]\cdot N=T(X,N)\,,
\end{equation}
where $T$ denotes the standard energy momentum tensor associated to \eqref{eq:sds:wave} (for an introductory discussion of Lagrangian field theories in a curved spacetime see \cite{ch:mpI}).
Given a spacelike hypersurface $(\Sigma,\gb{})$ with normal $n$ we can think of
\begin{equation}
  \label{eq:r:energy}
  \int_\Sigma J^n[\psi]\cdot n\,\dm{\gb{}}\geq 0
\end{equation}
as the total energy flux of the wave $\psi$ through $\Sigma$.
Our assumptions on the solutions of \eqref{eq:sds:wave} considered in this paper are stated in terms of a finiteness condition on the level of initial data for an expression of the form \eqref{eq:r:energy}.

\subsection{Integral estimates}\label{sec:r:integral}
Our results provide \emph{integral} bounds for the energy density of linear waves on spacelike noncompact hypersurfaces in the expanding region, in particular on the future boundary, in terms of the energy on general spacelike hypersurfaces, and thus emphasize the \emph{global} nature of our approach.

\subsubsection{Schwarzschild de Sitter spacetimes}
In Section \ref{sec:sds:globalredshift} we demonstrate the presence of a ``global redshift'' which leads us to our key result for the expanding region.
\begin{prop}\label{prop:r:expanding}
Let $\psi$ be a solution to \eqref{eq:sds:wave} then for all $r_2>r_1>\rc$ we have
\begin{multline*}
  \int_{\Sigma_{r_2}}\biggl\{\frac{1}{\phi^2}\sq{\dd{r}{\psi}}+\phi^2\sq{\dd{t}{\psi}}+\sqv{\nablab\psi}\biggr\}\phi\,\dm{\gb{r_2}}\leq\\
  \leq\int_{\Sigma_{r_1}}\biggl\{\frac{1}{\phi^2}\sq{\dd{r}{\psi}}+\phi^2\sq{\dd{t}{\psi}}+\sqv{\nablab\psi}\biggr\}\phi\,\dm{\gb{r_1}}\,.
\end{multline*}
\end{prop}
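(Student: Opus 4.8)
The plan is to run the energy-current method with the multiplier $X=\partial_r=\phi\,n$, the future-directed timelike vectorfield of \eqref{eq:sds:dr} colinear with the gradient of the time function $r$, and to extract the asserted monotonicity from the sign of its deformation tensor --- which is precisely where the \emph{global redshift} enters. For a solution of \eqref{eq:sds:wave} the current $J^X_\mu=T_{\mu\nu}X^\nu$ obeys $\nabla^\mu J^X_\mu=\tfrac12 T^{\mu\nu}\deformt{X}_{\mu\nu}$ with $\deformt{X}=\mathcal{L}_X g$. Since $X=\phi n$ and $2T(n,n)=(n\psi)^2+\sqv{\bar\nabla\psi}$ ($\bar\nabla$ the gradient along $\Sigma_r$), one reads off from \eqref{eq:sds:dr} and \eqref{eq:sds:inducedmetric} that on each level set
\begin{equation*}
  J^X\cdot n=T(\partial_r,n)=\phi\,T(n,n)\,,\qquad 2T(n,n)=\tfrac1{\phi^2}\sq{\dd{r}{\psi}}+\phi^2\sq{\dd{t}{\psi}}+\sqv{\nablab\psi}\,,
\end{equation*}
so that by \eqref{eq:sds:overview:volumeform}, $\int_{\Sigma_r}J^X\cdot n\,\dm{\gb{r}}$ is exactly one half of the quantity appearing in the statement; (taking $X=\partial_r$ rather than $X=n$ is what absorbs the lapse weight $\phi$ into the measure).

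I would then apply the divergence theorem to $J^X$ on the slab $\{r_1\le r\le r_2\}\subset\mathcal{R}^+$, with boundary $\Sigma_{r_2}\cup\Sigma_{r_1}$; granting for the moment that no flux escapes through the ends of the $\Sigma_r$ (see below), this yields
\begin{equation*}
  \int_{\Sigma_{r_1}}J^X\cdot n\,\dm{\gb{r_1}}-\int_{\Sigma_{r_2}}J^X\cdot n\,\dm{\gb{r_2}}=\int_{\{r_1\le r\le r_2\}}\tfrac12 T^{\mu\nu}\deformt{X}_{\mu\nu}\,\dm{g}\,,
\end{equation*}
so everything reduces to the nonnegativity of the bulk integrand. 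The key structural point is that the lapse $\phi$ in \eqref{eq:sds:metric:levelsets} depends only on $r$: hence $n=-\phi\,\ud r$ is a closed one-form, so $\nabla_\mu n_\nu$ is symmetric and the integral curves of $n$ are geodesic, and $\deformt{X}$ splits into $\phi$ times the second fundamental form $k_{\mu\nu}=\nabla_\mu n_\nu$ of $\Sigma_r$ --- diagonal in the frame adapted to \eqref{eq:sds:inducedmetric}, with entries $\propto\partial_r(\tfrac{\Lambda r^2}{3}+\tfrac{2m}{r}-1)$ and $\propto r$ --- plus an $n\otimes n$ term governed by $\phi'$. Contracting with the energy--momentum tensor, the angular-derivative contributions cancel identically, and setting $F=\phi^{-2}=\tfrac{\Lambda r^2}{3}+\tfrac{2m}{r}-1$ one finds
\begin{equation*}
  \tfrac12 T^{\mu\nu}\deformt{X}_{\mu\nu}=\frac{(r^2F)'}{2r^2}\sq{\dd{r}{\psi}}+\frac{r-3m}{r^2F^2}\sq{\dd{t}{\psi}}\geq 0\qquad\text{on }\mathcal{R}^+,
\end{equation*}
since there $F>0$, $r>3m$, and $(r^2F)'=2rF+r^2F'>0$ --- the latter because $F'(\rc)=2(\rc-3m)/\rc^2>0$ (using $F(\rc)=0$, $\rc>3m$) and $F''=\tfrac{2\Lambda}{3}+\tfrac{4m}{r^3}>0$, so $F'>0$ for $r\ge\rc$. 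Integrating in $r$ from $r_1$ to $r_2$ then proves the proposition; this nonnegative bulk term --- driven jointly by the decrease of the lapse $\phi$ and the expansion of the slices $\Sigma_r$ --- is the analytic manifestation of the global redshift. (Equivalently and more transparently, differentiating the left-hand side of the proposition in $r$, integrating by parts in $t$ and over the spheres of symmetry, and substituting \eqref{eq:sds:wave}, yields $\int_{\Sigma_r}\{-(r^2F)'\sq{\partial_r\psi}+(r^2/F)'\sq{\partial_t\psi}\}\,\ud t\,\dm{\g}$, which is $\leq 0$ by the same facts together with $(r^2/F)'=2(3m-r)/F^2<0$.)

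The one genuine obstacle is the noncompactness of the hypersurfaces $\Sigma_r\cong\mathbb{R}\times\mathbb{S}^2$, whose ends --- as $\lvert t\rvert\to\infty$ --- approach the cosmological horizons $\mathcal{C}^+\cup\bar{\mathcal{C}}^+$, so that a priori one must rule out a flux through them. I would dispose of this in the usual way: first establish the estimate for solutions whose data on a Cauchy hypersurface $\Sigma$ (with $\Sigma^+$ in its domain of dependence) have compact support, in which case finite speed of propagation makes $\psi$ compactly supported on each $\Sigma_r$ and the boundary terms are simply absent; then pass to general finite-energy solutions by density, the inequality being continuous on the energy space (and vacuous when its right-hand side is infinite).
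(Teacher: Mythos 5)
Your proposal is correct and takes essentially the same route as the paper: the paper's multiplier is $M=\frac{1}{r}\partial_r$ and its key inequality $\phi\,\nabla\cdot J^M\geq\frac{1}{r}J^M\cdot n$ followed by a Gronwall step is precisely equivalent to your direct positivity $\nabla\cdot J^{\partial_r}\geq 0$ for the rescaled multiplier $\partial_r=rM$, with the heart of both arguments being the same deformation-tensor computation in which the angular terms cancel identically and the $\bigl(\partial_t\psi\bigr)^2$-coefficient is proportional to $(r-3m)/r^2$ and hence positive on $\mathcal{R}^+$. Your explicit handling of the noncompact ends of $\Sigma_r$ by compact support and density is a detail the paper's proof of this proposition leaves implicit (it is addressed there only later, in the localisation argument).
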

\begin{nota}
Here $\nablab$ denotes the projection of the covariant derivative $\nabla$ of $g$ to the sphere of area radius $r$ at this point, and we simply write $\lvert\cdot\rvert$ for the induced norm on this sphere.
\end{nota}

It is essentially due to the \emph{local} redshift effect of the cosmological horizons that Prop.~\ref{prop:r:expanding} can be turned into a global result.
We should regard the \emph{positivity of the surface gravity} of the cosmological horizons (see Section \ref{sec:sds:dn}) as the origin of this effect which is here manifestly exploited in two ways. 

Firstly in a construction of a redshift vectorfield $N$ (we refer to the epilogue in \cite{dr:c} for a general discussion) which gives rise to an energy identity on the spacetime domain bounded in the future by $\Sigma_{r_1}$ and in the past by $\mathcal{C}^+\cup\bar{\mathcal{C}}^+$; see Section \ref{sec:sds:localredshift}. By choosing $r_1>\rc$ sufficiently small, we obtain an estimate for the right hand side in Prop.~\ref{prop:r:expanding} in terms of the corresponding energy \emph{on the cosmological horizons}. In fact, we have:
\begin{prop}\label{prop:sds:local}
Let $\Sigma\subset\overline{\mathcal{R}^+}$ be a spacelike hypersurface with normal $n$ crossing the cosmological horizons to the future of $\mathcal{C}^+\cap\bar{\mathcal{C}}^+$, and let us denote by $\mathcal{C}_0^+=\mathcal{C}^+\cap J^+(\Sigma)$ the segment of $\mathcal{C}^+$ to the future of $\Sigma$.
There exists a \emph{timelike}  vectorfield $N$ (which is invariant under the 1-parameter group of isometries $\varphi_t$ generated by $T$) in a neighborhood of $\mathcal{C}_0^+$ of the form $\mathcal{N}=\{r_0^-\leq r\leq r_0^+\}\cap J^+(\Sigma)$ (where $r_0^-<\rc<r_0^+$ are constants that only depend on $\Sigma$, $\Lambda$ and $m$) with the property that for all solutions $\psi$ to the wave equation \eqref{eq:sds:wave},
\begin{multline*}
\int_{\Sigma_{r_1}\cap J^+(\Sigma)}\biggl\{\frac{1}{\phi^2}\sq{\dd{r}{\psi}}+\phi^2\sq{\dd{t}{\psi}}+\sqv{\nablab\psi}\biggr\}\phi\,\dm{\gb{r_1}}\leq\\
\leq C(r_1)\int_{\mathcal{C}_0^+}{}^\ast J^N[\psi]+C(r_1)\int_{\Sigma^\prime}J^N[\psi]\cdot n\:\dm{\gb{}}\,,
\end{multline*}
whenever $\rc<r_1<r_0^+$ (provided $r_1$ is chosen small enough so that $\Sigma_{r_1}\cap\Sigma\neq\emptyset$) where $C$ is a constant that only depends on $r_1$ and $\Sigma^\prime=J^-(\Sigma_{r_1})\cap\Sigma$ denotes the spacelike hypersurface $\Sigma$ truncated by $\Sigma_{r_1}$.
\end{prop}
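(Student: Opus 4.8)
The plan is to obtain the estimate from a single energy identity for a \emph{redshift current} $J^N[\psi]$ on the spacetime region bounded by $\Sigma^\prime$, $\Sigma_{r_1}$ and the cosmological horizons, the decisive input being the positivity of the surface gravity $\kC$ of $\mathcal{C}$ (established in Section~\ref{sec:sds:dn}). First I construct, in a neighbourhood $\mathcal{N}=\{r_0^-\leq r\leq r_0^+\}\cap J^+(\Sigma)$ of $\mathcal{C}_0^+$, a future-directed \emph{timelike} vectorfield $N$, invariant under $\varphi_t$, for which the bulk density
\[
K^N[\psi]\doteq\nabla^\mu J^N_\mu[\psi]=\tfrac{1}{2}T^{\mu\nu}[\psi]\,\deformt{N}_{\mu\nu}
\]
(the second equality using $\Box_g\psi=0$) is \emph{nonnegative}, indeed coercive in $\ud\psi$, throughout $\mathcal{N}$. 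This is the redshift construction of Dafermos--Rodnianski \cite{dr:c}: $N$ is built from the Killing field $T$ --- whose deformation tensor vanishes and which satisfies $\nabla_T T=\kC\,T$ on $\mathcal{C}$ --- together with a transversal null vectorfield, weighted so that in $T^{\mu\nu}[\psi]\deformt{N}_{\mu\nu}$ the square of the $\psi$-derivative transversal to $\mathcal{C}$ carries a positive coefficient (a positive multiple of $\kC$) and the whole quadratic form in $\ud\psi$ is positive-definite on $\mathcal{C}$. As this quadratic form depends continuously on the point and $N$ is timelike on $\mathcal{C}$, one then picks $r_0^-<\rc<r_0^+$ close enough to $\rc$ that $N$ remains timelike and $K^N[\psi]\geq 0$ holds on all of $\mathcal{N}$; and since $N$, $T$ and $\partial_r$ are $\varphi_t$-invariant, $r_0^\pm$ may be taken to depend only on $\Sigma$, $\Lambda$ and $m$.

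Next, for $\rc<r_1<r_0^+$ chosen small enough that $\Sigma_{r_1}$ meets $\Sigma$, I would apply the divergence theorem to $J^N[\psi]$ on $\mathcal{D}=\mathcal{N}\cap J^-(\Sigma_{r_1})$; since $\rc\leq r\leq r_1<r_0^+$ throughout $\mathcal{D}$, its boundary consists only of $\Sigma_{r_1}\cap J^+(\Sigma)$ to the future, $\Sigma^\prime=J^-(\Sigma_{r_1})\cap\Sigma$ to the past, and the null hypersurfaces $\mathcal{C}_0^+$ and $\bar{\mathcal{C}}_0^+$, and the resulting identity is
\[
\int_{\Sigma_{r_1}\cap J^+(\Sigma)}J^N[\psi]\cdot n\;\dm{\gb{r_1}}+\int_{\mathcal{D}}K^N[\psi]=\int_{\Sigma^\prime}J^N[\psi]\cdot n\;\dm{\gb{}}+\int_{\mathcal{C}_0^+}{}^\ast J^N[\psi]+\int_{\bar{\mathcal{C}}_0^+}{}^\ast J^N[\psi]\,.
\]
Here $\int_{\mathcal{D}}K^N[\psi]\geq 0$ by the first step and may be dropped, while the two null fluxes are nonnegative because $N$ is future-timelike. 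It remains only to compare energies on $\Sigma_{r_1}$: there both $n=\phi V$ and $N$ are future-timelike, so $J^N[\psi]\cdot n=T[\psi](N,n)$ is a positive-definite quadratic form in $\ud\psi$, hence bounded below by a constant multiple of $J^n[\psi]\cdot n=\tfrac{1}{2}\bigl\{\tfrac{1}{\phi^2}\sq{\dd{r}{\psi}}+\phi^2\sq{\dd{t}{\psi}}+\sqv{\nablab\psi}\bigr\}$; since $\phi$ is constant on $\Sigma_{r_1}$ (so that the measure $\phi\,\dm{\gb{r_1}}$ in the statement differs from $\dm{\gb{r_1}}$ only by that constant) and all the relevant vectorfields are $\varphi_t$-invariant, this constant is uniform over $\Sigma_{r_1}$ and depends only on $r_1$. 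This gives the asserted inequality --- the $\bar{\mathcal{C}}^+$--flux, absent from the displayed statement, being handled in exactly the same way by the reflection $u\leftrightarrow v$.

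The one genuinely delicate point is the construction of the first paragraph, namely arranging that $K^N[\psi]$ has a sign near $\mathcal{C}$. This is precisely where $\kC>0$ enters decisively: the positivity of the surface gravity supplies the favourable coefficient for the transversal derivative of $\psi$ in the bulk term, against which the indefinite terms must be dominated by tuning $N$ and then taking the collar $\{r_0^-\leq r\leq r_0^+\}$ thin. Once $N$ has been fixed with this property, the remainder is a routine application of the divergence theorem and of the comparability of the energy currents of two future-timelike vectorfields.
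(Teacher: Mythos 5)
Your proposal is correct and follows essentially the same route as the paper: the same Dafermos--Rodnianski redshift vectorfield $N=T+Y$ (with $Y$ conjugate to $T$ on $\mathcal{C}^+$ and extended by $\nabla_YY=-\sigma(Y+T)$, $\sigma$ large, so that $\kC>0$ makes $K^N\geq 0$ in a thin collar), the same energy identity on the region bounded by $\Sigma^\prime$, the horizon segments and $\Sigma_{r_1}$, and the same comparability of the currents of two future-timelike, $\varphi_t$-invariant vectorfields on $\Sigma_{r_1}$. Your explicit bookkeeping of the $\bar{\mathcal{C}}_0^+$ flux is, if anything, slightly more careful than the paper's, which treats the second horizon tacitly by symmetry.
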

\begin{nota}
Given a 1-form $J$ we denote its dual with respect to the volume form by ${}^\ast J$. The notation $\gb{}$ refers to the first fundamental form of $\Sigma$.
\end{nota}

\begin{figure}
\includegraphics{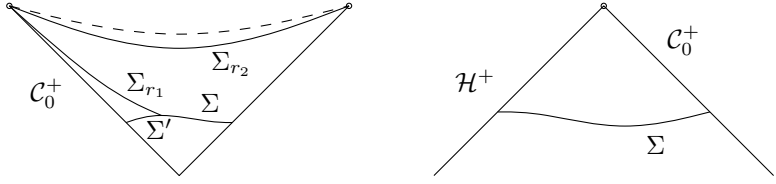}
\caption{Notation for hypersurfaces in Propositions~\ref{prop:r:expanding}-\ref{prop:r:static}.}
\label{fig:not:redshift:props}
\end{figure}

Secondly in the work of \cite{dr:sds} which is occupied with the construction of suitable currents for the \emph{static region}. We cite here a \emph{boundedness} result which yields control on the nondegenerate energy on the cosmological horizon in terms of a natural energy on a spacelike Cauchy hypersurface and thus closes our estimate as desired. 

\begin{prop}[Prop.~10.3.2, \cite{dr:sds}]
\label{prop:r:static}
Let $\Sigma\subset\overline{\mathcal{S}}$ be a spacelike hypersurface with normal $n$ in the static region crossing the horizons to the future of the bifurcation spheres $\mathcal{C}\cap\bar{\mathcal{C}}$, $\mathcal{H}\cap\bar{\mathcal{H}}$, and denote by $\mathcal{C}_0^+=\mathcal{C}^+\cap J^+(\Sigma)$. 
We have for all solutions $\psi$ to \eqref{eq:sds:wave},
\begin{equation*}
  \int_{\mathcal{C}_0^+}{}^\ast J^N[\psi]\leq C\:\int_{\Sigma}J^n[\psi]\cdot n\,\dm{\gb{}}\,,
\end{equation*}  
where $C$ is a constant that only depends on $\Sigma$, $\Lambda$ and $m$, whenever the right hand side is finite.
\end{prop}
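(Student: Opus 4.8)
The plan is to follow the uniform boundedness method of \cite{dr:c,dr:sds}: combine the global Killing multiplier $T$, a redshift multiplier $N$ at the cosmological (and event) horizon exploiting the positivity of the surface gravities (cf.\ Section~\ref{sec:sds:dn}), and an integrated local energy decay estimate. I would fix $\tau>0$, work on the compact region $\mathcal{D}_\tau\subset\overline{\mathcal{S}}$ bounded to the past by $\Sigma$, to the future by the translate $\Sigma_\tau=\varphi_\tau(\Sigma)$ (reaching later cuts of $\mathcal{H}^+$ and $\mathcal{C}^+$), and on the sides by the corresponding segments of $\mathcal{H}^+$ and $\mathcal{C}^+$; then derive a bound uniform in $\tau$ for the flux through $\mathcal{C}^+\cap\mathcal{D}_\tau$ and let $\tau\to\infty$, using monotone convergence for the non-negative integrand ${}^\ast J^N[\psi]$ and that $\mathcal{C}^+\cap\mathcal{D}_\tau$ exhausts $\mathcal{C}_0^+$. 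One may assume $\int_\Sigma J^n[\psi]\cdot n\,\dm{\gb{}}<\infty$ and justify the energy identities by approximation.

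A first, degenerate bound comes from the $T$-multiplier. Since $T$ is Killing, $J^T[\psi]$ is divergence free, and since $T$ is future causal on $\overline{\mathcal{S}}$ --- timelike in $\mathcal{S}$ and future null on $\mathcal{H}^+\cup\mathcal{C}^+$, which is exactly the positivity of the surface gravities --- the fluxes of $J^T[\psi]$ through $\Sigma_\tau$, $\mathcal{H}^+$ and $\mathcal{C}^+$ are all non-negative. The divergence theorem on $\mathcal{D}_\tau$ thus gives $\int_{\mathcal{C}^+\cap\mathcal{D}_\tau}{}^\ast J^T[\psi]\leq\int_\Sigma J^T[\psi]\cdot n\,\dm{\gb{}}$, and the right-hand side is $\leq C\int_\Sigma J^n[\psi]\cdot n\,\dm{\gb{}}$ because the dominant energy condition and the uniformly bounded tilt of the future causal field $T$ relative to the normal $n$ of the fixed spacelike hypersurface $\Sigma$ (bounded even where $T$ is null, i.e.\ on $\Sigma\cap(\mathcal{H}^+\cup\mathcal{C}^+)$) force $J^T[\psi]\cdot n\leq C\,J^n[\psi]\cdot n$ pointwise. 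But $T$ is parallel to the null generator along $\mathcal{C}^+$, so this controls only the generator derivative of $\psi$ there and misses the angular derivatives $\sqv{\nablab\psi}$ carried by ${}^\ast J^N[\psi]$.

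To capture the full flux I would replace $T$ by a global redshift vectorfield $N$: near each of $\mathcal{C}$ and $\mathcal{H}$ there is a $\varphi_t$-invariant timelike field (the one of Proposition~\ref{prop:sds:local} at $\mathcal{C}$, and the analogous construction at $\mathcal{H}$) whose current has $K^N\geq b\,(\text{energy density})\geq 0$ on a thin collar of that horizon, and I would interpolate between these and $T$, keeping $N=T$ on a neighborhood of the photon sphere $r=3m$. The negative part of $K^N$ is then supported in two thin transition shells $\mathcal{A}$ that are \emph{bounded away from the photon sphere}, where $|K^N|\leq C\,(\text{energy density})$; and, with the redshift arranged so that $K^N$ outside $\mathcal{A}$ and the boundary terms on $\Sigma_\tau$ and $\mathcal{H}^+$ all have the favorable sign, the $N$-energy identity on $\mathcal{D}_\tau$ yields
\[
\int_{\mathcal{C}^+\cap\mathcal{D}_\tau}{}^\ast J^N[\psi]\;\leq\;\int_\Sigma J^N[\psi]\cdot n\,\dm{\gb{}}\;+\;\int_{\mathcal{D}_\tau\cap\mathcal{A}}|K^N|\,,
\]
with $\int_\Sigma J^N[\psi]\cdot n\,\dm{\gb{}}\leq C\int_\Sigma J^n[\psi]\cdot n\,\dm{\gb{}}$ since $N$ and $n$ are uniformly comparable timelike fields.

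It remains to bound $\int_{\mathcal{D}_\tau\cap\mathcal{A}}|K^N|\leq C\int_{\mathcal{D}_\tau\cap\mathcal{A}}(\text{energy density})$ uniformly in $\tau$, and this is the main obstacle: a crude estimate using only the monotonicity of the $T$-energy gives growth linear in $\tau$. What is needed is an integrated local energy decay (Morawetz) estimate for \eqref{eq:sds:wave} on $\overline{\mathcal{S}}$ --- obtained from a radial multiplier combined with the redshift to absorb the horizon degeneracies, and whose proof is the substance of \cite{dr:sds} --- of the form $\int_0^\infty\!\int_{\Sigma_s\cap\mathcal{A}}(\text{energy density})\,ds\leq C\int_\Sigma J^N[\psi]\cdot n\,\dm{\gb{}}\leq C\int_\Sigma J^n[\psi]\cdot n\,\dm{\gb{}}$; crucially this is \emph{non-degenerate} on $\mathcal{A}$ precisely because $\mathcal{A}$ avoids the photon sphere, so no derivative is lost --- which is what makes the clean, derivative-loss-free statement of the proposition possible. (Alternatively one may invoke the exponential decay of the local energy in the static Schwarzschild de Sitter region to bound this spacetime integral directly.) Substituting and letting $\tau\to\infty$ yields $\int_{\mathcal{C}_0^+}{}^\ast J^N[\psi]\leq C\int_\Sigma J^n[\psi]\cdot n\,\dm{\gb{}}$ with $C=C(\Sigma,\Lambda,m)$.
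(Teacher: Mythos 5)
The paper does not actually prove Proposition~\ref{prop:r:static}: it is imported verbatim as Prop.~10.3.2 of \cite{dr:sds}, and the remark immediately following it stresses that this boundedness statement ``is only found in \cite{dr:sds}''. So there is no in-paper proof to compare against; what you have written is a reconstruction of the Dafermos--Rodnianski argument.

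As such a reconstruction it is structurally sound. You correctly identify the three ingredients and, importantly, why each is needed: the $T$-identity gives only the degenerate flux $\int_{\mathcal{C}_0^+}(T\psi)^2$ and misses the angular part of ${}^\ast J^N$; the redshift field $N$ at \emph{both} horizons repairs this at the cost of an error supported in transition shells; and --- this is the point most reconstructions get wrong --- the cumulative shell error over \emph{infinite} time cannot be absorbed by the Gronwall mechanism alone. Gronwall absorption yields uniform boundedness of the slice energy $\int_{\Sigma_\tau}J^N\cdot n$, but the total horizon flux $\int_{\mathcal{C}_0^+}{}^\ast J^N$ picks up $\int_0^\tau(\text{shell energy})\,ds$, which a priori grows linearly; one genuinely needs the integrated local energy decay estimate, nondegenerate on the shells because they avoid the photon sphere. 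That estimate is the substance of \cite{dr:sds} and you invoke it as a black box, which is the honest thing to do here but means the hard analysis is assumed rather than supplied. Two small corrections: the fact that $T$ is null on $\mathcal{H}^+\cup\mathcal{C}^+$ is \emph{not} ``exactly the positivity of the surface gravities'' --- $T$ is null there because these are Killing horizons, whereas positivity of $\kappa$ (Lemma~\ref{lemma:sds:surfacegravity}) is the separate nondegeneracy condition that powers the construction of $N$ and plays no role in the $T$-estimate; and the relevant property of $T$ in the static region is that it is causal up to and including the horizons (absence of an ergoregion), which is what fails for the general perturbations discussed in Section~\ref{sec:r:integral:gen} and is precisely why the paper cannot extend this proposition beyond Schwarzschild de Sitter.
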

\begin{rmk}
Here $N$ is precisely the vectorfield referred to in Prop.~\ref{prop:sds:local}. While \cite{dr:sds} relies on an extension of $N$ into the static region, we require for our purposes an extension  into the expanding region; however, on the cosmological horizon $\mathcal{C}^+$ the two constructions coincide.
\end{rmk}
\begin{rmk}
 We point out that in the literature the \emph{boundedness} statement of Prop.~\ref{prop:r:static} is only found in \cite{dr:sds}. The implications of related decay results for the static region \cite{dr:sds,bh:sds,mbv:sds,sd:quasi,sd:beyond} are discussed in Remark~\ref{rmk:cor:decay}.
\end{rmk}

Let us consider global solutions to the Cauchy problem \eqref{eq:sds:wave} with initial data prescribed on a spacelike hypersurface $\Sigma\subset\overline{\mathcal{S}\cup\mathcal{R}^+}$ as depicted in figure \ref{fig:cauchyexpanding}; more precisely we say:
\begin{defn}
A spacelike hypersurface $\Sigma\subset\M$ is called a \emph{$\Sigma^+$-Cauchy hypersurface} if the future boundary $\Sigma^+$ of a chosen expanding region $\mathcal{R}^+\subset\M$ is in the domain of dependence of $\Sigma\subset J^-(\Sigma^+)$, and $\Sigma\subset\overline{\mathcal{R}^+\cup\mathcal{S}}$ crosses the horizons $\mathcal{H}$, $\mathcal{C}$, $\bar{\mathcal{C}}$, and $\bar{\mathcal{H}}$ to the future of the bifurcation spheres $\bar{\mathcal{H}}\cap\mathcal{H}$, $\bar{\mathcal{C}}\cap\mathcal{C}$, and $\bar{\mathcal{H}}\cap\mathcal{H}$, respectively.
\end{defn}
As a consequence of the global and local redshift effect on and near the cosmological horizons --- namely Propositions~\ref{prop:r:expanding}, \ref{prop:sds:local} and \ref{prop:r:static} --- the energy on $\Sigma_{r}\cap J^+(\Sigma)$ is finite for any $r>\rc$ provided it is initally finite on $\Sigma$.
We may thus take the limit $r\to\infty$ in the global redshift Proposition~\ref{prop:r:expanding} which yields the finiteness of an explicit limiting quantity on $\Sigma^+$. In fact, we infer from \eqref{eq:sds:overview:volumeform} that
\begin{equation}
  \label{eq:sds:overview:finiteintegral}
  \lim_{r\to\infty}\int_{\Sigma_{r}}\biggl\{\phi^2\sq{\dd{t}{\psi}}+\sqv{\nablab\psi}\biggr\}\phi\,\dm{\gb{r}}=\int_{\Sigma^+}\Bigl\{\frac{3}{\Lambda}\sq{\dd{t}{\psi}}+\sqv{\nablabc\psi}\Bigr\}\dm{\cg}\,
\end{equation}
is the \emph{rescaled asymptotic energy} of a solution to the Cauchy problem viewed as a function on the standard cylinder; (here $\nablabc$ denotes the covariant derivative on the standard sphere $(\mathbb{S}^2,\g)$).

\begin{thm}\label{thm:sds}
Let $\Sigma^+$ be the future boundary of a chosen expanding region $\mathcal{R}^+\subset\M$ of Schwarzschild de Sitter (with $0<3m\sqrt{\Lambda}<1$, $\Lambda>0$) endowed with the standard metric $\cg$ of the cylinder $\mathbb{R}\times\mathbb{S}^2$, and let $\Sigma$ be a $\Sigma^+$-Cauchy hypersurface with normal $n$. Then all solutions $\psi$ to the wave equation \eqref{eq:sds:wave} with initial finite energy on $\Sigma$,
\begin{equation}
  \label{eq:thm:sds:condition}
  D[\psi]\doteq\int_\Sigma J^n[\psi]\cdot n\,\dm{\gb{}}<\infty\,,\qquad\psi\rvert_{\Sigma}\in\mathrm{H}^1(\Sigma)\,,
\end{equation}
are globally bounded on $\overline{\mathcal{R}^+}$ in the energy defined by Prop.~\ref{prop:r:expanding}, and have a limit on $\Sigma^+$ in $\stackrel{\circ}{\mathrm{H}}\!\!{}^1(\mathbb{R}\times\mathbb{S}^2)$. Moreover, the limit \emph{as a function on $\mathbb{R}\times\mathbb{S}^2$} satisfies
\begin{equation}\label{eq:thm:sds:energy}
  \int_{\Sigma^+}\sqv{\nablac\psi}\dm{\cg}\leq C(\Lambda,m,\Sigma)\,D[\psi]\,,
\end{equation}
where $C$ is a constant that only depends on $\Lambda$, $m$, and $\Sigma$, (and $\nablac$ denotes the standard gradient on the cylinder $\mathbb{R}\times\mathbb{S}^2$).
\end{thm}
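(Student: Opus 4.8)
The plan is to combine Propositions~\ref{prop:r:static}, \ref{prop:sds:local} and \ref{prop:r:expanding} into a bound, uniform in $r>\rc$, on the energy
\[
  E(r)\doteq\int_{\Sigma_r\cap J^+(\Sigma)}\biggl\{\frac{1}{\phi^2}\sq{\dd{r}{\psi}}+\phi^2\sq{\dd{t}{\psi}}+\sqv{\nablab\psi}\biggr\}\phi\,\dm{\gb{r}}\,,
\]
and then to pass to the limit $r\to\infty$. For the uniform bound, Proposition~\ref{prop:r:static} applied on each cosmological horizon controls $\int_{\mathcal{C}_0^+}{}^\ast J^N[\psi]$ and its $\bar{\mathcal{C}}^+$-analogue by $C(\Sigma,\Lambda,m)\,D[\psi]$; Proposition~\ref{prop:sds:local} then bounds $E(r_1)$ for a suitable $r_1$ close to $\rc$ (using that $N$ is timelike, so $J^N[\psi]\cdot n\leq C\,J^n[\psi]\cdot n$ on $\Sigma'$ and hence $\int_{\Sigma'}J^N[\psi]\cdot n\,\dm{\gb{}}\leq C\,D[\psi]$); and Proposition~\ref{prop:r:expanding}, applied in the slab between $\Sigma_{r_1}$ and $\Sigma_{r_2}$ truncated by $\Sigma$ --- where the additional boundary fluxes through $\mathcal{C}^+\cup\bar{\mathcal{C}}^+$ carry a favourable sign and those through $\Sigma$ are once more dominated by $D[\psi]$ --- propagates the bound to every $r_2>r_1$. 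For $r$ large enough $\Sigma_r\subset D^+(\Sigma)$, so the truncation is immaterial there; in particular $E(r)\leq C(\Lambda,m,\Sigma)\,D[\psi]$ is the asserted global boundedness, and by the monotonicity in Proposition~\ref{prop:r:expanding} the quantity $E(r)$ decreases to a limit $E_\infty\leq C(\Lambda,m,\Sigma)\,D[\psi]$.

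I would then recast $E(r)$ as a functional on the cylinder: using \eqref{eq:sds:overview:volumeform} and $\sqv{\nablab\psi}=r^{-2}\sqv{\nablabc\psi}$,
\[
  E(r)=\int_{\mathbb{R}\times\mathbb{S}^2}\biggl\{\frac{r^2}{\phi^2}\sq{\dd{r}{\psi}}+\phi^2r^2\sq{\dd{t}{\psi}}+\sqv{\nablabc\psi}\biggr\}\dm{\cg}\,.
\]
Since $\tfrac{3}{\Lambda}\leq\phi^2r^2\leq\tfrac{3}{\Lambda}+\tfrac{C}{r^2}$ for $r>\rc$, the last two terms are comparable to $\int_{\Sigma_r}\sqv{\nablac\psi}\dm{\cg}$, so $\{\psi\rvert_{\Sigma_r}\}_{r>\rc}$ is bounded in $\stackrel{\circ}{\mathrm{H}}\!\!{}^1(\mathbb{R}\times\mathbb{S}^2)$, and Fatou's lemma together with the first step already gives \eqref{eq:thm:sds:energy} for any weak subsequential limit. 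The first (radial) term is exactly the one absent from the limiting energy in \eqref{eq:sds:overview:finiteintegral}; its smallness for large $r$ is the global redshift, and the spacetime integral of this density over $\{r\geq r_1\}$, weighted by the positive factor carried by the bulk of the energy identity underlying Proposition~\ref{prop:r:expanding}, is finite.

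To produce the limit on $\Sigma^+$ I would first treat smooth, compactly supported data, for which the energies of $T\psi$ and of the rotation derivatives $\Oi{i}\psi$, $i=1,2,3$, are also finite on $\Sigma$. Because $T$ and the $\Oi{i}$ are tangent to each $\Sigma_r$ and commute with $\dr$, the fundamental theorem of calculus along the integral curves of $\dr$ shows that every tangential derivative of $\psi\rvert_{\Sigma_{r_2}}-\psi\rvert_{\Sigma_{r_1}}$ equals the corresponding derivative of $\psi$ on $\Sigma_{r_2}$ minus that on $\Sigma_{r_1}$, and equals $\int_{r_1}^{r_2}\dr(T\psi)\,\ud r$, respectively $\int_{r_1}^{r_2}\dr(\Oi{i}\psi)\,\ud r$; a weighted Cauchy--Schwarz against the finite bulk of the previous step (exploiting $\phi^2/r^2\sim 3/(\Lambda r^4)$) then bounds $\lVert\psi\rvert_{\Sigma_{r_2}}-\psi\rvert_{\Sigma_{r_1}}\rVert_{\stackrel{\circ}{\mathrm{H}}\!\!{}^1}$ by a quantity tending to $0$ as $r_1\to\infty$, after invoking Proposition~\ref{prop:r:expanding} for $\psi$, $T\psi$ and $\Oi{i}\psi$. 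Thus $\{\psi\rvert_{\Sigma_r}\}$ is Cauchy and converges to some $\psi\rvert_{\Sigma^+}\in\stackrel{\circ}{\mathrm{H}}\!\!{}^1(\mathbb{R}\times\mathbb{S}^2)$. For general data satisfying \eqref{eq:thm:sds:condition} I would approximate $\psi$ by solutions $\psi_n$ with smooth, compactly supported data and pass to the limit: applying the bound of the first step to $\psi-\psi_n$ gives $\lVert(\psi-\psi_n)\rvert_{\Sigma_r}\rVert_{\stackrel{\circ}{\mathrm{H}}\!\!{}^1}^2\leq C\,E[\psi-\psi_n](r)\leq C'\,D[\psi-\psi_n]$ uniformly in $r$, which transfers the Cauchy property, gives the convergence, and yields \eqref{eq:thm:sds:energy}; the identity \eqref{eq:sds:overview:finiteintegral} then follows by dominated convergence since $\phi^2r^2\to3/\Lambda$.

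The assembly in the first two steps is essentially read off from the cited propositions, and the real obstacle is the passage from boundedness to genuine convergence in $\stackrel{\circ}{\mathrm{H}}\!\!{}^1$ under the weak $\mathrm{H}^1$ hypothesis on the data: this is what forces the detour through the Killing fields $T$ and $\Oi{i}$ --- legitimate precisely because they span the tangent space to each $\Sigma_r$ and commute with $\dr$ --- and the density argument, and it relies on the global redshift of Proposition~\ref{prop:r:expanding} coming with a spacetime bulk term that is not merely nonnegative but integrable in $r$ against a weight bounded below, so that the weighted Cauchy--Schwarz closes.
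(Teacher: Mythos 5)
Your proposal is correct and follows essentially the same route as the paper: the uniform bound and \eqref{eq:thm:sds:energy} are obtained by chaining Propositions~\ref{prop:r:static}, \ref{prop:sds:local} and \ref{prop:r:expanding} exactly as in Section~\ref{sec:r:integral}, and the existence of the limit is obtained by the same density argument as in Section~\ref{sec:limit}, commuting with $T$ and $\Oi{i}$ for smooth approximating data and transferring convergence via the uniform energy bound applied to the difference. The only cosmetic deviation is that you establish the limit for the smooth approximants directly as a Cauchy property in $\stackrel{\circ}{\mathrm{H}}\!\!{}^1$ via a weighted Cauchy--Schwarz in $r$, whereas the paper first produces a pointwise limit via the Sobolev inequality on the sphere; both rest on the same integration from $\Sigma$ to $\Sigma^+$ and the same monotone energies.
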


\begin{rmk}
Our result Thm.~\ref{thm:sds} can be read as an implicit version of a decay statement for the ``natural'' energy in this problem. For we establish the boundedness of a limiting \emph{rescaled} quantity which corresponds to the decay of the induced geometric quantities arising in this setting; note e.g.~on the level of energy densities that
\begin{equation*}
  \bigl\lvert\nablabc\psi\bigr\rvert=r\bigl\lvert\nablab\psi\bigr\rvert\,,
\end{equation*}
i.e.~the angular derivatives on the rescaled standard sphere differ from the induced angular derivatives on a sphere on $\Sigma_r$ by a factor of $r$.

Moreover the result is in agreement with our expectation for the nonlinear stability problem. Although we expect to recover the same global \emph{causal} geometry, the dynamical development of a perturbation of Schwarzschild de Sitter initial data is not expected to settle down to the \emph{exact} geometry of the expanding region of a Schwarzschild de Sitter solution. This is captured by a nonvanishing bounded quantity on the future boundary.
 
Furthermore this behaviour persists in the larger class of perturbed spacetimes to be discussed in Section~\ref{sec:r:integral:gen}.

Note that we have dropped the normal derivative in the limit \eqref{eq:sds:overview:finiteintegral}. If we viewed the asymptotic quantity recovered in Thm.~\ref{thm:sds} as initial data for the corresponding backward problem, then the rescaled asymptotic normal derivative of Prop.~\ref{prop:r:expanding} would be part of the data at infinity.
\end{rmk}

This result can be viewed as the most general global energy bound in the class of finite energy solutions to \eqref{eq:sds:wave}. While in Thm.~\ref{thm:sds} we only require the initial energy to be finite, it is known that under suitable assumptions \emph{on the higher order energies} (i.e.~under stronger regularity assumptions on the initial data) the solutions to \eqref{eq:sds:wave} will in fact \emph{decay} along the cosmological horizons. It is an immediate consequence of our approach (see also Section~\ref{sec:sds:localisation}) that our estimates can be localised to show that any decay along the horizons is in fact inherited along the future boundary.

\begin{nota}
Let $\tau$ be a function on $\mathcal{C}^+$ which is constant on the spheres of symmetry such that $\tau=\tau_0>0$ on a chosen sphere $S\subset\mathcal{C}^+$ to the future of $\mathcal{C}^+\cap\bar{\mathcal{C}}^+$ and
\begin{equation*}
  T\cdot\tau\onc=1\,.
\end{equation*}
We denote by $\mathcal{C}_{\tau_1}^+$ the segment of $\mathcal{C}^+$ lying to the future of the sphere with value $\tau=\tau_1$, and by $\mathrm{C}_{\tau_1}$ the outgoing null hypersurface from the sphere $\tau=\tau_1$; see figure \ref{fig:not:segments}. Moreover let us denote by\footnote{We use standard notation for causal sets in Lorentzian geometry, see e.g.~\cite{he:gr}.}
\begin{equation*}
  \Sigma_\tau^+=\mathrm{J}^+(\mathcal{C}_\tau^+\cup\mathrm{C}_\tau)\cap\Sigma^+
\end{equation*}
the future boundary of the spacetime domain to the future of $\mathcal{C}_\tau^+\cup\mathrm{C}_\tau$.
\end{nota}

\begin{figure}
\includegraphics{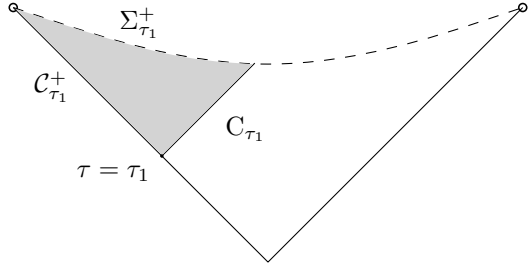}
\caption{Notation for segments of the cosmological horizon $\mathcal{C}^+$ and the future boundary $\Sigma^+$.}
\label{fig:not:segments}
\end{figure}

\begin{cor}\label{cor:decay}
Assume $\psi$ is a solution to \eqref{eq:sds:wave} which decays polynomially along $\mathcal{C}^+$ in the sense that for some fixed integer $k\in\mathbb{N}$ there exists a constant $C_k>0$ such that
\begin{equation}
  \label{eq:cor:decay:assumption}
  \int_{\mathcal{C}^+_\tau}{}^*J^N[\psi]\leq\frac{C_k}{\tau^k}\qquad(\tau>\tau_0)\,.
\end{equation}
Then there exists a constant $C_k\leq C<\infty$ such that also
\begin{equation}
  \label{eq:cor:decay:conclusion}
  \int_{\Sigma^+_\tau}\sqv{\nablac\psi}\dm{\cg}\leq \frac{C}{\tau^k}\qquad(\tau>\tau_0)\,.
\end{equation}
Moreover, if the decay along $\mathcal{C}^+$ is assumed to be exponential, i.e.~if for some $\beta>0$ there exists a constant $C_\beta>0$ such that
\begin{equation}
  \label{eq:cor:decay:assumption:exp}
  \int_{\mathcal{C}^+_\tau}{}^*J^N[\psi]\leq C_\beta\:e^{-\beta\tau}\qquad(\tau>\tau_0)\,,
\end{equation}
then also the decay along the future boundary $\Sigma^+$ is exponential, with a rate however that is not faster than dictated by the local redshift effect of the cosmological horizon; in fact then there exists a constant $C_\beta<C<\infty$ such that
\begin{equation}
  \label{eq:cor:decay:conclusion:exp}
  \int_{\Sigma^+_\tau}\sqv{\nablac\psi}\dm{\cg}\leq C\,e^{-\min\{b,\beta\}\:\tau}\qquad(\tau>\tau_0)\,,
\end{equation}
where $b$ is a constant that only depends on $\Lambda$, and $m$; (here for simplicity $\beta\neq b$).
\end{cor}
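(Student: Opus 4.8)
The plan is to re-run the energy-current arguments underlying Propositions~\ref{prop:r:expanding}, \ref{prop:sds:local} and \ref{prop:r:static} on \emph{truncated} spacetime domains, replacing the initial hypersurface $\Sigma$ by the ``corner'' $\mathcal{C}^+_\tau\cup\mathrm{C}_\tau$. Concretely, set $\mathcal{D}_\tau=\mathrm{J}^+(\mathcal{C}^+_\tau\cup\mathrm{C}_\tau)$, so that $\Sigma^+_\tau$ is by definition the future boundary of $\mathcal{D}_\tau$ on $\Sigma^+$, and foliate $\mathcal{D}_\tau$ by the slices $\Sigma_r\cap\mathcal{D}_\tau$. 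Since the multiplier vectorfields involved --- the $\varphi_t$-invariant redshift field $N$ of Prop.~\ref{prop:sds:local}, defined on a neighbourhood of the whole of $\mathcal{C}^+\cap\mathrm{J}^+(\Sigma)$, and the global-redshift multiplier of Prop.~\ref{prop:r:expanding}, defined on all of $\mathcal{R}^+$ --- are insensitive to this truncation, the divergence identities behind those propositions apply to $\mathcal{D}_\tau$ with the signs (and absorptions) of the bulk terms unchanged. The only structural change is the appearance of one extra term on the past side: the flux of the relevant current through the outgoing cone $\mathrm{C}_\tau$. Carrying this out and passing to the limit $r\to\infty$ as in \eqref{eq:sds:overview:finiteintegral} --- legitimate because the localised forms of Props.~\ref{prop:sds:local} and \ref{prop:r:static} (hence the argument of Thm.~\ref{thm:sds}) guarantee that all intermediate energies over $\Sigma_r\cap\mathcal{D}_\tau$ are finite --- yields an estimate of the schematic form
\begin{equation*}
  \int_{\Sigma^+_\tau}\sqv{\nablac\psi}\dm{\cg}\ \le\ C\!\int_{\mathcal{C}^+_\tau}{}^\ast J^N[\psi]\ +\ C\!\int_{\mathrm{C}_\tau}\!\bigl(\text{flux of }J^N[\psi]\text{ through }\mathrm{C}_\tau\bigr)\,,\qquad C=C(\Lambda,m,\Sigma)\,.
\end{equation*}
By \eqref{eq:cor:decay:assumption} (resp.\ \eqref{eq:cor:decay:assumption:exp}) the first term is $\le C\,C_k\tau^{-k}$ (resp.\ $\le C\,C_\beta e^{-\beta\tau}$), so the whole problem reduces to bounding the flux through the cone $\mathrm{C}_\tau$.

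The cone $\mathrm{C}_\tau$ runs from $\mathcal{C}^+$ up to $\Sigma^+$, so it is fed both by $\mathcal{C}^+_\tau$ itself and by ``old'' data --- on $\mathrm{C}_{\tau'}$ for $\tau'<\tau$, on the earlier part of $\mathcal{C}^+$, and on $\bar{\mathcal{C}}^+$ --- which need not decay in $\tau$; one must therefore show that this old-data contribution is nonetheless damped. I would split $\mathrm{C}_\tau$ at $\{r=r_0^+\}$. On $\mathrm{C}_\tau\cap\{r\le r_0^+\}$ the \emph{local} redshift of the cosmological horizon (exactly as in Prop.~\ref{prop:sds:local}) controls the flux by $\int_{\mathcal{C}^+_{\tau-\tau^\ast}}{}^\ast J^N[\psi]$ for a fixed shift $\tau^\ast=\tau^\ast(\Lambda,m)$, but --- as is characteristic of a single use of the redshift --- it cannot propagate a rate faster than the one fixed by the surface gravity; this is precisely the origin of the constant $b=b(\Lambda,m)$ and of the cap in \eqref{eq:cor:decay:conclusion:exp}. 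On $\mathrm{C}_\tau\cap\{r\ge r_0^+\}$ the old-data contribution is removed by the \emph{global} redshift: each unit of $\varphi_t$-translation separating $\mathcal{C}^+\cup\bar{\mathcal{C}}^+$ from $\mathrm{C}_\tau$ kills a definite proportion of this flux through the coercive bulk term of Prop.~\ref{prop:r:expanding}, and iterating in $\tau$ produces an $e^{-b\tau}$ bound (the expansion near $\Sigma^+$ supplying at least the rate $b$). Adding the two pieces, the cone flux is $\le C\int_{\mathcal{C}^+_{\tau-\tau^\ast}}{}^\ast J^N[\psi]+Ce^{-b\tau}D[\psi]$, and inserting this into the estimate above gives \eqref{eq:cor:decay:conclusion} (an $e^{-b\tau}$ term is eventually dominated by $C_k\tau^{-k}$, so the polynomial rate is preserved and the final constant is no smaller than $C_k$) and \eqref{eq:cor:decay:conclusion:exp} (the competing rates $\beta$ and $b$ combine to $e^{-\min\{b,\beta\}\tau}$, and $\beta\neq b$ is excluded precisely to avoid the resonant case in which one loses an extra power of $\tau$).

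The main obstacle is the quantitative coercivity invoked in the second splitting: the bulk term in Prop.~\ref{prop:r:expanding} controls only the ``infalling'' density $\phi^{-2}(\partial_r\psi)^2$, so upgrading its monotonicity to a genuine exponential damping of the flux through $\mathrm{C}_\tau$ --- uniformly up to $\Sigma^+$, and with a rate $b$ expressible through $\Lambda$ and $m$ alone --- requires a weighted estimate along the $r$-foliation of $\mathcal{D}_\tau$ (exploiting that $T$ is Killing so the $t$-frequency content is conserved) glued to the redshift transport near $\mathcal{C}^+$ that pins down $b$. By contrast the localisation of Props.~\ref{prop:r:expanding}--\ref{prop:r:static} to $\mathcal{D}_\tau$ and the bookkeeping of the two decay rates are mechanical (cf.\ Section~\ref{sec:sds:localisation}), which is why the corollary is an ``immediate'' consequence of the method once the redshift rate has been isolated.
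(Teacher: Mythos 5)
Your two ingredients (the horizon-localised redshift with its rate cap $b$, and a localised version of the global redshift) are the right ones, and your bookkeeping of the two rates at the end matches the paper's. But your domain decomposition creates a gap that the paper's proof is specifically designed to avoid. By taking $\mathcal{D}_\tau=\mathrm{J}^+(\mathcal{C}^+_\tau\cup\mathrm{C}_\tau)$ with the \emph{full} outgoing cone $\mathrm{C}_\tau$ as past boundary, you must bound the incoming flux through all of $\mathrm{C}_\tau$, including its portion near $\Sigma^+$, and show that this flux decays in $\tau$. The mechanism you propose for that portion --- that ``each unit of $\varphi_t$-translation kills a definite proportion of this flux through the coercive bulk term of Prop.~\ref{prop:r:expanding}'' --- is not available: $\varphi_t$ is an isometry, so translation by itself damps nothing, and the bulk term $K^M$ is coercive with respect to the $r$-foliation (it buys the weight $r$ through the coarea formula in $r$), not with respect to $\tau$. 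You flag exactly this as ``the main obstacle'' and leave it unresolved; as stated, proving decay in $\tau$ of the flux through the far part of $\mathrm{C}_\tau$ is essentially equivalent to the corollary itself.

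The paper's proof (Section~\ref{sec:sds:localisation}) never estimates the far part of any cone. It works on the domain of dependence of $\mathcal{C}^+_\tau\cup\mathrm{C}^\prime_\tau$, where $\mathrm{C}^\prime_\tau=\mathrm{C}_\tau\cap J^-(\Sigma_{r_0})$ is only the short segment of the cone below $\Sigma_{r_0}$. Two things then happen. First, the flux $f(\tau)=\int_{\mathrm{C}^\prime_\tau}{}^\ast J^N$ through this short segment is controlled by the local-redshift Gronwall inequality of Prop.~\ref{prop:localise:local}, $f(\tau_2)\leq f(\tau_1)e^{-b(\tau_2-\tau_1)}+\int_{\tau_1}^{\tau_2}g(\tau)e^{-b(\tau_2-\tau)}\ud\tau$, whose convolution structure is what produces both the inheritance of the polynomial rate and the cap $\min\{b,\beta\}$ in the exponential case. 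Second, the global-redshift identity restricted to that domain of dependence acquires an extra boundary term only on a null segment of its \emph{future} boundary, and since $M$ is timelike this term has a favourable sign and is simply discarded; the monotonicity $r_2\int_{\Sigma_{r_2}^\prime}J^M\cdot n\,\dm{\gb{r_2}}\leq r_1\int_{\Sigma_{r_1}^\prime}J^M\cdot n\,\dm{\gb{r_1}}$ therefore survives the truncation with no new flux to estimate. The remaining step, which your sketch omits, is that this argument lands on $\Sigma^+_{\tau^\prime}$ rather than $\Sigma^+_\tau$, and one checks that $\tau^\prime=\tau+\kC^{-1}\log(1/h_0)$ is merely a constant shift. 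If you replace your full cone by $\mathrm{C}^\prime_\tau$, and replace the past-boundary role of the cone's upper part by this sign argument on the future boundary of the domain of dependence, your outline becomes the paper's proof.
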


\begin{rmk}\label{rmk:cor:decay}
The assumptions of Cor.~\ref{cor:decay} are known to be satisfied under higher regularity conditions on the initial data \cite{dr:sds,sd:beyond}. (The constants $C_k$, $C_\beta$, then depend on higher order norms of the initial data.)

More precisely, it is shown in \cite{dr:sds} that \eqref{eq:cor:decay:assumption} is satisfied with $k=m+1$ (for any $m\in\mathbb{N}$) for any solution to \eqref{eq:sds:wave} in the domain of dependence of a $\Sigma^+$-Cauchy hypersurface $\Sigma$ with normal $n$ (where $\tau=\tau_0$ on $\Sigma\cap\mathcal{C}^+$) provided
\begin{equation}
  \label{eq:rmk:assumption}
  \int_\Sigma \sum_{\lvert\alpha\rvert\leq m}J^n[\Omega^\alpha\psi]\cdot n\:\dm{\gb{}}<\infty\,,
\end{equation}
where $\alpha$ is a multi-index of order $\vert\alpha\rvert=\alpha_1+\alpha_2+\alpha_3$ and $\Omega^\alpha=\Omega_1^{\alpha_1}\Omega_2^{\alpha_2}\Omega_3^{\alpha_3}$.

Moreover, the work of \cite{bh:sds,mbv:sds} confirms that the exponential decay assumption \eqref{eq:cor:decay:assumption} can be satisfied for smooth initial data. Their analysis builds up on the geometric scattering theory developed earlier in \cite{sabz:resonances,mm:mero}. Subsequently, Dyatlov \cite{sd:quasi,sd:beyond} established an exponential decay estimate in the stationary region of slowly rotating Kerr de Sitter black holes. His work extended \cite{wz:resolvent} and established an estimate holding up to and including the horizon, thus replacing the polynomial decay rate on the right hand side of \eqref{eq:cor:decay:assumption} by an exponential bound \eqref{eq:cor:decay:assumption:exp} with an arbitrary small loss of regularity at the level of the initial data. See also \cite{vd:microlocal,dv:resolvent} for a treatment of more general perturbations that applies to the stationary region of subextremal Kerr de Sitter spacetimes.

\end{rmk}

\subsubsection{General expanding cosmologies}
\label{sec:r:integral:gen}

The global redshift effect that underlies Proposition~\ref{prop:r:expanding} and the local redshift effect of Proposition~\ref{prop:sds:local} are present in a large class of expanding spacetimes which are ``close'' to the Schwarzschild de Sitter geometry. 
More precisely, the global redshift effect remains the stability mechanism in the cosmological region of spacetimes which
\begin{enumerate}
\item preserve the basic global \emph{causal} picture of the Schwarzschild de Sitter solution; in particular spacetimes that possess a domain that is bounded in the future by a \emph{spacelike} hypersurface $\Sigma^+$ at infinity, and in the past by two \emph{Killing horizons} $\mathcal{C}^+\cup\bar{\mathcal{C}}^+$, and
\item settle down asymptotically to a Schwarzschild de Sitter solution; in particular the metric and the deformation tensor of a timelike vectorfield converge uniformly to their counterparts of a Schwarzschild de Sitter solution as $\Sigma^+$ is approached.
\end{enumerate}
This class contains the family of subextremal Kerr de Sitter cosmologies. Note however, that \emph{no symmetries are required}. 
A detailed characterization of the class of metrics considered here is given in Def.~\ref{def:appl:class} and an explicit discussion of Kerr de Sitter metrics in this context is given in Section~\ref{sec:general}; in particular see Prop.~\ref{prop:kerr:G}.

\begin{thm}
  \label{thm:boundedness:general}
  Let $(\M,g)$ be a spacetime satisfying the assumptions of Definition~\ref{def:appl:class} with respect to a given (subextremal) Schwarzschild de Sitter solution $(\M,g_m)$; in particular $g$ and the deformation tensor $\pi$ of a fixed timelike coordinate vectorfield converge to its counterparts $g_m$ and $\pi_m$, respectively, as the future boundary $\Sigma^+$ is approached.
Then for any solution $\psi$ to the wave equation on $(\M,g)$ with finite (nondegenerate) energy on the cosmological horizons,
\begin{equation}
  D[\psi]=\int_{\mathcal{C}^+\cup\bar{\mathcal{C}}^+}{}^\ast J^N[\psi]<\infty\,,
\end{equation}
the energy of $\psi$ is uniformly bounded in the future of $\mathcal{C}^+\cup\bar{\mathcal{C}}^+$ as stated in Prop.~\ref{prop:appl:globalredshift}, and in particular the rescaled asymptotic energy on $\Sigma^+$ is finite:
  \begin{equation}
    \label{eq:thm:boundedness:rescaled:general}
    \int_{\Sigma^+}\sqv{\nablac\psi}\dm{\cg}\leq C D[\psi]
  \end{equation}
  where $C$ is a constant that only depends on the class of perturbations considered.  
  In particular, \eqref{eq:thm:boundedness:rescaled:general} holds true for subextremal Kerr de Sitter spacetimes.
\end{thm}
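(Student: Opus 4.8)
\emph{Proof strategy.} The plan is to chain together a \emph{stable local redshift} near the cosmological horizons, the general \emph{global redshift} monotonicity of Prop.~\ref{prop:appl:globalredshift}, and a limiting argument at $\Sigma^+$, and then to record that subextremal Kerr de Sitter falls into the class of Def.~\ref{def:appl:class}. For the local step: by assumption (i) the past boundary consists of \emph{Killing} horizons, and since $(\M,g)$ is $C^1$-close to a subextremal $(\M,g_m)$ these have \emph{positive surface gravity}; hence the Dafermos--Rodnianski redshift multiplier $N$ of Prop.~\ref{prop:sds:local} can be constructed in a fixed compact slab $\mathcal{N}=\{r_0^-\le r\le r_0^+\}$ adjacent to the horizons on the perturbed spacetime, the construction being purely local and stable under $C^1$-perturbations. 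The resulting energy identity, with $J^N[\psi]\cdot n$ coercive, controls the non-degenerate energy of $\psi$ on a level set $\Sigma_{r_1}$ ($\rc<r_1<r_0^+$) by $D[\psi]=\int_{\mathcal{C}^+\cup\bar{\mathcal{C}}^+}{}^\ast J^N[\psi]$, the perturbative bulk error being harmless since it is supported in the fixed compact $\mathcal{N}$, where a Gr\"onwall estimate in $r$ costs only a bounded factor.

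On $\{r\ge r_1\}$ I would then run the energy identity for the unit normal $n$ to the $\Sigma_r$-foliation as in Prop.~\ref{prop:r:expanding}; in exact Schwarzschild de Sitter the bulk term $K^n$, schematically $\deformt{n}\cdot T[\psi]$, carries the favourable sign that is the source of monotonicity. By assumption (ii) the metric and the relevant deformation tensors converge to their Schwarzschild de Sitter counterparts as $\Sigma^+$ is approached at a rate of the schematic form $\OdO$; the perturbative contribution to $K^n$ is then of lower order relative to the monotone energy density and integrable in $r$, so that absorbing the good part and applying Gr\"onwall yields
\[
  \int_{\Sigma_r}\!\Bigl\{\tfrac{1}{\phi^2}\sq{\dd{r}{\psi}}+\phi^2\sq{\dd{t}{\psi}}+\sqv{\nablab\psi}\Bigr\}\phi\,\dm{\gb{r}}\le C\,D[\psi]\qquad(r\ge r_1)
\]
uniformly in $r$ --- precisely the content of Prop.~\ref{prop:appl:globalredshift}, on which the argument relies (and into which the previous step may be absorbed).

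To pass to $\Sigma^+$ I would use the volume-form identity $\phi\,\dm{\gb{r}}=r^2\,\dm{\cg}$ of \eqref{eq:sds:overview:volumeform} --- valid in the perturbed setting up to factors converging to $1$, by assumption (ii) --- together with the rescaling $\lvert\nablabc\psi\rvert=r\lvert\nablab\psi\rvert$: the angular and $T$-parts of the monotone energy density then converge pointwise to a positive-definite density comparable to the standard cylinder density $\sqv{\nablac\psi}$, the normal-derivative term being simply discarded (which only strengthens the bound). Since these rescaled densities are uniformly bounded in $L^1$ by $C\,D[\psi]$, Fatou's lemma (or monotone convergence along an exhaustion of $\Sigma^+$) gives $\int_{\Sigma^+}\sqv{\nablac\psi}\,\dm{\cg}\le C\,D[\psi]$, which is \eqref{eq:thm:boundedness:rescaled:general}. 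That subextremal Kerr de Sitter satisfies Def.~\ref{def:appl:class} is then the content of Section~\ref{sec:general} and Prop.~\ref{prop:kerr:G}: in the cosmological region the Boyer--Lindquist area radius is again a time function, the cosmological horizon is a Killing horizon of positive surface gravity, and the Kerr de Sitter metric together with the deformation tensor of $\partial_t$ converges to that of a Schwarzschild de Sitter solution (same $\Lambda$, suitable mass) as $r\to\infty$ at the rate required above.

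The main obstacle is the global step: one must verify that the favourable sign of the redshift bulk term $K^n$ is \emph{robust} --- that the perturbed deformation tensor of $n$ neither destroys the coercivity of the good terms nor produces an error that fails to be integrable in the time function $r$ --- since only then does the Gr\"onwall estimate return a bound \emph{uniform} in $r$, rather than one degenerating as $\Sigma^+$ is approached. This is exactly where the quantitative convergence rate encoded in Def.~\ref{def:appl:class} is used, and where the matching radius $r_1$ must be chosen large enough to lie in the ``stable'' regime yet still within the local-redshift neighbourhood $\mathcal{N}$.
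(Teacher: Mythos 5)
Your proposal follows the paper's own route: the stable local redshift at the nondegenerate Killing horizons (the Dafermos--Rodnianski construction, together with a crude Gr\"onwall/timelike-multiplier bound on the compact intermediate slab $r_-\leq r\leq r_+$), the global redshift monotonicity of Prop.~\ref{prop:appl:globalredshift} with the $\mathcal{O}(r^{-1-\delta})$ error absorbed by Gr\"onwall to give a uniform bound on the rescaled energy, the passage to $\Sigma^+$ via $\phi\,\dm{\gb{r}}=r^2\,\dm{\cg}$ and $\lvert\nablabc\psi\rvert=r\lvert\nablab\psi\rvert$, and Prop.~\ref{prop:kerr:G} for the Kerr de Sitter case. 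The only cosmetic differences are that the paper runs the identity for the weighted multiplier $M=\frac{1}{r}\partial_r$ rather than for the unit normal (the displayed monotone quantity you state is nonetheless exactly the correct rescaled one, namely $2r\int_{\Sigma_r}J^M\cdot n\,\dm{\gb{r}}$), and that the positivity of the surface gravity is an explicit hypothesis of Definition~\ref{def:appl:class} rather than a consequence of $\mathrm{C}^1$-closeness, which is only assumed asymptotically near $\Sigma^+$.
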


In the generality of spacetimes outlined above it is \emph{not} possible to establish the analogous statement of Proposition~\ref{prop:r:static} and to bound the energy on the right hand side of \eqref{eq:thm:boundedness:rescaled:general} by the energy on a spacelike hypersurface $\Sigma$ in the past of $\mathcal{C}_0^+$; c.f.~figure~\ref{fig:not:redshift:props}. The reason is that general perturbations of the static region $\mathcal{S}$ in Schwarzschild de Sitter contain an \emph{ergoregion} which gives way to the phenomenon of \emph{superradiance}; see \cite{dr:c} for further discussion. At present such a boundedness result is only available \emph{under the imposition of additional symmetries}: In \cite{dr:kerr:bounded} the problem of superradiance is overcome in the context of \emph{stationary} and \emph{axisymmetric} black hole spacetimes, and Dafermos and Rodnianski prove in particular a uniform bound for the nondegenerate energy on the event horizon. Their argument can be adapted to stationary axisymmetric perturbations of Schwarzschild \emph{de Sitter} black holes as noted in the concluding remarks of \cite{dr:kerr:bounded}. In this paper however we shall focus on the behaviour of linear waves in the expanding region which is tractable without a frequency decomposition even in the absence of symmetries.

Let us mention finally that the localisation result Cor.~\ref{cor:decay} applies verbatim to the class of spacetimes discussed above. In other words, whenever energy decay results are available along the cosmological horizons then these rates are inherited along the future boundary. Specifically for slowly rotating Kerr de Sitter spacetimes exponential decay rates were obtained by Dyatlov in \cite{sd:beyond}; see also \cite{vd:microlocal}.

\subsection{Pointwise estimates} 
The foliation $(\Sigma_r)$ is not only a natural choice for the energy estimates in this paper but also particularly convenient from the point of view of pointwise estimates. 
\subsubsection{Schwarzschild de Sitter spacetimes}
Note that in the spherically symmetric setting the tangent space to $\Sigma_r\ (r>\rc)$, and $\Sigma^+$ is spanned by Killing vectorfields, namely $T$ and $\Oi{i}:i=1,\ldots,3$. Since the commutations with these vectorfields are controlled in $\mathrm{L}^2$ by Thm.~\ref{thm:sds} we can immediately pass to pointwise estimates using the classical Sobolev embedding. Indeed, we can alternatively write the result \eqref{eq:thm:sds:energy} using the coercivity equality on the sphere (see Appendix \ref{a:coercivity}) as
\begin{equation}
  \int_{\Sigma^+}\Bigl\{\sq{T\cdot\psi}+\sum_{i=1}^3\sq{\Oi{i}\psi}\Bigr\}\dm{\cg}\leq C\,D[\psi]\,.
\end{equation}

\begin{cor}\label{thm:sds:pointwise}
Let $\Sigma$ be a $\Sigma^+$-Cauchy hypersurface as in Thm.~\ref{thm:sds}, and let $\psi$ be a solution to \eqref{eq:sds:wave} which in addition to \eqref{eq:thm:sds:condition} satisfies on $\Sigma$ the higher order energy condition
\begin{multline}
  D_c[\psi]=D[\psi]+\sum_{i=1}^3 D[\Oi{i}\psi]+\sum_{i,j=1}^3 D[\Oi{i}\Oi{j}\psi]\\
  +D[T\psi]+\sum_{i=1}^3 D[\Oi{i}T\psi]+\sum_{i,j=1}^3 D[\Oi{i}\Oi{j}T\psi]<\infty\,.
\end{multline}
Then we have the pointwise bound
\begin{equation}
  \label{eq:thm:pointwise:psi}
  \sup_{\Sigma^+}\lvert\psi\rvert\leq\sup_\Sigma\lvert\psi\rvert+C(\Lambda,m,\Sigma)\sqrt{D_c[\psi]}
\end{equation}
on the future boundary $\Sigma^+$, and the pointwise estimates
\begin{equation}
  \label{eq:thm:sds:pointwise}
  \sup_{\Sigma^+}\biggl\lvert\sq{\dd{t}{\psi}}+\sqv{\nablabc\psi}\biggr\rvert
  \leq C(\Lambda,m,\Sigma)\:D_c[\psi]\,,
\end{equation}
where $C$ is a constant that only depends on $\Sigma$, $\Lambda$, and $m$.
\end{cor}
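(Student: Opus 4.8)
The plan is to reduce both pointwise bounds to the $\mathrm{L}^2$-estimate of Theorem~\ref{thm:sds} by commuting the wave equation \eqref{eq:sds:wave} with the tangential Killing fields and then invoking a Sobolev inequality on the cylinder. Since $T$ and the rotations $\Oi{1},\Oi{2},\Oi{3}$ are Killing they commute with $\Box_g$, so each of $\psi$, $\Oi{i}\psi$, $\Oi{i}\Oi{j}\psi$, $T\psi$, $\Oi{i}T\psi$, $\Oi{i}\Oi{j}T\psi$ is again a solution of \eqref{eq:sds:wave}; moreover the finiteness of $D_c[\psi]$ is precisely the finite-energy condition \eqref{eq:thm:sds:condition} for each of these commuted functions (the $\mathrm{H}^1$-regularity on $\Sigma$ being propagated from $\psi$), so Theorem~\ref{thm:sds} applies to every one of them. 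Rewriting its conclusion \eqref{eq:thm:sds:energy} with the help of the coercivity identity on the sphere of Appendix~\ref{a:coercivity}, which along $\Sigma^+$ replaces $\sqv{\nablabc f}$ by $\sum_i\sq{\Oi{i}f}$, and summing over the commuted solutions, one obtains a bound controlled by $D_c[\psi]$ for the $\mathrm{L}^2(\Sigma^+,\cg)$-norms of $T\psi$ and of each $\Oi{i}\psi$ together with all the derivatives of these obtained by applying at most one further $T$ and at most two further $\Oi{j}$. (Each such derivative is, modulo lower-order terms involving the same number of fields, the tangential cylinder gradient $\nablac$ of one of the solutions $\Oi{i}\Oi{j}\psi$ or $\Oi{i}\Oi{j}T\psi$ whose energy appears in $D_c$; this matching is exactly what dictates the combination of multipliers in the definition of $D_c$.)

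For \eqref{eq:thm:sds:pointwise} it then suffices to apply the anisotropic Sobolev inequality on the three-manifold $\mathbb{R}\times\mathbb{S}^2$,
\begin{equation*}
  \sup_{\mathbb{R}\times\mathbb{S}^2}\lvert f\rvert\leq C\sum_{k\leq 1}\sum_{\lvert\beta\rvert\leq 2}\bigl\lVert T^{k}(\nablabc)^{\beta}f\bigr\rVert_{\mathrm{L}^2(\mathbb{R}\times\mathbb{S}^2,\,\cg)}\,,
\end{equation*}
obtained by combining the one-dimensional embedding $\mathrm{H}^1(\mathbb{R})\hookrightarrow\mathrm{L}^\infty(\mathbb{R})$ in the $t$-variable with $\mathrm{H}^2(\mathbb{S}^2)\hookrightarrow\mathrm{L}^\infty(\mathbb{S}^2)$ in the angular variables, to $f=T\psi$ and to $f=\Oi{i}\psi$; together with the coercivity identity this gives \eqref{eq:thm:sds:pointwise}.

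The bound \eqref{eq:thm:pointwise:psi} for $\psi$ itself needs one more ingredient, because Theorem~\ref{thm:sds} controls $\psi\rvert_{\Sigma^+}$ only in the \emph{homogeneous} space $\stackrel{\circ}{\mathrm{H}}\!\!{}^1$, so the value of $\psi$ must be anchored on $\Sigma$. I would transport $\psi$ from $\Sigma$ to $\Sigma^+$ along the integral curves of the timelike vectorfield $\dr=\phi\,n$ of \eqref{eq:sds:dr}, which foliate $\mathcal{R}^+$: writing $\psi\rvert_{\Sigma^+}=\psi\rvert_{\Sigma}+\int\partial_r\psi\,\ud r$ along such a curve, and estimating the contribution over $\mathcal{R}^+$ by Cauchy--Schwarz,
\begin{equation*}
  \int_{r_1}^{\infty}\lvert\partial_r\psi\rvert\,\ud r\leq\Bigl(\int_{r_1}^{\infty}\tfrac{r^2}{\phi^2}\lvert\partial_r\psi\rvert^2\,\ud r\Bigr)^{1/2}\Bigl(\int_{r_1}^{\infty}\tfrac{\phi^2}{r^2}\,\ud r\Bigr)^{1/2}\,,
\end{equation*}
where the second factor converges because $\phi\sim\chin\,r^{-1}$ as $r\to\infty$ by \eqref{eq:sds:lapse}; for the first factor one combines the monotonicity of Proposition~\ref{prop:r:expanding} applied to the commuted solutions $\Oi{i}\Oi{j}T^k\psi$ (whose energies on a leaf $\Sigma_{r_1}$ near $\mathcal{C}^+$ are finite by $D_c[\psi]<\infty$ together with Propositions~\ref{prop:sds:local} and \ref{prop:r:static}) with a Sobolev estimate on each leaf $\Sigma_r$ --- uniform in $r$ since the relation $\phi\,\dm{\gb{r}}=r^2\,\dm{\cg}$ exhibits $\Sigma_r$ as conformal to the \emph{fixed} cylinder --- to produce a pointwise bound $\sup_{\Sigma_r}\tfrac{r^2}{\phi^2}\lvert\partial_r\psi\rvert^2\leq C\,D_c[\psi]$ that is independent of $r$. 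The short final segment of the curve from $\Sigma_{r_1}$ down through the near-horizon region of Proposition~\ref{prop:sds:local} onto $\Sigma$ is absorbed into $\sup_\Sigma\lvert\psi\rvert+C\sqrt{D_c[\psi]}$ by the same device, using the redshift vectorfield $N$ and Proposition~\ref{prop:r:static}; this gives \eqref{eq:thm:pointwise:psi}.

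The hard part is this last transport estimate: Proposition~\ref{prop:r:expanding} only supplies a bound that is \emph{integrated over each leaf} $\Sigma_r$, whereas one needs $\int_{r_1}^{\infty}\lvert\partial_r\psi\rvert\,\ud r$ to converge \emph{pointwise} in $(t,\omega)$; bridging this gap is what forces the commutation with the tangential Killing fields and the leaf-by-leaf Sobolev estimate with an $r$-uniform constant, and it is precisely here that the conformal normalisation $\phi\,\dm{\gb{r}}=r^2\,\dm{\cg}$ and the exact $r\to\infty$ asymptotics of the lapse $\phi$ enter, to make the radial integral converge. A secondary point demanding care is the behaviour of the $\dr$-integral curves as $r\to\rc^+$ and the verification that each of them indeed reaches the $\Sigma^+$-Cauchy hypersurface $\Sigma$.
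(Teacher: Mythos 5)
Your argument follows essentially the same route as the paper: commute with the Killing fields $T$, $\Oi{i}$ so that the commuted functions remain solutions with energies controlled by $D_c[\psi]$, apply a cylinder-type Sobolev inequality (the paper states it leaf-by-leaf on $\Sigma_r$ as Prop.~\ref{prop:sds:sobolev}, applied to the weighted functions $\phi^{3/2}T\psi$, $\phi^{1/2}r^{-1}\Oi{i}\psi$ and $\phi^{-1/2}\partial_r\psi$ so that the right-hand sides are exactly the monotone energies of Prop.~\ref{prop:r:expanding}; you apply it in the limit on $\Sigma^+$ — the two are equivalent), and recover $\psi$ itself by integrating $\partial_r\psi$ along $t$-constant lines from $\Sigma$. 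Your uniform pointwise bound $\sup_{\Sigma_r}\frac{r^2}{\phi^2}\bigl(\dd{r}{\psi}\bigr)^2\leq C\,D_c[\psi]$ is precisely the paper's.

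There is, however, one step that does not close as written: the Cauchy--Schwarz splitting
\begin{equation*}
  \int_{r_1}^{\infty}\Bigl\lvert\dd{r}{\psi}\Bigr\rvert\,\ud r\leq\Bigl(\int_{r_1}^{\infty}\tfrac{r^2}{\phi^2}\Bigl\lvert\dd{r}{\psi}\Bigr\rvert^2\,\ud r\Bigr)^{1/2}\Bigl(\int_{r_1}^{\infty}\tfrac{\phi^2}{r^2}\,\ud r\Bigr)^{1/2}
\end{equation*}
cannot be estimated by your uniform bound, since inserting $\frac{r^2}{\phi^2}\lvert\partial_r\psi\rvert^2\leq C\,D_c[\psi]$ into the first factor yields $\int_{r_1}^\infty C\,D_c\,\ud r=\infty$; nor does Prop.~\ref{prop:r:expanding} control that radial line integral, since the monotone quantity is a surface integral over each leaf. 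The remedy is simply to drop Cauchy--Schwarz: the uniform pointwise bound already gives $\lvert\partial_r\psi\rvert\leq C\sqrt{D_c[\psi]}\,\phi/r=\mathcal{O}(r^{-2})\sqrt{D_c[\psi]}$, which is integrable in $r$ and yields \eqref{eq:thm:pointwise:psi} directly upon integration along the $t$-constant lines — this is exactly what the paper does. With that one display replaced, your proof is correct and coincides in substance with the paper's.
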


\begin{rmk}
If the higher order energies associated to the solutions $\Oi{i}\psi$, $\Oi{i}\Oi{j}\psi$, and $T\psi$, $\Oi{i}T\psi$, $\Oi{i}\Oi{j}T\psi$ are assumed to decay along the cosmological horizon, similarly to Cor.~\ref{cor:decay}, with either a polynomial rate according to \eqref{eq:cor:decay:assumption} or exponentially according to \eqref{eq:cor:decay:assumption:exp}, then so does the quantity
\begin{equation}
  \label{eq:rmk:pointwise}
  \mathrm{PE}[\psi](\tau)\doteq\sup_{\Sigma^+_\tau}\biggl\lvert\sq{\dd{t}{\psi}}+\sqv{\nablabc\psi}\biggr\rvert\,,
\end{equation}
along $\Sigma^+$, according to \eqref{eq:cor:decay:conclusion} or \eqref{eq:cor:decay:conclusion:exp} respectively.
Moreover, pointwise decay for $\psi$ to a constant follows from \eqref{eq:rmk:pointwise} by integration on $\Sigma^+_\tau$.
\end{rmk}

\subsubsection{General expanding cosmologies} 
In the more general setting the tangent space to $\Sigma_r$ is spanned by approximate Killing vectorfields, namely vectorfields $T$, $\Oi{i}:i=1,2,3$ that asymptotically generate isometries at $\Sigma^+$. Under suitable assumptions on the rate of decay of the corresponding deformation tensors, the error terms generated by the commutations with $T$, and $\Oi{i}$ can be controlled by the global redshift effect. For brevity we shall forego the details of the analysis that yields analogous pointwise decay estimates to Cor.~\ref{thm:sds:pointwise} for more general expanding cosmologies.

Our result states in particular that solutions to the linear wave equation near- Schwarzschild de Sitter cosmologies have a limit on the future boundary that can be viewed as a function on the standard cylinder $\mathbb{R}\times\mathbb{S}^2$, and moreover, that the rescaled induced derivatives have a limit or equivalently that the geometrically induced derivatives of these solutions decay to the bounded derivatives of the limiting function. We have seen that as a consequence of the global redshift effect quantitative decay that is present along the horizon translates into a quantitative decay along the future boundary.

\section[Schwarzschild de Sitter spacetime]{Global geometry of the Schwarzschild de Sitter spacetime}

The Schwarzschild de Sitter spacetimes $(\M,g)$ are a family of $3+1$-dimensional Lorentzian manifolds discovered independently by \cite{kottler:sds} and \cite{weyl:sds}, which in the range $\Lambda>0$, $0<3m<1/\sqrt{\Lambda}$ constitute our simplest model for a black hole in an expanding universe; ($m=0$ is de Sitter spacetime, our simplest cosmological model of an expanding universe \cite{nb:discovering}). They are unique in the sense of Birkhoff \cite{h:uniqueness,dr:c} as spherically symmetric solutions to the vacuum Einstein equations
\begin{equation}
  \label{eq:Einstein:c}
  R_{\mu\nu}-\frac{1}{2}g_{\mu\nu}R+\Lambda g_{\mu\nu}=0
\end{equation}
with positive cosmological constant $\Lambda>0$. 
While their global causal geometry is well known since \cite{gh:ceh,kr:effects}, a few remarks are in order concerning the coordinates used in this paper.

\subsection{Cosmological spacetimes under spherical symmetry.} 
\label{sec:geometry:spherical:lambda}
In spherical symmetry the topology of the spacetime manifold is necessarily that of $\mathcal{Q}\times\mathrm{SO}(3)$ where $\mathcal{Q}$ is a $1+1$-dimensional Lorentzian manifold. In a double null foliation of $\mathcal{Q}$ by the level sets of functions $u,v:\mathcal{Q}\to\mathbb{R}$ the metric $g$ assumes the form 
\begin{equation}
  \label{eq:metricsph}
  g=-\Omega^2\:\ud u\,\ud v+r^2\,\g\,,
\end{equation}
where $\g$ is the standard metric of the unit sphere $\mathbb{S}^2$, and we can think of \eqref{eq:Einstein:c} as partial differential equations for the area radius $r$ and the conformal factor $\Omega$ as functions of $u$, and $v$. Indeed, it is easily deduced from \eqref{eq:Einstein:c} that the area radius $r$ satisfies the Hessian equations\begin{subequations}
\label{eq:hessiannull}
\begin{gather}
\frac{\partial^2 r}{\partial u^2}-\frac{2}{\Omega}\frac{\partial\Omega}{\partial u}\frac{\partial r}{\partial u}=0\\
\frac{\partial^2 r}{\partial u\,\partial v}+\frac{1}{r}\frac{\partial r}{\partial u}\frac{\partial r}{\partial v}=-\frac{\Omega^2}{4r}+\frac{\Omega^2}{4}\Lambda r\\
\frac{\partial^2 r}{\partial v^2}-\frac{2}{\Omega}\frac{\partial\Omega}{\partial v}\frac{\partial r}{\partial v}=0\,.
\end{gather}
\end{subequations}
We observe that the mass function $m:\mathcal{Q}\to\mathbb{R}$ implicitly defined by (and motivated in \cite{ch:fluids:sym})
\begin{equation}
  \label{eq:massfunction}
  1-\frac{2m}{r}-\frac{\Lambda r^2}{3}=-\frac{4}{\Omega^2}\dd{u}{r}\dd{v}{r}\,,
\end{equation}
is constant by virtue of \eqref{eq:hessiannull}, and precisely the quantity that parametrizes the Schwarzschild de Sitter family for any fixed $\Lambda>0$.
It is useful to introduce the ``tortoise coordinate''
\begin{equation}
  \label{eq:sds:tortoise}
  r^\ast=\int\frac{1}{1-\frac{2m}{r}-\frac{\Lambda r^2}{3}}\ud r\,,
\end{equation}
which satisfies again as a consequence of \eqref{eq:hessiannull} the simple partial differential equation
\begin{equation}
  \label{eq:pde:rs}
  \frac{\partial^2 r^\ast}{\partial u\,\partial v}=0\,.
\end{equation}
In general, the dependence of the area radius $r$ on the null coordinates $u$, $v$, is thus of the form
\begin{equation}
  \label{eq:sds:rsol:gen}
  r^\ast=f_1(u)+f_2(v)\,,
\end{equation}
where we are free to choose the real valued functions $f_1$, and $f_2$.
We may think of the different charts on $(\M,g)$ to be obtained with suitable choices of the functions $f_1$, $f_2$, and of the constants for the indefinite integral \eqref{eq:sds:tortoise}.
While in the case $m=0$ the manifold can be covered with a single chart, this is not possible for $m>0$. Here we discuss a chart that covers the cosmological regions and the adjacent static domains.

\subsection{Double null coordinates for the cosmological and static regions}
\label{sec:sds:dn}

The horizons of Schwarzschild de Sitter are null hypersurfaces consisting of spheres of constant area radius, and thus null lines on $\mathcal{Q}$ on which the right hand side of \eqref{eq:massfunction} vanishes. The polynomial on the left hand side of \eqref{eq:massfunction} has three distinct real roots $\rcb$, $\rH$, and $\rc$ provided
\begin{equation}
  \label{eq:sds:parconstraint}
  0<3m<\frac{1}{\sqrt{\Lambda}}\,,
\end{equation}
satisfying (explicit expressions may be found in \cite{kr:effects}):
\begin{equation}
  \label{eq:sds:roots:location}
  \rcb<0<2m<\rH<3m<\rc\,.
\end{equation}
We note
\begin{equation}
  \label{eq:sds:polyfactors}
  r-2m-\frac{\Lambda r^3}{3}=-\frac{\Lambda}{3}(r-\rH)(r-\rc)(r+\lvert\rcb\rvert)\,,
\end{equation}
with
\begin{subequations}
\label{eq:sds:roots:rel}
\begin{gather}
  -\vrcb+\rc+\rH=0\\
  \rc\vrcb+\rH\vrcb-\rH\rc=\frac{3}{\Lambda}\\
  \rH\rc\vrcb=\frac{6m}{\Lambda}\,,
\end{gather}
\end{subequations}
and by decomposition into partial fractions
\begin{multline}
  \label{eq:sds:partialfractions}
  \frac{1}{1-\frac{2m}{r}-\frac{\Lambda r^2}{3}}=\frac{3}{\Lambda}\frac{\rH}{(\rc-\rH)(\rH+\vrcb)}\frac{1}{r-\rH}\\
  -\frac{3}{\Lambda}\frac{\rc}{(\rc-\rH)(\rc+\vrcb)}\frac{1}{r-\rc}\\+\frac{3}{\Lambda}\frac{\vrcb}{(\vrcb+\rH)(\vrcb+\rc)}\frac{1}{r+\vrcb}\,.
\end{multline}
The chart covering the cosmological horizons $r=\rc$ is now obtained by choosing \eqref{eq:sds:tortoise} to be centred at $r=3m$, 
\begin{equation}
  \label{eq:tortoise:centred}
  r^\ast(r)=\int_{3m}^r\frac{1}{1-\frac{2m}{r}-\frac{\Lambda r^2}{3}}\ud r\,,
\end{equation}
and setting 
\begin{subequations}
\label{eq:sds:rsol}
\begin{equation}
  f_1(x)=f_2(x)=-\frac{3}{\Lambda}\frac{\rc}{(\rc-\rH)(\rc+\vrcb)}\log\frac{\lvert x\rvert}{A}
\end{equation}
\end{subequations}
where $A$ is the constant
\begin{equation}
  A^2=(\rc-3m)(3m-\rH)^{-\frac{\rH}{\rc}\frac{\rc+\vrcb}{\rH+\vrcb}}(3m+\vrcb)^{-\frac{\vrcb}{\rc}\frac{\rc-\rH}{\vrcb+\rH}}\,.
\end{equation}
Indeed, the integration of \eqref{eq:tortoise:centred} using \eqref{eq:sds:partialfractions} yields in view of \eqref{eq:sds:rsol} the following relation between the chosen null coordinates $(u,v)$ and the radius function $r$:
\begin{equation}
  \label{eq:sds:uvr}
  uv=\frac{r-\rc}{(r-\rH)^{\frac{\rH}{\rc}\frac{\rc+\vrcb}{\rH+\vrcb}}(r+\vrcb)^{\frac{\vrcb}{\rc}\frac{\rc-\rH}{\vrcb+\rH}}}\,.
\end{equation}
We recover precisely the picture of figure \ref{fig:dn} where the cosmological horizons $r=\rc$ are located at $u=0$ and $v=0$, while the surfaces of constant $r\neq\rc$ are spacelike hyperbolas in the $uv$-plane for $r>\rc$ and timelike hyperbolas for $r<\rc$. Moreover the future timelike boundary $r=\infty$ is identified with the spacelike hyperbola $uv=1$.

Since
\begin{subequations}
  \begin{gather}
    \dd{u}{r^\ast}=\frac{1}{1-\frac{2m}{r}-\frac{\Lambda r^2}{3}}\dd{u}{r}=-\frac{3}{\Lambda}\frac{\rc}{(\rc-\rH)(\rc+\vrcb)}\frac{1}{u}\\
    \dd{v}{r^\ast}=\frac{1}{1-\frac{2m}{r}-\frac{\Lambda r^2}{3}}\dd{v}{r}=-\frac{3}{\Lambda}\frac{\rc}{(\rc-\rH)(\rc+\vrcb)}\frac{1}{v}    
  \end{gather}
\end{subequations}
we can solve \eqref{eq:massfunction} for $\Omega^2$ to obtain:
\begin{multline}
  \label{eq:sds:geometry:omega}
  \Omega^2=\frac{4}{r}\frac{3}{\Lambda}\frac{\rc^2}{(\rc-\rH)^2(\rc+\vrcb)^2}\times\\
  \times\bigl(r-\rH\bigr)^{1+\frac{\rH}{\rc}\frac{\rc+\vrcb}{\rH+\vrcb}}\bigl(r+\vrcb\bigr)^{1+\frac{\vrcb}{\rc}\frac{\rc-\rH}{\vrcb+\rH}}
\end{multline}
The (nondegenerate) metric $g$ on the chart that covers the region $\rH<r<\infty$ and extends across the cosmological horizon $r=\rc$ thus takes in double null coordinates $(u,v)$ the form
\begin{multline}
  \label{eq:sds:metric:doublenull}
  g=-\frac{4}{r}\frac{3}{\Lambda}\frac{\rc^2}{(\rc-\rH)^2(\rc+\vrcb)^2}\times\\
  \times\bigl(r-\rH\bigr)^{1+\frac{\rH}{\rc}\frac{\rc+\vrcb}{\rH+\vrcb}}\bigl(r+\vrcb\bigr)^{1+\frac{\vrcb}{\rc}\frac{\rc-\rH}{\vrcb+\rH}}\ud u\ud v+r^2\g
\end{multline}
where $r$ is a function of $(u,v)$ implicity given by \eqref{eq:sds:uvr}.

\subsection{Constant area radius foliation of the expanding region}
\label{sec:sds:foliation}

We have seen that the expanding region
\begin{equation}
  \label{eq:sds:geometry:expanding}
  \mathcal{R}^+=\Bigl\{(u,v):0<uv<1\Bigr\}
\end{equation}
can be foliated in a geometrically natural way by spacelike hypersurfaces $\Sigma_r$ of constant area radius $r$,
\begin{equation}
  \label{eq:sds:geometry:foliate}
  \mathcal{R}^+=\bigcup_{\rc<r<\infty}^\infty\Sigma_r\,.
\end{equation}
Here the area radius serves as a time function on $\mathcal{R}^+\subset\M$ ($r$ is strictly increasing along any future-directed timelike curve in $\mathcal{R}^+$).
We obtain the decomposition of the metric $g$ with respect to the foliation \eqref{eq:sds:geometry:foliate} with the help of the gradient vectorfield
\begin{equation}
  \label{eq:sds:r:grad}
  V^\mu=-g^{\mu\nu}\partial_\nu r\,.
\end{equation}
Indeed, since 
\begin{equation}
  \label{eq:sds:V}
  V=\frac{1}{2}\frac{\Lambda}{3}\frac{(\rc-\rH)(\rc+\vrcb)}{\rc}\Bigl(u\pd{u}+v\pd{v}\Bigr)\,,
\end{equation}
we find for the lapse function
\begin{equation}
  \label{eq:sds:lapse}
  \phi\doteq\frac{1}{\sqrt{-g(V,V)}}=\frac{1}{\sqrt{\frac{\Lambda r^2}{3}+\frac{2m}{r}-1}}
\end{equation}
and the metric takes the form \eqref{eq:sds:metric:levelsets}, namely
\begin{equation}
  g=-\phi^2\ud r^2+\gb{r}\,.
\end{equation}
Moreover, we have \eqref{eq:sds:dr}.
The explicit form of the induced metric $\gb{r}$ on $\Sigma_r$ is readily inferred from \eqref{eq:sds:metric:doublenull} upon the introduction of the coordinate\begin{equation}
  \label{eq:sds:t}
  t=-\frac{3}{\Lambda}\frac{\rc}{(\rc+\vrcb)(\rc-\rH)}\log\lvert\frac{v}{u}\rvert\,.
\end{equation}
We can then read off \eqref{eq:sds:inducedmetric} from the explicit expression for the metric $g$ in $(t,r)$ coordinates, which reads
\begin{equation}
  \label{eq:sds:metric:tr}
  g=-\frac{1}{\frac{\Lambda r^2}{3}+\frac{2m}{r}-1}\ud r^2+\Bigl(\frac{\Lambda r^2}{3}+\frac{2m}{r}-1\Bigr)\ud t^2+r^2\g\,.
\end{equation}
Note here that it follows from \eqref{eq:sds:uvr} and \eqref{eq:sds:t} that
\begin{equation}
  \label{eq:sds:dt}
  \pd{t}=\frac{1}{2}\frac{\Lambda}{3}\frac{(\rc+\vrcb)(\rc-\rH)}{\rc}\Bigl(u\pd{u}-v\pd{v}\Bigr)\,,
\end{equation}
and (which we have already made use of in \eqref{eq:sds:V})
\begin{subequations}
  \label{eq:sds:geometry:duvr}
  \begin{gather}
    \dd{u}{r}=\frac{1}{4}\frac{\Lambda}{3}\frac{(\rc+\vrcb)(\rc-\rH)}{\rc}\Omega^2\,v\\
    \dd{v}{r}=\frac{1}{4}\frac{\Lambda}{3}\frac{(\rc+\vrcb)(\rc-\rH)}{\rc}\Omega^2\,u\,.
  \end{gather}
\end{subequations}

\subsection{Surface gravity of the cosmological horizon}
\label{sec:geometry:horizon}
In due course the positivity of the surface gravity of the cosmological horizons will be of crucial importance. While this fact in itself is well-known \cite{gh:ceh}, we take its statement as an opportunity to introduce a vectorfield $Y$ which is the starting point for all local redshift multiplier constructions (see \cite{dr:c}).

The observation is that the vectorfield \eqref{eq:sds:dt} extends to a global vectorfield that characterizes the cosmological horizon as a Killing horizon with positive surface gravity.
\begin{lemma}[Positive surface gravity of the cosmological horizons]
  \label{lemma:sds:surfacegravity}
  The vectorfield
  \begin{equation}
    \label{eq:sds:T}
    T\doteq\kC\Bigl(u\pd{u}-v\pd{v}\Bigr)
  \end{equation}
  is globally Killing, i.e.~
  \begin{equation}
    \label{eq:sds:T:Killing}
    \deformt{T}\doteq\frac{1}{2}\mathcal{L}_Tg=0\,,
  \end{equation}
 and satisfies
 \begin{equation}
   \label{eq:sds:T:sg}
   \nabla_TT=\kC T\qquad\text{on }\mathcal{C}^+\,,
 \end{equation}
 where
 \begin{equation}
   \label{eq:sds:kC}
   \kC=\frac{1}{2}\frac{\Lambda}{3}\frac{(\rc-\rH)(\rc+\vrcb)}{\rc}>0
 \end{equation}
 is the surface gravity of the cosmological horizons.
\end{lemma}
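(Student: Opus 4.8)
The plan is to verify the three assertions \eqref{eq:sds:T:Killing}, \eqref{eq:sds:T:sg}, \eqref{eq:sds:kC} essentially by direct computation in the double null chart, exploiting the explicit formulae already assembled in Sections~\ref{sec:sds:dn}--\ref{sec:sds:foliation}. First I would observe that comparing \eqref{eq:sds:T} with \eqref{eq:sds:dt} shows $T=\partial/\partial t$ in the expanding region, where $t$ is the coordinate from \eqref{eq:sds:t}; since $t$ and $r$ are defined on all of the chart covering $\{\rH<r<\infty\}$ (not only on $\mathcal{R}^+$), and the metric in $(t,r)$ coordinates \eqref{eq:sds:metric:tr} is manifestly $t$-independent, the vectorfield $T=\partial_t$ is Killing wherever this chart is regular, in particular across the cosmological horizon $r=\rc$. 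Written in $(u,v)$ coordinates this is exactly the globally regular vectorfield \eqref{eq:sds:T}, whose coefficients $\kC u$, $-\kC v$ are smooth everywhere; hence $\deformt{T}=\tfrac12\mathcal{L}_Tg=0$ globally, which is \eqref{eq:sds:T:Killing}. One subtlety worth recording is that the $(t,r)$ chart degenerates precisely at $\mathcal{C}^+$, so the cleanest argument is to compute $\mathcal{L}_Tg$ directly from \eqref{eq:sds:metric:doublenull}: since $r=r(uv)$ depends only on the product $uv$, and $T(uv) = \kC(u\partial_u - v\partial_v)(uv)=0$, the function $r$ (hence $\Omega^2$, hence $g_{uv}$) is annihilated by $T$, and $\mathcal{L}_T(\ud u\,\ud v)=\ud(\kC u)\,\ud v + \ud u\,\ud(-\kC v)=\kC\,\ud u\,\ud v-\kC\,\ud u\,\ud v=0$, while $\mathcal{L}_T(r^2\g)=0$ as well. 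This establishes \eqref{eq:sds:T:Killing} without any reference to the degenerate chart.

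Next, for the surface gravity relation \eqref{eq:sds:T:sg}, I would restrict attention to $\mathcal{C}^+=\{v=0,\ u\geq 0\}$. On this null hypersurface the metric \eqref{eq:sds:metric:doublenull} has $g_{uv}=-\tfrac12\Omega^2$ with $\Omega^2$ evaluated at $r=\rc$, a nonzero constant (call it $\Omega_{\mathcal C}^2$), and $T = \kC u\,\partial_u$ there. The computation of $\nabla_T T$ at $v=0$ reduces to the Christoffel symbols of \eqref{eq:sds:metric:doublenull}; the only ones that survive when contracted with $T\otimes T=\kC^2u^2\,\partial_u\otimes\partial_u$ are $\Gamma^u_{uu}$ and $\Gamma^v_{uu}$. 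For a metric of the form $-\tfrac12\Omega^2\,\ud u\,\ud v+r^2\g$ one has $\Gamma^u_{uu}=\partial_u\log\Omega^2$ and $\Gamma^v_{uu}=0$ (the latter because $g_{uu}=0$ identically, so $\partial_u g_{uu}=0$ and the $r^2\g$ block does not contribute to $\Gamma^v_{uu}$). Hence $\nabla_T T = \kC^2 u^2\big(\partial_u\log\Omega^2\big)\partial_u + \kC\,(T u)\,\kC\,\partial_u$, and on $v=0$, where $r\equiv\rc$ and thus $\Omega^2\equiv\Omega_{\mathcal C}^2$ is constant along the generator, the first term vanishes while $Tu=\kC u$, giving $\nabla_TT=\kC\cdot\kC u\,\partial_u=\kC T$ on $\mathcal{C}^+$. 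This is \eqref{eq:sds:T:sg}. (An equivalent route, perhaps cleaner for the writeup, is to recall that for a Killing field the surface gravity is read off from $\nabla^\mu(T_\nu T^\nu)=-2\kappa\,T^\mu$ on the horizon, and to evaluate $T_\nu T^\nu=g(T,T)$ using \eqref{eq:sds:metric:tr}, namely $g(T,T)=\tfrac{\Lambda r^2}{3}+\tfrac{2m}{r}-1$, whose $r$-derivative at $r=\rc$ combined with \eqref{eq:sds:V}, \eqref{eq:sds:dt} reproduces \eqref{eq:sds:kC}.)

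Finally, the explicit value \eqref{eq:sds:kC} of $\kC$ must be shown to be positive and to equal the claimed expression. Positivity is immediate from \eqref{eq:sds:roots:location}: $\Lambda>0$, $\rc>0$, $\rc-\rH>0$, and $\rc+\vrcb>0$. The identification of this number as the surface gravity is already forced by the normalization chosen in \eqref{eq:sds:t}--\eqref{eq:sds:dt}: the coefficient $\kC=\tfrac12\tfrac{\Lambda}{3}\tfrac{(\rc-\rH)(\rc+\vrcb)}{\rc}$ is precisely the constant appearing in front of $(u\partial_u-v\partial_v)$ in \eqref{eq:sds:dt}, so once \eqref{eq:sds:T:sg} is established in the form $\nabla_TT=\kC T$ with this $T$, the scalar $\kC$ is by definition the surface gravity. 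As a consistency check I would verify that the same $\kC$ arises from the standard formula $\kappa^2=-\tfrac12(\nabla^\mu T^\nu)(\nabla_\mu T_\nu)$ evaluated on $\mathcal{C}^+$ using \eqref{eq:sds:metric:doublenull}, or equivalently from $\kappa=\tfrac12\big|\tfrac{\ud}{\ud r}g(T,T)\big|_{r=\rc}\cdot\phi(\rc)^{?}$ — more precisely from $\kappa = \tfrac12|{(1-\tfrac{2m}{r}-\tfrac{\Lambda r^2}{3})}'|_{r=\rc}$ rescaled by the Killing normalization; using \eqref{eq:sds:polyfactors} one computes $\tfrac{\ud}{\ud r}\big[-\tfrac{\Lambda}{3}(r-\rH)(r-\rc)(r+\vrcb)\big]_{r=\rc} = -\tfrac{\Lambda}{3}(\rc-\rH)(\rc+\vrcb)$, whose absolute value over $2\rc$ — the factor $2$ coming from the double-null normalization of $T$ built into \eqref{eq:sds:dt} — gives exactly \eqref{eq:sds:kC}.

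\textbf{Main obstacle.} The only genuinely delicate point is bookkeeping, not mathematics: one must be careful that the $(t,r)$ chart degenerates exactly on $\mathcal{C}^+$, so the Killing property and the surface gravity relation should be argued in the regular $(u,v)$ chart (or via the coordinate-free Killing identities), and one must keep track of the normalization factor implicit in the choice \eqref{eq:sds:t} of $t$ — it is this normalization that makes the surface gravity come out as the clean expression \eqref{eq:sds:kC} rather than an arbitrary multiple of it. Everything else is a short Christoffel-symbol computation using the special structure $g_{uu}=g_{vv}=0$ and $r=r(uv)$.
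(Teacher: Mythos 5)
Your proposal is correct, and for the surface gravity identity \eqref{eq:sds:T:sg} it takes a genuinely different route from the paper. The paper does not compute Christoffel symbols at all: it introduces the conjugate null vector $Y\onc=\frac{2}{\partial r/\partial v}\pd{v}$ normalized by $g(T,Y)=-2$, and then uses the antisymmetry of $\nabla_\mu T_\nu$ for a Killing field to write $g(\nabla_TT,Y)=-g(\nabla_YT,T)=-\tfrac12 Y\cdot g(T,T)=\tfrac{\ud}{\ud r}\bigl(1-\tfrac{2m}{r}-\tfrac{\Lambda r^2}{3}\bigr)\bigr\rvert_{r=\rc}$, from which $\kC$ is read off since $Yr=2$; this is essentially the ``equivalent route'' you mention parenthetically. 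The paper's choice is not just a matter of taste: the vector $Y$ so normalized is immediately reused in Section~\ref{sec:sds:localredshift} as the seed of the redshift extension $\nabla_YY=-\sigma(Y+T)$, so the lemma doubles as the setup for Prop.~\ref{prop:sds:redshift}. What your direct computation buys instead is an explicit verification of the global Killing property \eqref{eq:sds:T:Killing} in the nondegenerate $(u,v)$ chart (via $T(uv)=0$ and $\mathcal{L}_T(\ud u\,\ud v)=0$), which the paper leaves implicit, and correct attention to the degeneracy of the $(t,r)$ chart on $\mathcal{C}^+$. Your Christoffel computation ($\Gamma^u_{uu}=\partial_u\log\Omega^2$, $\Gamma^v_{uu}=\Gamma^A_{uu}=0$, and $\partial_u\Omega^2=0$ on $v=0$ since $\partial_ur\propto v$) is sound; note only the trivial slip that the intermediate expression $\kC\,(Tu)\,\kC\,\pd{u}$ carries one factor of $\kC$ too many — the term is $T(\kC u)\pd{u}=\kC\cdot\kC u\,\pd{u}$, as your final line correctly states. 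The bookkeeping in your consistency check for the value \eqref{eq:sds:kC} is also slightly misattributed (the $\rc$ in the denominator comes from the $1/r$ in $1-\tfrac{2m}{r}-\tfrac{\Lambda r^2}{3}=\tfrac1r\bigl(r-2m-\tfrac{\Lambda r^3}{3}\bigr)$, and the $\tfrac12$ is the standard surface gravity normalization, not a double-null artefact), but the number is right and this is only a cross-check.
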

The result is obtained with the help of the vectorfield
\begin{equation}
  \label{eq:sds:Y:C}
  Y\Bigr\rvert_{\mathcal{C}^+}=\frac{2}{\dd{v}{r}}\pd{v}
\end{equation}
which is conjugate to $T$ along the horizon (i.e.~$Y$ is null, orthogonal to the tangent space of the spheres of symmetry, and verifies the following normalization):
\begin{equation}
  \label{eq:sds:T:Y}
  g(T,Y)\VC=-2\,.
\end{equation}
Indeed, by \eqref{eq:sds:T:Killing},
\begin{multline}
  g(\nabla_TT,Y)\VC=-g(\nabla_YT,T)\VC=-\frac{1}{2}Y\cdot g(T,T)\VC\\
  =\frac{\ud}{\ud r}\Bigl(1-\frac{2m}{r}-\frac{\Lambda r^2}{3}\Bigr)\Bigr\rvert_{r=\rc}\,.
\end{multline}
Alternatively $\kC$ is characterized by
\begin{equation}
  \label{eq:sds:kC:alternate}
  \nabla_YT=\nabla_TY=-\kC Y\qquad\text{on }\mathcal{C}^+\,;
\end{equation}
note that this in particular implies that $Y$ is Lie transported by $T$ along the horizon:
\begin{equation}
  \label{eq:sds:LieTY}
  [T,Y]\VC=0\,.
\end{equation}
\begin{rmk}
The vectorfield $Y$ as defined above forms the basis of an extension away from the horizon which captures the redshift effect of horizons with positive surface gravity. It was first introduced in the context of the Schwarzschild black hole spacetime in \cite{dr:redshift,dr:c}.
\end{rmk}

\section[Local and global redshift]{Local and global redshift: \\Proofs of the boundedness and decay statements}
\label{sec:proofs:global}

We develop in this section the global energy estimates for solutions to the wave equation on Schwarzschild de Sitter spacetimes.

The crucial estimate for our analysis is proven in Section \ref{sec:sds:globalredshift} where we present a vectorfield that captures the \emph{global} redshift effect of the expanding region. In Section \ref{sec:sds:localredshift} the \emph{local} redshift effect of the cosmological horizons is exploited to control the energy in its immediate vicinity. We obtain a global estimate with the help of a boundedness result for the static region from \cite{dr:sds} as discussed above.

\subsection{Global redshift effect}
\label{sec:sds:globalredshift}

It is the purpose of this section to show that the vectorfield
\begin{equation}
  \label{eq:sds:M}
  M=\frac{1}{r}\pd{r}  
\end{equation}
captures the global redshift effect in the expanding region of Schwarzschild de Sitter.

The divergence theorem applied to the energy current \eqref{eq:r:ec} associated to the multiplier $M$, $J^M$, on
\begin{equation}
  \mathcal{R}_{r_1}^{r_2}\doteq\bigcup_{r_1\leq r\leq r_2}\Sigma_r\,,
\end{equation}
combined with the coarea formula applied to the foliation \eqref{eq:sds:metric:levelsets} reads
\begin{equation}
  \label{eq:sds:global:coarea}
    \int_{\Sigma_{r_2}}J^M\cdot n\,\dm{\gb{r_2}}+\int_{r_1}^{r_2}\ud r\int_{\Sigma_r}\phi \:\nabla\cdot J^M\dm{\gb{r}}=\int_{\Sigma_{r_1}}J^M\cdot n\,\dm{\gb{r_1}}\,,
\end{equation}
for any $r_2>r_1>\rc$.

The redshift property of $M$ is now manifested in the fact that for any solution $\psi$ to the wave equation \eqref{eq:sds:wave} it holds
  \begin{equation}\label{eq:M:redshift}
    \phi\:\nabla\cdot J^M[\psi]\geq\,\frac{1}{r}\:J^M[\psi]\cdot n\,,
  \end{equation}
which is the subject of the proposition below.
We are thus in the situation that
\begin{equation}\label{eq:sds:global:gronwall}
  \int_{\Sigma_{r_2}}J^M\cdot n\,\dm{\gb{r_2}}+\int_{r_1}^{r_2}\ud r\frac{1}{r}\int_{\Sigma_r}J^M\cdot n\,\dm{\gb{r}}\leq\int_{\Sigma_{r_1}}J^M\cdot n\,\dm{\gb{r_1}}\,,
\end{equation}
which implies by a Gronwall inequality:
\begin{equation}
  \label{eq:sds:global:weighted}
  r_2\int_{\Sigma_{r_2}}J^M\cdot n\,\dm{\gb{r_2}}\leq r_1\int_{\Sigma_{r_1}}J^M\cdot n\,\dm{\gb{r_1}}\qquad(r_2>r_1>\rc)\,.
\end{equation}
This is the content of Proposition \ref{prop:r:expanding}.

We begin with proving \eqref{eq:M:redshift}.
Recall here the conservation laws for solutions to the wave equation,
\begin{equation}
  \label{eq:T:conservation}
  \nabla\cdot T=0\,,
\end{equation}
namely
\begin{equation}
  \label{eq:energymomentum}
  T_{\mu\nu}[\psi]=\partial_\mu\psi\,\partial_\nu\psi-\frac{1}{2}\,g_{\mu\nu}\,\partial^\alpha\psi\partial_\alpha\psi
\end{equation}
is conserved provided $\psi$ is a solution to \eqref{eq:sds:wave}.
Therefore we obtain for the energy current associated to $M$,
\begin{equation}
  \label{eq:standardcurrent}
  J^M_\mu[\psi]=T_{\mu\nu}[\psi]M^\nu\,,
\end{equation}
that its divergence is given by
\begin{equation}
  \label{eq:KMd}
  \nabla\cdot J^M\doteq\nabla^\mu J_\mu^M={}^{(M)}\pi^{\mu\nu}T_{\mu\nu}[\psi]\doteq K^M[\psi]\,,
\end{equation}
where ${}^{(M)}\pi\doteq\frac{1}{2}\mathcal{L}_Mg$  is the deformation tensor of $M$.

\begin{prop}
  \label{prop:sds:globalredshift}
  Let $J^M$ be the current associated to the multiplier \eqref{eq:sds:M}, and $n$ the normal to $\Sigma_r$ $(r>\rc)$. Then for any solution $\psi$ to the wave equation \eqref{eq:sds:wave} we have
  \begin{equation*}
    \phi\:\nabla\cdot J^M[\psi]\geq\,\frac{1}{r}\:J^M[\psi]\cdot n
  \end{equation*}
on $\mathcal{R}^+$.
\end{prop}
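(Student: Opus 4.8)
The plan is to establish the stated inequality pointwise. Recall from \eqref{eq:KMd} that $\nabla\cdot J^M[\psi]=K^M[\psi]={}^{(M)}\pi^{\mu\nu}T_{\mu\nu}[\psi]$ with ${}^{(M)}\pi=\frac12\mathcal{L}_Mg$, and note that by \eqref{eq:sds:dr} one has $M=\frac1r\dr=\frac\phi r\,n$, so $J^M[\psi]\cdot n=\frac\phi r\,T(n,n)$; in particular the right-hand side of the asserted estimate is nonnegative (since $T(n,n)\geq 0$), which is exactly what the Gronwall step \eqref{eq:sds:global:gronwall}--\eqref{eq:sds:global:weighted} needs. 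Dividing through by $\phi/r$, the claim is equivalent to the pointwise bound $K^M[\psi]\geq\frac1{r^2}\,T(n,n)$ on $\mathcal{R}^+$.

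I would next compute the deformation tensor ${}^{(M)}\pi$ in the $(t,r)$ coordinates of \eqref{eq:sds:metric:tr}. Writing $\mu(r)=\phi^{-2}=\frac{\Lambda r^2}{3}+\frac{2m}{r}-1$ and using $\mathcal{L}_{fX}g=f\,\mathcal{L}_Xg+\ud f\otimes X^\flat+X^\flat\otimes\ud f$ with $f=1/r$, $X=\dr$ (whose Lie derivative of $g$ has coordinate components $\partial_rg_{\mu\nu}$), one finds that ${}^{(M)}\pi$ is diagonal. In the orthonormal frame $(n,e_1,e_2,e_3)$ adapted to $\Sigma_r$ --- with $n=\phi^{-1}\dr$, $e_1=\phi\,\pd{t}$ (a unit multiple of the Killing field $T=\pd{t}$, cf.~\eqref{eq:sds:dt},\eqref{eq:sds:T}), and $e_2,e_3$ orthonormal for $r^2\g$ --- its nonzero components come out to
\begin{equation*}
  {}^{(M)}\pi(n,n)=\frac{\mu'}{2r\mu}+\frac1{r^2},\qquad {}^{(M)}\pi(e_1,e_1)=\frac{\mu'}{2r\mu},\qquad {}^{(M)}\pi(e_2,e_2)={}^{(M)}\pi(e_3,e_3)=\frac1{r^2}.
\end{equation*}
Substituting the wave energy--momentum tensor \eqref{eq:energymomentum} in this frame --- so that $T(n,n)=\frac12\bigl[\sq{n\psi}+\sq{e_1\psi}+\sqv{\nablab\psi}\bigr]$, and similarly for the remaining diagonal components --- the angular terms cancel identically and one is left with
\begin{equation*}
  K^M[\psi]-\frac1{r^2}\,T(n,n)=\Bigl(\frac{\mu'}{2r\mu}+\frac1{r^2}\Bigr)\sq{n\psi}+\Bigl(\frac{\mu'}{2r\mu}-\frac1{r^2}\Bigr)\sq{e_1\psi}.
\end{equation*}

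It then remains only to verify that both coefficients are nonnegative on $\mathcal{R}^+=\{r>\rc\}$. Since $\rc$ is the largest root of $\mu$, we have $\mu>0$ there; and $\mu'>0$ there as well, because $\mu$ is strictly convex ($\mu''=\frac{2\Lambda}{3}+\frac{4m}{r^3}>0$) with $\mu'(\rc)=2\kC>0$ by Lemma~\ref{lemma:sds:surfacegravity}; hence the coefficient of $\sq{n\psi}$ is positive. For the coefficient of $\sq{e_1\psi}$ the decisive point --- and the algebraic core of the redshift mechanism --- is the identity
\begin{equation*}
  \mu'(r)-\frac{2\mu(r)}{r}=\frac{2(r-3m)}{r^2},\qquad\text{so that}\qquad\frac{\mu'}{2r\mu}-\frac1{r^2}=\frac{r-3m}{r^3\mu},
\end{equation*}
which is strictly positive precisely because $\rc>3m$ by \eqref{eq:sds:roots:location} (together with $\mu>0$). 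This yields $K^M[\psi]\geq\frac1{r^2}\,T(n,n)$ and completes the proof. The computation is routine throughout; the only substantive point --- and the only place where the specific geometry enters --- is this last sign check: the coefficient of $\sq{e_1\psi}$ vanishes exactly at the photon sphere $r=3m$, and it is the gap $\rc>3m$ between the cosmological horizon and the photon sphere, not any largeness of $r$, that forces the monotonicity.
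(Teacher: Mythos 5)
Your proof is correct and follows essentially the same route as the paper: a direct pointwise computation of the deformation tensor of $M$ showing that the angular contributions to $K^M-\frac{1}{r^2}T(n,n)$ cancel identically and that the two remaining coefficients are nonnegative, with the decisive sign reducing in both cases to $r>3m$ (the paper's combination $K_0+K_1-K_2$ equals, up to normalization, your $\frac{\mu'}{2r\mu}-\frac{1}{r^2}=\frac{r-3m}{r^3\mu}$). Your orthonormal-frame bookkeeping in terms of $\mu$ and $\mu'$ is a tidier packaging of the paper's coordinate computation with the explicit roots $\rH$, $\rc$, $\rcb$, but the substance is identical.
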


\bigskip
\noindent\emph{Proof.}
  It is equivalent to establish
  \begin{equation*}
    \label{eq:sds:ee:proof:div}
    K^M=\deformt{M}^{\mu\nu}T_{\mu\nu}[\psi]\geq\frac{1}{r^2}\frac{1}{\phi^2}T_{rr}[\psi]\,.
  \end{equation*}

  Recall from \eqref{eq:sds:metric:tr} that
  \begin{subequations}
    \begin{gather*}
      g_{rr}=-\frac{3}{\Lambda}\frac{r}{(r-\rH)(r-\rc)(r+\vrcb)}\\
      g_{tt}=\frac{1}{r}\frac{\Lambda}{3}(r-\rH)(r-\rc)(r+\vrcb)\\
      g_{AB}=r^2\g_{AB}\,,
    \end{gather*}
  \end{subequations}
  and thus the non-vanishing connection coefficients are:
  \begin{subequations}
    \begin{gather*}
      \begin{split}
      \Gamma_{rr}^r&=\frac{1}{2}(g^{-1})^{rr}\partial_rg_{rr}\\
      =&\phantom{+}\frac{1}{2}\frac{1}{r}\frac{\rH}{(\rc-\rH)(\rH+\vrcb)}\frac{(r-\rc)(r+\vrcb)}{r-\rH}\\
      &-\frac{1}{2}\frac{1}{r}\frac{\rc}{(\rc-\rH)(\rc+\vrcb)}\frac{(r-\rH)(r+\vrcb)}{r-\rc}\\
      &+\frac{1}{2}\frac{1}{r}\frac{\vrcb}{(\vrcb+\rH)(\vrcb+\rc)}\frac{(r-\rH)(r-\rc)}{r+\rH}
    \end{split}\displaybreak[0]\\
    \Gamma_{rt}^t=\frac{1}{2}(g^{-1})^{tt}\partial_r g_{tt}=\frac{1}{2}\Bigl[-\frac{1}{r}+\frac{1}{r-\rH}+\frac{1}{r-\rc}+\frac{1}{r+\vrcb}\Bigr]\,,
  \end{gather*}
  \end{subequations}
  and $\Gamma_{tt}^r$ as well as
  \begin{subequations}
    \begin{gather*}
      \Gamma_{AB}^r=\frac{\Lambda}{3}(r-\rH)(r-\rc)(r+\vrcb)\g_{AB}\\
      \Gamma_{rA}^B=\frac{1}{r}\delta_A^B\,.
    \end{gather*}
  \end{subequations}

  Let us first consider
  \begin{equation*}
    \label{eq:sds:Mp}
    \Mp=\pd{r}\,.
  \end{equation*}
  We have
  \begin{subequations}
    \begin{gather*}
      \deformt{\Mp}^{rr}=(g^{-1})^{rr}\Gamma_{rr}^r\qquad\deformt{\Mp}^{tt}=(g^{-1})^{tt}\Gamma_{rt}^t\\
      \deformt{\Mp}^{AB}=\frac{1}{r}(g^{-1})^{AB}\,,
    \end{gather*}
  \end{subequations}
  and
  \begin{equation*}
    \label{eq:sds:ee:proof:bulk}
    K^{\Mp}=\deformt{\Mp}^{rr}T_{rr}+\deformt{\Mp}^{tt}T_{tt}+\deformt{\Mp}^{AB}T_{AB}\,.
  \end{equation*}
  Now,
  \begin{multline*}
    T_{rr}=\frac{1}{2}\sq{\dd{r}{\psi}}+\frac{1}{2}\sq{\frac{3}{\Lambda}}\frac{r^2}{(r-\rH)^2(r-\rc)^2(r+\vrcb)^2}\sq{\dd{t}{\psi}}\\
      +\frac{1}{2}\frac{3}{\Lambda}\frac{r}{(r-\rH)(r-\rc)(r+\vrcb)}\sqv{\nablab\psi}\,,
  \end{multline*}
  \begin{multline*}
    T_{tt}=\frac{1}{2}\sq{\dd{t}{\psi}}+\frac{1}{2}\frac{1}{r^2}\sq{\frac{\Lambda}{3}}(r-\rH)^2(r-\rc)^2(r+\vrcb)^2\sq{\dd{r}{\psi}}\\
    -\frac{1}{2}\frac{1}{r}\frac{\Lambda}{3}(r-\rH)(r-\rc)(r+\vrcb)\sqv{\nablab\psi}\,,
  \end{multline*}
  and
  \begin{multline*}
    g^{AB}T_{AB}=\frac{\Lambda}{3}\frac{1}{r}(r-\rH)(r-\rc)(r+\vrcb)\sq{\dd{r}{\psi}}\\
    -\frac{3}{\Lambda}\frac{r}{(r-\rH)(r-\rc)(r+\vrcb)}\sq{\dd{t}{\psi}}\,;
  \end{multline*}
  also note that
  \begin{multline*}
    \frac{1}{\phi^2}T_{rr}=\frac{1}{2}\frac{1}{r}\frac{\Lambda}{3}(r-\rH)(r-\rc)(r+\vrcb)\sq{\dd{r}{\psi}}\\
    +\frac{1}{2}\frac{3}{\Lambda}\frac{r}{(r-\rH)(r-\rc)(r+\vrcb)}\sq{\dd{t}{\psi}}+\frac{1}{2}\sqv{\nablab\psi}\\
    \doteq \frac{1}{2}L_r\sq{\dd{r}{\psi}}+\frac{1}{2}L_t\sq{\dd{t}{\psi}}+\frac{1}{2}\sqv{\nablab\psi}\,.
  \end{multline*}
  We then find that
  \begin{multline*}
    K^\Mp=\frac{1}{2}\Bigl[K_0+K_1+K_2\Bigr]L_r\sq{\dd{r}{\psi}}\\
    +\frac{1}{2}\Bigl[K_0+K_1-K_2\Bigr]L_t\sq{\dd{t}{\psi}}
    +\frac{1}{2}\Bigl[K_0-K_1\Bigr]\sqv{\nablab\psi}\,,
  \end{multline*}
  where
  \begin{multline*}
    K_0=-\frac{1}{2}\frac{1}{r}\frac{\rH}{(\rc-\rH)(\rH+\vrcb)}\frac{(r-\rc)(r+\vrcb)}{r-\rH}\\
    +\frac{1}{2}\frac{1}{r}\frac{\rc}{(\rc-\rH)(\rc+\vrcb)}\frac{(r-\rH)(r+\vrcb)}{r-\rc}\\
    -\frac{1}{2}\frac{1}{r}\frac{\vrcb}{(\vrcb+\rH)(\vrcb+\rc)}\frac{(r-\rH)(r-\rc)}{r+\vrcb}\displaybreak[0]\\
    =-\frac{1}{2}\Bigl(\frac{\Lambda r^2}{3}+\frac{2m}{r}-1\Bigr)\pd{r}\frac{1}{\frac{\Lambda r^2}{3}+\frac{2m}{r}-1}\\
    =\frac{1}{2}\frac{3}{\Lambda}\frac{2\frac{\Lambda r^2}{3}-\frac{2m}{r}}{(r-\rH)(r-\rc)(r+\vrcb)}\,,
  \end{multline*}
  and
  \begin{equation*}
    K_1=\frac{1}{2}\frac{1}{r}\Bigl(-1+\frac{r}{r-\rH}+\frac{r}{r-\rc}+\frac{r}{r+\vrcb}\Bigr)
  \end{equation*}
and finally
\begin{equation*}
  K_2=2\frac{1}{r}\,.
\end{equation*}
Next we observe that
\begin{multline*}
  K_0-K_1=\frac{1}{2}\frac{1}{r}\frac{1}{(r-\rH)(r-\rc)(r+\vrcb)}\times\displaybreak[0]\\
  \times\Bigl[2r^3-\frac{6m}{\Lambda}+(r-\rH)(r-\rc)(r+\vrcb)\\
  -r\bigl(r^2+\vrcb r-\rc r-\rc\vrcb\bigr)\\
  -r\bigl(r^2+\vrcb r-\rH r-\rH\vrcb\bigr)\\-r\bigl(r^2-\rc r-\rH r+\rH\rc\bigr)\Bigr]=0
\end{multline*}
by \eqref{eq:sds:roots:rel}, and thus
\begin{multline*}
  K_0+K_1-K_2=2 K_1-K_2=\frac{1}{r}\Bigl(-3+\frac{r}{r-\rH}+\frac{r}{r-\rc}+\frac{r}{r+\vrcb}\Bigr)\displaybreak[0]\\
  =\frac{1}{r}\frac{1}{(r-\rH)(r-\rc)(r+\vrcb)}\Bigl[2\rH\vrcb r+2\rc\vrcb r-2\rH\rc r-3\rc\vrcb\rH\Bigr]\\
  =\frac{3}{\Lambda}\frac{2}{r}\frac{1}{(r-\rH)(r-\rc)(r+\vrcb)}\Bigl[r-3m\Bigr]\geq\frac{3}{\Lambda}\frac{2}{r}\frac{1}{(r-\rH)(r+\vrcb)}\,,
\end{multline*}
again using the relations \eqref{eq:sds:roots:rel}; and finally
\begin{equation*}
  K_0+K_1+K_2\geq 2 K_2=4\frac{1}{r}\,.
\end{equation*}
We have shown in particular
\begin{equation*}
  K^\Mp\geq 0\,.
\end{equation*}
Since
\begin{subequations}
  \begin{gather*}
    \deformt{M}_{rr}=-\frac{1}{r^2}\,g_{rr}+\frac{1}{r}\,\deformt{\Mp}_{rr}\\
    \deformt{M}_{tt}=\frac{1}{r}\,\deformt{\Mp}_{tt}\\
    \deformt{M}_{AB}=\frac{1}{r}\,\deformt{\Mp}_{AB}\,,
  \end{gather*}
\end{subequations}
we conclude
\begin{multline*}
  K^M=-\frac{1}{r^2}(g^{-1})^{rr}T_{rr}+\frac{1}{r}K^\Mp\\
  =\frac{\Lambda}{3}\frac{1}{r^3}(r-\rH)(r-\rc)(r+\vrcb)\,T_{rr}+\frac{1}{r}K^\Mp
  \geq\frac{1}{r^2}\frac{1}{\phi^2}T_{rr}\,.
\end{multline*}
\qed

It is now an immediate consequence of the following (slightly more general) Gronwall inequality that \eqref{eq:sds:global:gronwall} implies the weighted energy inequality \eqref{eq:sds:global:weighted}.

\begin{lemma}[Gronwall inequality for decreasing functions]
\label{lemma:gronwall:df}
Let $\alpha<0$, and $f,g\in\mathrm{C}^1([r_1,r_2])$ with $g\neq 0$, satisfying the inequality
\begin{equation}
  \label{eq:gronwall:assumption}
  f^\prime\leq\alpha\frac{g^\prime}{g}f
\end{equation}
on the interval $[r_1,r_2]$. Then
\begin{equation}
  \label{eq:lem:gronwall}
  f(r_2)\leq\frac{\lvert g(r_1)\rvert^{\lvert\alpha\rvert}}{\lvert g(r_2)\rvert^{\lvert\alpha\rvert}}f(r_1)\,.
\end{equation}
\end{lemma}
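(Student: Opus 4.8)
The plan is to reduce the claimed inequality \eqref{eq:lem:gronwall} to an ordinary monotonicity argument after dividing out the natural integrating factor. Observe that since $g\in\mathrm{C}^1([r_1,r_2])$ with $g\neq 0$ on the interval, $|g|$ is $\mathrm{C}^1$ and of constant sign, so the function $|g|^{|\alpha|}$ is a well-defined positive $\mathrm{C}^1$ function with $\frac{\ud}{\ud r}|g|^{|\alpha|}=|\alpha|\,|g|^{|\alpha|}\,\frac{g'}{g}$. The idea is to multiply $f$ by this factor and show the product is nondecreasing.

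First I would define $h(r)\doteq|g(r)|^{|\alpha|}f(r)$ and compute
\begin{equation*}
  h'=|\alpha|\,|g|^{|\alpha|}\frac{g'}{g}f+|g|^{|\alpha|}f'=|g|^{|\alpha|}\Bigl(|\alpha|\frac{g'}{g}f+f'\Bigr)\,.
\end{equation*}
Since $\alpha<0$ we have $|\alpha|=-\alpha$, so the bracket equals $f'-\alpha\frac{g'}{g}f$, which by the hypothesis \eqref{eq:gronwall:assumption} is $\geq 0$. As $|g|^{|\alpha|}>0$, this gives $h'\geq 0$ on $[r_1,r_2]$, hence $h(r_2)\geq h(r_1)$, i.e.
\begin{equation*}
  |g(r_2)|^{|\alpha|}f(r_2)\geq|g(r_1)|^{|\alpha|}f(r_1)\,.
\end{equation*}
Wait --- this is the wrong direction relative to \eqref{eq:lem:gronwall}. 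Let me reexamine: \eqref{eq:lem:gronwall} asserts $f(r_2)\leq\frac{|g(r_1)|^{|\alpha|}}{|g(r_2)|^{|\alpha|}}f(r_1)$, equivalently $|g(r_2)|^{|\alpha|}f(r_2)\leq|g(r_1)|^{|\alpha|}f(r_1)$, i.e.\ $h(r_2)\leq h(r_1)$, which needs $h'\leq 0$. So the correct choice of integrating factor must be $|g|^{-|\alpha|}$, or equivalently I should reconsider the sign bookkeeping. Rechecking: in the application \eqref{eq:sds:global:gronwall}, $f(r)=\int_{\Sigma_r}J^M\cdot n\,\dm{\gb{r}}$ is decreasing and $g$ would be taken so that $g'/g<0$ while $\alpha<0$, making $\alpha g'/g>0$ --- but the hypothesis says $f'\leq\alpha\frac{g'}{g}f\leq$ something, and one wants a \emph{decay} bound. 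The resolution is that with $\alpha<0$ and the desired conclusion $h(r_2)\le h(r_1)$, I set $h\doteq|g|^{-\alpha}f=|g|^{|\alpha|}f$ only if $g$ is decreasing in absolute value. The cleanest route: multiply through by the factor $\exp\bigl(-\alpha\int_{r_1}^r\frac{g'}{g}\ud s\bigr)=\exp\bigl(-\alpha\log\frac{|g(r)|}{|g(r_1)|}\bigr)=\bigl(|g(r)|/|g(r_1)|\bigr)^{-\alpha}$; since $-\alpha=|\alpha|>0$ this is $\bigl(|g(r)|/|g(r_1)|\bigr)^{|\alpha|}$.

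The main (indeed only) subtle point is the sign/direction bookkeeping, which I would settle by writing everything in terms of the positive quantity $\mu(r)\doteq\exp\bigl(|\alpha|\int_{r_1}^r\frac{-g'}{g}\ud s\bigr)$, an integrating factor for which a direct differentiation shows $\frac{\ud}{\ud r}(\mu f)=\mu(f'+|\alpha|\frac{-g'}{g}f)\cdot$ --- hmm, this again conflicts with \eqref{eq:gronwall:assumption} which has $+\alpha g'/g=-|\alpha|g'/g$ on the right. Thus $\frac{\ud}{\ud r}(\mu f)=\mu\bigl(f'-\alpha\frac{g'}{g}f\bigr)$ is \emph{not} automatically signed. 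The honest statement is that the hypothesis \eqref{eq:gronwall:assumption} forces, with $\mu(r)=|g(r)|^{\alpha}$ (note the exponent $\alpha<0$, so $\mu$ is $|g|^{-|\alpha|}$), the identity $(\mu f)'=\mu(f'-\alpha\frac{g'}{g}f)$ --- no: $(\mu f)'=\mu'f+\mu f'$ with $\mu'=\alpha|g|^{\alpha}\frac{g'}{g}=\alpha\mu\frac{g'}{g}$, so $(\mu f)'=\mu(f'+\alpha\frac{g'}{g}f)$. Hmm, that has the wrong sign of the $\alpha$-term versus \eqref{eq:gronwall:assumption}'s $\alpha\frac{g'}{g}f$ on the right-hand side, which would give $(\mu f)'\le 2\alpha\mu\frac{g'}{g}f$, not obviously signed. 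I therefore expect the actual proof uses the integrating factor $\nu(r)=|g(r)|^{-\alpha}=|g(r)|^{|\alpha|}$, with $\nu'=-\alpha\nu\frac{g'}{g}$, so that $(\nu f)'=\nu(f'-\alpha\frac{g'}{g}f)$, whence by \eqref{eq:gronwall:assumption} $(\nu f)'\le 0$; integrating from $r_1$ to $r_2$ gives $|g(r_2)|^{|\alpha|}f(r_2)\le|g(r_1)|^{|\alpha|}f(r_1)$, which rearranges to precisely \eqref{eq:lem:gronwall}. So the key step is the single line $\frac{\ud}{\ud r}\bigl(|g|^{|\alpha|}f\bigr)=|g|^{|\alpha|}\bigl(f'-\alpha\tfrac{g'}{g}f\bigr)\le 0$, followed by integration; the only thing requiring care is confirming $|g|^{|\alpha|}\in\mathrm{C}^1$, which holds because $g$ is $\mathrm{C}^1$ and nonvanishing. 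No genuine obstacle arises beyond this sign-tracking.
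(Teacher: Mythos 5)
Your final argument --- the integrating factor $\nu=|g|^{|\alpha|}=|g|^{-\alpha}$ with $(\nu f)'=\nu\bigl(f'-\alpha\tfrac{g'}{g}f\bigr)\le 0$, then integration --- is correct and is exactly the paper's proof, which uses the same factor written as $\exp\bigl[-\alpha\int_{r_1}^r\tfrac{g'}{g}\,\ud r'\bigr]=\bigl(|g(r)|/|g(r_1)|\bigr)^{|\alpha|}$. Note that your very first computation with $h=|g|^{|\alpha|}f$ was already right and gives $h'\le 0$ directly; the detour arose only because you momentarily read the hypothesis as $f'-\alpha\tfrac{g'}{g}f\ge 0$ instead of $\le 0$.
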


\begin{proof}
By \eqref{eq:gronwall:assumption},
\begin{equation}
  \frac{\ud}{\ud r}\biggl[f(r)\exp\Bigl[-\alpha\int_{r_1}^r\frac{g^\prime(r^\prime)}{g(r^\prime)}\ud r^\prime\Bigr]\biggr]\leq 0\,,
\end{equation}
which yields \eqref{eq:lem:gronwall} upon integration on the interval $[r_1,r_2]$.
\end{proof}

Finally, to verify that \eqref{eq:sds:global:weighted} implies the statement of Prop.~\ref{prop:r:expanding} note here
  \begin{multline}
    J^M\cdot n=\frac{1}{r}\frac{1}{\phi}T_{rr}
    =\frac{1}{2}\frac{1}{r^2}\phi\Bigl(\frac{\Lambda r^3}{3}+2m-r\Bigr)\sq{\dd{r}{\psi}}\\
    +\frac{1}{2}\phi\frac{1}{\frac{\Lambda r^3}{3}+2m-r}\sq{\dd{t}{\psi}}+\frac{1}{2}\frac{1}{r}\phi\sqv{\nablab\psi}\,,
  \end{multline}
which completes our proof of the energy estimate in the expanding region.

\subsection{Local redshift effect}
\label{sec:sds:localredshift}

In this section we construct a vectorfield that captures the local redshift effect of the cosmological horizon. The basic insight for this construction was gained in the context of black hole event horizons in \cite{dr:c}.

The idea is to construct a strictly timelike vectorfield $N$ in the vicinity of $\mathcal{C}^+$ which is Lie transported by $T$, $[T,N]=0$. As a consequence of the positivity of the surface gravity this can be done in such a way that the divergence of the current $J^N$ is positive. Thus, with $\Sigma$ and the notation as in Prop.~\ref{prop:sds:local}, the energy identity for $J^N$ on
\begin{equation*}
  \mathrm{J}^+(\Sigma^\prime\cup\mathcal{C}^+_0)\cap\mathrm{J}^-(\Sigma_{r_0})
\end{equation*}
implies for $r_0>\rc$ small enough (since the positive definiteness of the divergence only holds locally)
\begin{equation}
  \int_{\Sigma_{r_0}\cap\mathrm{J}^+(\Sigma)}J^N\cdot n\:\dm{\gb{r_0}}
  \leq\int_{\mathcal{C}^+_0}{}^\ast J^N + \int_{\Sigma^\prime}J^N\cdot n\:\dm{\gb{}}\,,
\end{equation}
which leads to Prop.~\ref{prop:sds:local} given the left hand side provides an estimate for the boundary terms arising from the current $J^M$ of the previous section. 

We shall now define $N$ in the double null coordinates of Section \ref{sec:sds:dn}.
On $\mathcal{C}^+$ we may introduce the null frame $(T,Y,E_A:A=1,2)$, made up of the vectorfields $T$ \eqref{eq:sds:T}, $Y\onc$ as discussed in Section \ref{sec:geometry:horizon}, and complemented by an orthonormal frame field  $E_A:A=1,2$ on the sections of the horizon which is Lie transported by $T$.
We have seen that, also using \eqref{eq:sds:geometry:duvr},
\begin{subequations}\label{eq:sds:local:nullframe:horizon} 
\begin{gather}
  T\VC=\kC\,u\pd{u}\,,\\
  Y\VC=\frac{2}{\dd{v}{r}\rvert_{v=0}}\pd{v}=\ic\frac{1}{u}\pd{v}\,,\qquad\ic\doteq\frac{4}{\kC}\frac{1}{\Omega^2}\VC\,,
\end{gather}
\end{subequations}
are conjugate null vectors on the cosmological horizon
\begin{subequations}
\begin{gather}
  g(T,Y)\VC=\ic\kC\,g_{uv}=-2\,,\\
  g(T,E_A)\VC=0\qquad g(Y,E_A)\VC=0\,,
\end{gather}
\end{subequations}
and satisfy the commutation relation \eqref{eq:sds:LieTY}.

Let us now \emph{extend} the vectorfield $Y$ away from the horizon by
\begin{equation}
  \label{eq:sds:redshift:extension}
  \nabla_YY=-\sigma(Y+T)\,,
\end{equation}
where $\sigma>0$.
It is well known from \cite{dr:c} that this extension gives rise to a positive current by virtue of Lemma~\ref{lemma:sds:surfacegravity}. Indeed, we find using \eqref{eq:sds:kC:alternate} and  \eqref{eq:sds:redshift:extension} that
\begin{multline}
  K^Y\VC\doteq\deformt{Y}^{\mu\nu}T_{\mu\nu}\VC=\frac{1}{2}\sigma\sq{T\cdot\psi}+\frac{1}{2}\kC\sq{Y\cdot\psi}\\
  +\frac{1}{2}\sigma\sqv{\nablab\psi}+\frac{2}{r}\bigl( T\cdot\psi\bigr)\bigl(Y\cdot \psi\bigr)\,,
\end{multline}
and thus
\begin{equation}
  \label{eq:sds:redshift:div}
  K^Y\VC\geq\frac{1}{4}\kC\sq{Y\cdot\psi}+\Bigl[\frac{1}{2}\sigma-\sq{\frac{2}{\rc}}\frac{1}{\kC}\Bigr]\sq{T\cdot\psi}+\frac{1}{2}\sigma\sqv{\nablab\psi}\geq 0\,,
\end{equation}
if $\sigma>(2/\rc)^22/\kC$. 
By construction $Y$ is Lie transported by $T$,
\begin{equation}\label{eq:sds:local:Y:invariance}
  [T,Y]=0\,.
\end{equation}

\begin{prop}
  \label{prop:sds:redshift}
Let $\Sigma$, $\Sigma^\prime$, and $\mathcal{C}_0^+$ be as in Prop.~\ref{prop:sds:local} and let $\psi$ be a solution to \eqref{eq:sds:wave}.
Set $N=T+Y$ with $Y$ as constructed above, then there exists $r_0>\rc$ (depending on $\Lambda$, $m$ and $\Sigma$) such that
  \begin{equation}
    \label{eq:sds:prop:redshift:positivity}
    K^N[\psi]\geq 0\qquad\text{ on }\:\mathrm{J}^+(\Sigma^\prime\cup\mathcal{C}^+_0)\cap\mathrm{J}^-(\Sigma_{r_0})\,,
  \end{equation}
  and a constant $C$ that only depends on $r_0$ such that
  \begin{equation}
    \label{eq:sds:prop:redshift:energy}
    J^M[\psi]\cdot n\leq C(r_0)\,J^N[\psi]\cdot n\qquad\text{ on }\:\Sigma_{r_0}\cap\mathrm{J}^+(\Sigma)\,.
  \end{equation}
\end{prop}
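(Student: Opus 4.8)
The plan is to establish the two assertions \eqref{eq:sds:prop:redshift:positivity} and \eqref{eq:sds:prop:redshift:energy} in turn, relying on the fact that the one genuinely computational step — the positivity of $K^Y$ \emph{along} the cosmological horizon — has already been carried out in \eqref{eq:sds:redshift:div}. For \eqref{eq:sds:prop:redshift:positivity}, the first remark is that $T$ is Killing (Lemma~\ref{lemma:sds:surfacegravity}), so $\deformt{N}=\deformt{T}+\deformt{Y}=\deformt{Y}$ and therefore $K^N[\psi]=K^Y[\psi]$; thus it suffices to exhibit $r_0>\rc$ with $K^Y[\psi]\geq 0$ on $J^+(\Sigma'\cup\mathcal{C}_0^+)\cap J^-(\Sigma_{r_0})$. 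I would first make sense of $Y$ away from the horizon: integrating the geodesic-type equation \eqref{eq:sds:redshift:extension} along the integral curves of $Y$ issuing from $\mathcal{C}^+$ (the vector $Y\onc$ being transverse to $\mathcal{C}^+$ and pointing into $\mathcal{R}^+$) produces, by standard ODE theory, a smooth vectorfield $Y$ in a one-sided neighborhood of $\mathcal{C}^+$ in $\overline{\mathcal{R}^+}$; by uniqueness and the $\mathrm{SO}(3)$- and $\varphi_t$-equivariance of \eqref{eq:sds:redshift:extension} together with \eqref{eq:sds:LieTY}, this $Y$ is spherically symmetric and satisfies $[T,Y]=0$. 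Consequently $K^Y[\psi]$, regarded as a quadratic form in $\nabla\psi$ with position-dependent coefficients, is invariant under the spherical isometries and under $\varphi_t$, so its coefficients are functions of the area radius $r$ alone. Fixing $\sigma>2(2/\rc)^2/\kC$, estimate \eqref{eq:sds:redshift:div} exhibits this form as positive definite at $r=\rc$ (where the frame $(T,Y,E_A)$ spans the tangent space); since its coefficients depend smoothly on $r$ and positive-definiteness is an open condition, it stays positive definite, hence in particular $\geq 0$, on $\{\rc\leq r\leq\rc+\epsilon_0\}$ for some $\epsilon_0>0$ depending only on $\Lambda$, $m$. I would then choose $r_0\in(\rc,\rc+\epsilon_0)$ so small that in addition $\Sigma_{r_0}\cap\Sigma\neq\emptyset$ and $g(N,N)<0$ throughout $\{\rc\leq r\leq r_0\}$ — the latter being possible because $g(N,N)\onc=2\,g(T,Y)\onc=-4$ by \eqref{eq:sds:T:Y} ($T$ and $Y$ being null on $\mathcal{C}^+$) while $g(N,N)$ depends only on $r$. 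Finally, since $\Sigma\subset\overline{\mathcal{R}^+}$ forces $\Sigma'\cup\mathcal{C}_0^+\subset\{r\geq\rc\}$ and $r$ is a time function on $\mathcal{R}^+$, every point of $J^+(\Sigma'\cup\mathcal{C}_0^+)\cap J^-(\Sigma_{r_0})$ satisfies $\rc\leq r\leq r_0$, and so $K^N[\psi]=K^Y[\psi]\geq 0$ there.

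For \eqref{eq:sds:prop:redshift:energy} I would use \eqref{eq:sds:dr} to write, on $\Sigma_{r_0}$, $M=\tfrac1r\dr=\tfrac{\phi}{r}\,n$, so that $J^M[\psi]\cdot n=\tfrac{\phi}{r}\,J^n[\psi]\cdot n=\tfrac{\phi}{r}\,T_{\mu\nu}[\psi]n^\mu n^\nu$ is a fixed positive multiple of the normal energy density at $r=r_0$. By the above choice of $r_0$, $N$ is future directed and \emph{timelike} on $\Sigma_{r_0}$ (on $\mathcal{C}^+$ it is the sum of the two future-directed null vectors $T\onc$ and $Y\onc$, and $g(N,N)<0$ nearby); as $n$ is future-directed timelike as well, the quadratic form $\nabla\psi\mapsto J^N[\psi]\cdot n=T_{\mu\nu}[\psi]N^\mu n^\nu$ is positive definite, and by the symmetries of the construction its coefficients are constant along $\Sigma_{r_0}$, depending only on $r_0$, $\Lambda$, $m$, $\sigma$. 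Since any two positive-definite quadratic forms on a finite-dimensional space are comparable, there is a constant $C(r_0)<\infty$ with $J^M[\psi]\cdot n\leq C(r_0)\,J^N[\psi]\cdot n$ pointwise on $\Sigma_{r_0}\cap J^+(\Sigma)$, which is \eqref{eq:sds:prop:redshift:energy}.

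The substance of the argument has effectively been discharged by \eqref{eq:sds:redshift:div}, so what remains is a matter of care rather than new ideas. I expect the main point to be bookkeeping the several smallness requirements on $r_0$ — positivity of $K^Y$, strict timelikeness of $N$, and $\Sigma_{r_0}\cap\Sigma\neq\emptyset$, which must all hold simultaneously on $\{\rc\leq r\leq r_0\}$ — and, relatedly, verifying that the causal region $J^+(\Sigma'\cup\mathcal{C}_0^+)\cap J^-(\Sigma_{r_0})$ is genuinely trapped inside $\{\rc\leq r\leq r_0\}$. The latter rests on the hypothesis $\Sigma\subset\overline{\mathcal{R}^+}$ — so that, in contrast to Proposition~\ref{prop:r:static}, none of the region dips into the static exterior, where the cosmological redshift supplies no favorable sign — and on $r$ being a time function in the expanding region.
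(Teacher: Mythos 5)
Your proposal is correct and follows essentially the same route as the paper: the paper establishes \eqref{eq:sds:prop:redshift:positivity} ``by continuity'' from the horizon computation \eqref{eq:sds:redshift:div} with $r_0-\rc$ small and $\sigma$ large, and \eqref{eq:sds:prop:redshift:energy} from the timelikeness of $N$ near $\mathcal{C}^+$ together with the $T$-invariance \eqref{eq:sds:local:Y:invariance} for uniformity of the constant, which is precisely the symmetry argument you spell out. Your write-up merely supplies the details (construction of $Y$ by the ODE, reduction of coefficients to functions of $r$, comparability of positive-definite forms) that the paper leaves implicit, and omits only the paper's optional alternative of verifying everything from the explicit formula \eqref{eq:sds:local:N:explicit}.
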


The positivity of the divergence \eqref{eq:sds:prop:redshift:positivity} follows by continuity from \eqref{eq:sds:redshift:div} with $r_0-\rc$ chosen small enough, and $\sigma$ as indicated above. Also by continuity $N$ remains timelike in a neighborhood of $\mathcal{C}^+$ for $r_0-\rc>0$ small enough, which implies \eqref{eq:sds:prop:redshift:energy}; note that the uniformity of the constant $C$ in $r$ follows from the invariance of $N$ under the flow generated by $T$, namely \eqref{eq:sds:local:Y:invariance}.

Alternatively, we can solve the O.D.E.~\eqref{eq:sds:redshift:extension} with initial conditions \eqref{eq:sds:local:nullframe:horizon} in the double null coordinates of Section \ref{sec:sds:dn}. If we set $\sigma=2\sq{\frac{2}{\rc}}\frac{2}{\kC}$, we obtain
\begin{multline}
  \label{eq:sds:local:N:explicit}
  N=\kC\Bigl[1-\frac{2}{\kC}\bigl(\frac{2}{\rc}\bigr)^3 uv\Bigr]u\pd{u}\\
  +\ic\biggl[\frac{1}{uv}-\frac{\kC}{\ic}-\frac{1}{\ic}\sq{\frac{2}{\rc}}\frac{2}{\kC}\Bigl[2+\frac{1}{4}\frac{\Lambda}{3}\bigl(\rc^2+\rH\vrcb\bigr)\Bigr]\biggr]v\pd{v}
  +\mathcal{O}(v^2)\,,
\end{multline}
which allows us to verify all the statements of Prop.~\ref{prop:sds:redshift} \emph{explicitly}; see \cite{vs:thesis} for details.

This completes the proof of Proposition \ref{prop:sds:local}.

\subsection{Limiting quantity}
\label{sec:limit}

While the \emph{bound} \eqref{eq:thm:sds:energy} follows from the global and local redshift estimates as discussed in Section~\ref{sec:r:integral}, in particular \eqref{eq:sds:overview:finiteintegral}, we remark that the \emph{existence} of the limit follows from a density argument. Any data $\psi\rvert_\Sigma\in\mathrm{H}^1(\Sigma)$ can be approximated by smooth functions $\underline{\psi}_j\in\mathrm{C}^\infty(\Sigma)$. Using the Sobolev inequality on the sphere at infinity (Lemma~\ref{lemma:sobolevembeddingsphere}), and commutations with the vectorfields $\Oi{i}$ \eqref{eq:def:Omegai}, we can show that all solutions $\psi_j$ to \eqref{eq:sds:wave} with initial data $\underline{\psi}_j$ on $\Sigma$ have a limit on $\Sigma^+$, or on $uv=1$ in the topology of $\mathbb{R}^2$ of the Penrose diagram in figure~\ref{fig:dn}, by integrating from $\Sigma$ to $\Sigma^+$ as discussed in Section~\ref{sec:sds:pointwise}. The trace of $\psi$ on $\Sigma^+$ is then approximated by the limit of $\psi_j$ in the norm defined by \eqref{eq:thm:sds:energy}, namely the homogeneous $\mathrm{H}^1$ norm on the cylinder $\mathbb{R}\times\mathbb{S}^2$. This completes the proof of Theorem~\ref{thm:sds}.

\subsection{Localisation}
\label{sec:sds:localisation}

In this section we show that quantitative energy decay along the cosmological horizon is propagated through the redshift region and inherited on the future boundary.

The observation that quantitative energy decay is preserved in a ``redshift region'' was first made by Dafermos in the context of the nonlinear Einstein-Maxwell scalar field model in spherical symmetry \cite{d:interior}. It was applied by Luk to the study of the wave equation in the language of vectorfield multipliers \cite{jl:improved}, who showed that any polynomial decay rate in the exterior of the Schwarzschild black hole is inherited on the event horizon (and in a neighborhood of the horizon in the interior of the black hole); it also appears in Dyatlov's argument \cite{sd:beyond} on exponential decay for solutions to the wave equation on Kerr de Sitter.

 The global redshift effect that is present in expanding spacetimes extends this mechanism beyond the vicinity of the cosmological horizon.
We give here a proof which avoids a recourse to a bootstrap argument, previously used in \cite{jl:improved}.

Let us first look at the local redshift effect near the cosmological horizon with arbitrary incoming energy flux.
Recall the notation introduced before Cor.~\ref{cor:decay}, and the vectorfield $N$ of Prop.~\ref{prop:sds:redshift}. 

\begin{prop}\label{prop:localise:local}
Given $\psi$ in the class of finite energy solutions to \eqref{eq:sds:wave} as discussed in Thm.~\ref{thm:sds}, let us define an incoming energy flux density $g$ by
\begin{equation}
  \label{eq:localise:def:g}
    \int_{\mathcal{C}^+_{\tau_1,\tau_2}}{}^\ast J^N[\psi]=\int_{\tau_1}^{\tau_2}g(\tau)\ud \tau\,,
\end{equation}
where $\mathcal{C}^+_{\tau_1,\tau_2}=\mathcal{C}^+_{\tau_1}\setminus\mathcal{C}^+_{\tau_2}$, for any $\tau_2>\tau_1$. (Note that the left hand side is finite for any $\tau_2>\tau_1$ by Prop.~\ref{prop:r:static}.)
Let $r_0>\rc$ be chosen according to Prop.~\ref{prop:sds:redshift}, and define
\begin{equation}
  \label{eq:localise:def:f}
  f(\tau)=\int_{\mathrm{C}_\tau^\prime}{}^\ast J^N[\psi]\,,
\end{equation}
where $\mathrm{C}^\prime_{\tau}=\mathrm{C}_\tau\cap J^-(\Sigma_{r_0})$ denotes the segment of the outgoing null hypersurface $\mathrm{C}_{\tau}$ in the past of $\Sigma_{r_0}$ (c.f.~figure~\ref{fig:localise}). Then
\begin{equation}
  \label{eq:localise:gronwall}
  f(\tau_2)\leq f(\tau_1)e^{-b(\tau_2-\tau_1)}+\int_{\tau_1}^{\tau_2}g(\tau)e^{-b(\tau_2-\tau)}\ud\tau\,,
\end{equation}
for all $\tau_2>\tau_1$, where $b>0$ is a constant that only depends on $\Lambda$, and $m$.
\end{prop}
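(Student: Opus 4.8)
The plan is to set up an energy identity for the current $J^N$ on the spacetime region $\mathcal{D}_{\tau_1,\tau_2}$ bounded in the past by the outgoing null segment $\mathrm{C}^\prime_{\tau_1}$ together with the horizon segment $\mathcal{C}^+_{\tau_1,\tau_2}$, in the future by $\mathrm{C}^\prime_{\tau_2}$, and along $\Sigma_{r_0}$, and then convert the divergence term into a decay mechanism. First I would apply the divergence theorem to $J^N$ on $\mathcal{D}_{\tau_1,\tau_2}$: the boundary contributions give $f(\tau_2)$ (flux through $\mathrm{C}^\prime_{\tau_2}$), $f(\tau_1)$ (flux through $\mathrm{C}^\prime_{\tau_1}$), $\int_{\tau_1}^{\tau_2}g(\tau)\,\ud\tau$ (incoming flux through $\mathcal{C}^+_{\tau_1,\tau_2}$), and a nonnegative flux through $\Sigma_{r_0}\cap\mathcal{D}$ which, by the timelike character of $N$ and $n$, can be dropped to obtain an inequality in the right direction. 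The bulk term is $\int_{\mathcal{D}}K^N[\psi]$, which is nonnegative by Prop.~\ref{prop:sds:redshift}, but I need more than nonnegativity: I need to bound it below by a multiple of the flux through outgoing null slices.

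The key step is therefore a \emph{coercivity} estimate showing that $K^N[\psi]\gtrsim$ (energy density appearing in ${}^\ast J^N$ restricted to $\mathrm{C}_\tau$), so that after slicing $\mathcal{D}_{\tau_1,\tau_2}$ by the outgoing null hypersurfaces $\mathrm{C}_\tau$ (using $\tau$ as the relevant parameter, with $T\cdot\tau=1$ on $\mathcal{C}^+$) and applying a coarea-type formula one gets
\begin{equation*}
  f(\tau_2)+b\int_{\tau_1}^{\tau_2}f(\tau)\,\ud\tau\leq f(\tau_1)+\int_{\tau_1}^{\tau_2}g(\tau)\,\ud\tau
\end{equation*}
for some $b>0$ depending only on $\Lambda$ and $m$. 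This is the form of a differential inequality $f^\prime(\tau)\leq -b f(\tau)+g(\tau)$ in the integrated sense, from which \eqref{eq:localise:gronwall} follows immediately by the integrating-factor argument (multiply by $e^{b\tau}$ and integrate, exactly as in the proof of Lemma~\ref{lemma:gronwall:df}). The coercivity comes from the explicit lower bound \eqref{eq:sds:redshift:div} on $K^Y$ on the horizon — which controls $\sq{Y\cdot\psi}$, $\sq{T\cdot\psi}$, and $\sqv{\nablab\psi}$ with strictly positive constants — extended by continuity to the slab $\rc\leq r\leq r_0$ provided $r_0-\rc$ is small; since $N=T+Y$ is timelike there and $\mathrm{C}_\tau$ is null, the flux ${}^\ast J^N$ on $\mathrm{C}_\tau$ is comparable to precisely these quantities, so the slab bulk integral dominates $\int f(\tau)\,\ud\tau$ uniformly. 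One must check that the numerical constant $b$ inherited from \eqref{eq:sds:kC} and the size of the neighborhood is indeed a function of $\Lambda,m$ only, which is the case because $\rc,\rH,\vrcb,\kC$ and the admissible $r_0$ all depend only on $\Lambda$ and $m$.

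The main obstacle I anticipate is the bookkeeping in the coarea/slicing step: one is foliating $\mathcal{D}_{\tau_1,\tau_2}$ by null hypersurfaces $\mathrm{C}_\tau$ rather than by the spacelike $\Sigma_r$, so the relation between $\ud\mu$ on $\mathcal{D}$, the induced measures on the $\mathrm{C}_\tau$, and $\ud\tau$ must be handled carefully (the analogue of \eqref{eq:sds:global:coarea} for a null foliation), and one must ensure the foliation parameter $\tau$ used in the bulk matches the $\tau$ used to define $f$ and $g$ via $T\cdot\tau|_{\mathcal{C}^+}=1$, up to the boundedness of $T$-related Jacobian factors on the compact slab. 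A secondary point is verifying that the flux of $J^N$ through $\mathrm{C}^\prime_\tau$ is genuinely nonnegative with the correct orientation (so that $f(\tau)\geq 0$ and the inequality signs are consistent) and that the $\Sigma_{r_0}$ boundary flux can be dropped; both follow from $N$ being future-timelike and the dominant energy property of $T[\psi]$, but they should be stated explicitly. The density argument from Thm.~\ref{thm:sds} ensures it suffices to prove the estimate for smooth compactly supported data and pass to the limit, so no regularity subtleties arise.
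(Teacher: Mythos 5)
Your proposal is correct and follows essentially the same route as the paper: the energy identity for $J^N$ on the region bounded by $\mathrm{C}^\prime_{\tau_1}\cup\mathcal{C}^+_{\tau_1,\tau_2}$, $\mathrm{C}^\prime_{\tau_2}$ and $\Sigma_{r_0}$ (dropping the nonnegative $\Sigma_{r_0}$ flux), a lower bound on the bulk term $K^N$ by $b\int_{\tau_1}^{\tau_2}f(\tau)\,\ud\tau$ via the null slicing by the $\mathrm{C}_\tau$ and the relation $T\cdot\tau=1$, and then the integrating-factor Gronwall step. The slicing bookkeeping you flag as the main obstacle is exactly what the paper resolves by writing $\dm{g}$ in double null coordinates and using $\partial\tau/\partial u\rvert_{\mathcal{C}^+}=\kC^{-1}u^{-1}$ together with the explicit bound \eqref{eq:sds:redshift:div}.
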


\begin{figure}
\includegraphics{}
\caption{Localisation of the global and local redshift effect.}
\label{fig:localise}
\end{figure}

\begin{rmk}
  If we have no incoming energy flux from the static region, $g=0$, then the energy $f$ decays exponentially and proportionally to the strength of the redshift effect as reflected in the constant $b>0$. In general, $g\geq 0$, the decay of $f$ cannot be stronger than that of the incoming energy flux. Note here however, that the second term in the upper bound is a convolution around $\tau=\tau_2$, and thus \emph{localises} the contribution from the incoming energy.
\end{rmk}

\begin{proof}
Consider the energy identity for $J^N$ on the domain bounded in the past by $\mathcal{C}^+_{\tau_1}\cup\mathrm{C}_{\tau_1}$ and in the future by $\mathrm{C}_{\tau_2}$ and $\Sigma_{r_0}$, $\tau_2>\tau_1$, namely
\begin{equation*}
  \label{eq:localise:energyidentity:full}
  \int_{\mathrm{C}^\prime_{\tau_2}}{}^\ast J^N +\int_{\Sigma_{r_0}^{(\tau_1,\tau_2)}}J^N\cdot n\:\dm{\gb{r_0}}+\int_{\mathcal{R}_{\tau_1,\tau_2}^{(r_0)}} K^N\dm{g}
  \leq\int_{\mathcal{C}^+_{\tau_1,\tau_2}}{}^\ast J^N +\int_{\mathrm{C}^\prime_{\tau_1}}{}^\ast J^N\,,
\end{equation*}
where $\Sigma_{r_0}^{(\tau_1,\tau_2)}=\Sigma_{r_0}\cap J^+(\mathrm{C}_{\tau_1})\cap J^-(\mathrm{C}_{\tau_2})$, and $\mathcal{R}_{\tau_1,\tau_2}^{(r_0)}=J^+(\mathcal{C}^+_{\tau_1}\cup\mathrm{C}_{\tau_1})\cap J^-(\mathrm{C}^\prime_{\tau_2}\cup\Sigma_{r_0}^{(\tau_1,\tau_2)})$.
Observe that in the double null coordinates of Section~\ref{sec:sds:dn},
\begin{equation*}
    \dm{g}=\frac{\Omega^2}{2}\,r^2\:\ud u\wedge\ud v\wedge\dm{\g}\,,
\end{equation*}
where $\Omega^2$ on $\mathcal{C}^+$ is a constant that merely depends on $\Lambda$ and $m$, 
\begin{equation*} 
  \Omega^2\onc=\frac{4}{\iota_\mathcal{C}\kC}\,,\qquad\iota_{\mathcal{C}}=\frac{2}{(\rc-\rH)^{\frac{\rH}{\rc}\frac{\rc+\vrcb}{\rH+\vrcb}}(\rc+\vrcb)^{\frac{\vrcb}{\rc}\frac{\rc-\rH}{\vrcb+\rH}}}\,.
\end{equation*}
Moreover, in view of \eqref{eq:sds:local:nullframe:horizon} and using \eqref{eq:sds:redshift:div},
\begin{equation*}
  K^N\onc\geq\min\Bigl\{\frac{1}{4}\kC\ic,2\bigl(\frac{2}{\rc}\bigr)^2\frac{\ic}{\kC}\Bigr\}\:\frac{1}{u}\,J^N\cdot \pd{v}\onc\,.
\end{equation*}
Therefore there exists a constant $b$ that only depends on $\Lambda$, and $m$, such that
\begin{equation*}
  \int_{\mathcal{R}_{\tau_1,\tau_2}^{(r_0)}}K^N\dm{g}\geq b(\Lambda,m)\int_{\tau_1}^{\tau_2}\ud\tau\int_{\mathrm{C}_{\tau}^\prime}{}^\ast J^N\,,
\end{equation*}
where we have used that by definition
\begin{equation*}
  \dd{u}{\tau}\onc=\frac{1}{\kC}\frac{1}{u}\,.
\end{equation*}
Thus the energy identity  takes the form
\begin{equation*}
    f(\tau_2)+b \int_{\tau_1}^{\tau_2}f(\tau)\, \ud\tau \leq
  \int_{\tau_1}^{\tau_2}g(\tau)\, \ud\tau +f(\tau_1)\,,\qquad(\tau_2>\tau_1)\,.
\end{equation*}
This is a Gronwall-type inequality which easily implies \eqref{eq:localise:gronwall}. Indeed, if we set
\begin{equation*}
  F(\tau_2,\tau_1)=\int_{\tau_1}^{\tau_2}\bigl\{b f(\tau)-g(\tau)\bigr\}\ud \tau\,,
\end{equation*}
then the inequality simply reads
\begin{equation*}
    f(\tau_2)+F(\tau_2,\tau_1)\leq f(\tau_1)\,.
\end{equation*}
Furthermore,
\begin{equation*}
  -\pd{\tau}\Bigl[F(\tau_2,\tau)e^{-b(\tau_2-\tau)}\Bigr]\geq-b\bigl[-f(\tau_2)\bigr]e^{-b(\tau_2-\tau)}-g(\tau)e^{-b(\tau_2-\tau)}\,,
\end{equation*}
so by integration,
\begin{equation*}
  F(\tau_2,\tau_1)e^{-b(\tau_2-\tau_1)}\geq f(\tau_2)\bigl(1-e^{-b(\tau_2-\tau_1)}\bigr)-\int_{\tau_1}^{\tau_2}g(\tau)e^{-b(\tau_2-\tau)}\ud\tau
\end{equation*}
which yields \eqref{eq:localise:gronwall} after rearranging and using the above inequality once again.
\end{proof}

Let us now complete the proof of Corollary~\ref{cor:decay}.

\begin{proof}[Proof of Cor.~\ref{cor:decay}]
In a first step we apply our Prop.~\ref{prop:localise:local} to the local redshift region.
Under the assumption \eqref{eq:cor:decay:assumption} we are in the situation that the incoming energy flux density is bounded from above by
\begin{equation*}
  g(\tau)=\frac{k}{\tau^{k+1}}\qquad(k\in\mathbb{N})\,,
\end{equation*}
and similary under the assumption of exponential decay by
\begin{equation*}
  g(\tau)=\beta e^{-\beta\tau}\qquad(\beta>0)\,.
\end{equation*}
Now we have, on one hand,
\begin{multline*}
  \int_{\tau_1}^{\tau_2}\frac{k}{\tau^{k+1}}e^{-b(\tau_2-\tau)}\ud\tau\leq
  \int_{\tau_1^\prime}^\infty\frac{k}{\tau^{k+1}}\ud \tau+e^{-b(\tau_2-\tau_1^\prime)}\int_{\tau_1}^{\tau_1^\prime}\frac{k}{\tau^{k+1}}\ud\tau\leq\\
  \leq\frac{1}{{\tau_1^\prime}^k}+e^{-\alpha\tau_2}\frac{1}{{\tau_1}^k}
  =\frac{1}{(1-\frac{\alpha}{b})^k}\frac{1}{\tau_2^k}+e^{-\alpha\tau_2}\frac{1}{\tau_1^k}\,,
\end{multline*}
where we have chosen $\tau_1^\prime\in(\tau_1,\tau_2)$ such that $\tau_2-\tau_1^\prime=\frac{1}{b}\alpha\,\tau_2$ for some $\alpha\in(0,b)$,
and on the other hand,
\begin{equation*}
  \int_{\tau_1}^{\tau_2}\beta e^{-\beta\tau} e^{-b(\tau_2-\tau)}\ud \tau\leq
  \begin{cases}
    \frac{\beta}{b-\beta}e^{-\beta\tau_2}&,\,\beta<b\\
    \beta (\tau_2-\tau_1)e^{-b\tau_2}&,\,\beta=b\\
    \frac{\beta}{\beta-b}e^{-b(\tau_2-\tau_1)}&,\,\beta>b
  \end{cases}\,.
\end{equation*}
We conclude by Prop.~\ref{prop:localise:local} if \eqref{eq:cor:decay:assumption} holds then also
\begin{equation*}
  \int_{\mathrm{C}_{\tau_2}^\prime}{}^\ast J^N[\psi]\leq\frac{C}{\tau_2^k}+e^{-b(\tau_2-\tau_1)}\int_{\mathrm{C}_{\tau_1}^\prime}{}^\ast J^N[\psi]\qquad(\tau_2>\tau_1)\,,
\end{equation*}
and similarly if \eqref{eq:cor:decay:assumption:exp} holds,
then there exists a constant $C$ (we assume for simplicity $\beta\neq b$) such that also
\begin{equation*}
  \int_{\mathrm{C}_{\tau_2}^\prime}{}^\ast J^N[\psi]\leq 
  C\:e^{-\min\{\beta,b\}\,\tau_2}+e^{-b(\tau_2-\tau_1)}\int_{\mathrm{C}_{\tau_1}^\prime}{}^\ast J^N[\psi]\qquad(\tau_2>\tau_1)\,.
\end{equation*}

In a second step we apply our Prop.~\ref{prop:sds:globalredshift} to the global redshift region. Here we \emph{localise} the argument of Section~\ref{sec:sds:globalredshift} to the domain of dependence of $\mathcal{C}^+_\tau\cup\mathrm{C}^\prime_\tau$ (as indicated in figure~\ref{fig:localise}).
Let $r_0$ be chosen as above, and denote for $r\geq r_0$ by
\begin{equation*}
  \Sigma_{r}^\prime = \bigl(\Sigma_r\cap J^+(\mathrm{C}_\tau) \bigr) \setminus J^+(\mathrm{C}_\tau\setminus \mathrm{C}_\tau^\prime)
\end{equation*}
the segment of $\Sigma_r$ in the domain of dependence of $\mathcal{C}_\tau^+\cup\mathrm{C}^\prime_\tau$ for any given $\tau>\tau_0$.
Then, as discussed in Section~\ref{sec:sds:globalredshift}, c.f.~in particular \eqref{eq:sds:global:weighted}, 
\begin{equation*}  
  r_2\int_{\Sigma_{r_2}^\prime}J^M\cdot n\,\dm{\gb{r_2}}\leq r_1\int_{\Sigma_{r_1}^\prime}J^M\cdot n\,\dm{\gb{r_1}}\qquad(r_2>r_1\geq r_0)\,;
\end{equation*}
note here that the vectorfield $M$ is \emph{timelike} and hence gives rise to a positive boundary term on the null segment of the future boudary of the domain of dependence of $\mathcal{C}^+_\tau\cup\mathrm{C}^\prime_\tau$, which preserves the inequality \eqref{eq:sds:global:coarea} with $\Sigma_r$ replaced by $\Sigma_r^\prime$. Next we choose $r_1=r_0$ and consider the energy identity for $J^N$ on the domain bounded in the past by $\mathcal{C}^+_\tau\cup\mathrm{C}^\prime_\tau$ and in the future by $\Sigma_{r_0}$, as in the proof of Prop.~\ref{prop:localise:local}. By Prop.~\ref{prop:sds:redshift} we then have
\begin{equation*}
  r_0\int_{\Sigma_{r_0}^\prime}J^M\cdot n\,\dm{\gb{r_o}}\leq
  C\int_{\mathcal{C}^+_\tau\cup\mathrm{C}^\prime_\tau}{}^\ast J^N\,,
\end{equation*}
which \emph{decays} according to our assumption and what is shown above.
We may now take the limit $r_2\to\infty$ as discussed in Section~\ref{sec:r:integral} to obtain the statement of the Corollary with $\tau$ replaced by $\tau^\prime$, where $\Sigma_{\tau^\prime}^+=(\Sigma^+\cap J^+(\mathrm{C}_\tau))\setminus J^+(\mathrm{C}_\tau\setminus\mathrm{C}_\tau^\prime)$. 
It remains to show that $\tau^\prime$ and $\tau$ are related linearly, which is an immediate consequence of geometric relations described in Section~\ref{sec:sds:dn}. For given values of $u$ and $r_0$ the incoming null hypersurface originating from the sphere at $(u,r_0)$ is
\begin{equation*}
  v=\frac{h_0}{u}\,,\qquad h_0=\frac{r_0-\rc}{(r_0-\rH)^{\frac{\rH}{\rc}\frac{\rc+\vrcb}{\rH+\vrcb}}(r_0+\vrcb)^{\frac{\vrcb}{\rc}\frac{\rc-\rH}{\vrcb+\rH}}}\,,
\end{equation*}
which terminates on $\Sigma^+$ at the sphere $(u^\prime,v)$, where
\begin{equation*}
  u^\prime=\frac{1}{v}=\frac{1}{h_0}\:u\,.
\end{equation*}
Therefore, since $T\cdot\tau=1$, also
\begin{equation*}
  \tau^\prime=\tau+\frac{1}{\kC}\int_{u}^{u^\prime}\frac{1}{u}\:\ud u=
  \tau+\frac{1}{\kC}\log\frac{1}{h_0}\,.\qedhere
\end{equation*}

\end{proof}

\subsection{Pointwise estimates}
\label{sec:sds:pointwise}

The usefulness of the foliation $(\Sigma_r)$ is revealed in the Sobolev inequality for $\Sigma_r$ which allows us to pass immediately from energy estimates to pointwise estimates using the symmetries of the spacetimes $(\M,g)$.

\begin{prop}[Sobolev inequality on $\Sigma_r$]\label{prop:sds:sobolev}
Let $\psi\in\mathrm{H}^3(\Sigma_r)$, $r>\rc$, then
\begin{multline*}
  \sqrt{\frac{\Lambda r^2}{3}+\frac{2m}{r}-1}\:\:r^2\:\sup_{\Sigma} \psi^2\leq\\
  \leq C \int_{\Sigma}\biggl\{\psi^2+\sum_{i=1}^3\bigl(\Oi{i}\psi\bigr)^2+\sum_{i,j=1}^3\bigl(\Oi{i}\Oi{j}\psi\bigr)^2\\
  +\bigl(T\psi\bigr)^2+\sum_{i=1}^3\bigl(\Oi{i}T\psi\bigr)^2+\sum_{i,j=1}^3\bigl(\Oi{i}\Oi{j}T\psi\bigr)^2\biggr\}\dm{\gb{r}}\,.
\end{multline*}
on any submanifold $\Sigma\subset\Sigma_r$ of the form $I\times\mathrm{SO}(3)$, where $I\subset\mathbb{R}$ is an open interval with $\lvert I \rvert\geq 1$.
\end{prop}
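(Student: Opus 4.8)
The plan is to exploit the fact that on $\Sigma_r$ the area radius --- and hence the lapse $\phi$ --- is constant, so that by \eqref{eq:sds:inducedmetric} the induced metric $\gb{r}=\phi^{-2}\ud t^2+r^2\g$ has constant coefficients, and the factor $\sqrt{\frac{\Lambda r^2}{3}+\frac{2m}{r}-1}\,r^2=r^2/\phi$ appearing on the left is precisely the constant relating the two volume forms, $\dm{\gb{r}}=(r^2/\phi)\,\dm{\cg}$ with $\cg=\ud t^2+\g$ (this is \eqref{eq:sds:overview:volumeform}). Dividing the asserted inequality through by $r^2/\phi$ reduces it to the \emph{$r$-independent} estimate
\begin{multline*}
  \sup_{\Sigma}\psi^2\leq C\int_{\Sigma}\Bigl\{\psi^2+\sum_{i=1}^3(\Oi{i}\psi)^2+\sum_{i,j=1}^3(\Oi{i}\Oi{j}\psi)^2\\
  +(T\psi)^2+\sum_{i=1}^3(\Oi{i}T\psi)^2+\sum_{i,j=1}^3(\Oi{i}\Oi{j}T\psi)^2\Bigr\}\dm{\cg}\,,
\end{multline*}
which I would prove on the standard cylinder $(\mathbb{R}\times\mathbb{S}^2,\cg)$ for $\Sigma=I\times\mathbb{S}^2$ with $\lvert I\rvert\geq 1$ (the coordinate $t$ being unchanged, $T=\pd{t}$, and the $\Oi{i}$ the rotation fields of the unit sphere). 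In particular all $r$-dependence, and any question of uniformity of the constant in $r$, has disappeared at this point.

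I would then argue fibrewise over the $\mathbb{S}^2$-factor. Setting
\begin{equation*}
  h(t)\doteq\int_{\mathbb{S}^2}\Bigl\{\psi^2+\sum_{i=1}^3(\Oi{i}\psi)^2+\sum_{i,j=1}^3(\Oi{i}\Oi{j}\psi)^2\Bigr\}(t,\cdot)\,\dm{\g}\,,
\end{equation*}
and $h_T$ analogously with $\psi$ replaced by $T\psi$, the coercivity identities on the sphere (Appendix~\ref{a:coercivity}) --- namely $\sum_{i=1}^3(\Oi{i}\psi)^2=\lvert\nablabc\psi\rvert^2$, together with the fact that the double commutators $\Oi{i}\Oi{j}\psi$ recover the second covariant angular derivatives modulo first-order terms already accounted for --- give $h(t)\geq c\,\lVert\psi(t,\cdot)\rVert_{\mathrm{H}^2(\mathbb{S}^2)}^2$ for a universal constant $c>0$. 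Hence, by the Sobolev embedding $\mathrm{H}^2(\mathbb{S}^2)\hookrightarrow\mathrm{C}^0(\mathbb{S}^2)$,
\begin{equation*}
  \sup_{\{t\}\times\mathbb{S}^2}\psi^2\leq C\,h(t)\qquad(t\in I)\,,
\end{equation*}
the restrictions $\psi(t,\cdot)$ being well-defined elements of $\mathrm{H}^2(\mathbb{S}^2)$ since $\psi\in\mathrm{H}^3(\Sigma_r)$.

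Finally I would run a one-dimensional Sobolev estimate in $t$ for the scalar function $h$. Because $[T,\Oi{i}]=0$ and $T=\pd{t}$ commutes with $\pd{t}$, differentiating under the integral sign gives $h'(t)=2\int_{\mathbb{S}^2}\{\psi\,T\psi+\sum_{i=1}^3(\Oi{i}\psi)(\Oi{i}T\psi)+\sum_{i,j=1}^3(\Oi{i}\Oi{j}\psi)(\Oi{i}\Oi{j}T\psi)\}\,\dm{\g}$, so that $\lvert h'(t)\rvert\leq 2\sqrt{h(t)\,h_T(t)}\leq h(t)+h_T(t)$ by Cauchy--Schwarz. For any $t_\ast,t\in I$ this yields $h(t_\ast)\leq h(t)+\int_I(h+h_T)\,\ud t$; averaging over $t\in I$ and using $\lvert I\rvert\geq 1$ gives $h(t_\ast)\leq 2\int_I(h+h_T)\,\ud t$. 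Combined with the fibrewise bound this produces $\sup_{\Sigma}\psi^2\leq C\int_I(h+h_T)\,\ud t$, which is exactly $C$ times the integral in the reduced inequality; multiplying back by $r^2/\phi$ recovers the proposition.

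The step that requires care --- indeed the reason the statement is anisotropic --- is this last one: the right-hand side carries only one $T$-derivative, so the naive isotropic embedding $\mathrm{H}^2(\Sigma_r)\hookrightarrow\mathrm{C}^0$ (which would require $T^2\psi$) is unavailable, and one is forced into the scheme of full $\mathrm{H}^2$ control on the spherical fibres (whence the second-order angular commutators appearing in the statement) combined with only $\mathrm{H}^1$ control in $t$. The hypothesis $\lvert I\rvert\geq 1$ is precisely what makes the $t$-averaging go through with a constant independent of $I$ (and of $r$).
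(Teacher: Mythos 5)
Your proposal is correct and follows essentially the same route as the paper: an anisotropic Sobolev argument combining full $\mathrm{H}^2$ control on the spherical fibres (via the coercivity identities for the $\Oi{i}$ and the embedding $\mathrm{H}^2(\mathbb{S}^2)\hookrightarrow\mathrm{C}^0$, i.e.~Corollary~\ref{cor:sobolev:spherer}) with one derivative in $t$ and an averaging over $I$ using $\lvert I\rvert\geq 1$ to choose the base point. The only difference is organizational --- the paper applies the fundamental theorem of calculus to $\psi^2(t;\xi)$ pointwise in $\xi$ before invoking the spherical Sobolev inequality, whereas you integrate over the sphere first and run the $t$-argument on $h(t)$ --- which does not change the substance.
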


The angular derivatives on the sphere are here estimated using the generators of the spherical isometries $\Oi{i}:i=1,2,3$. This fact is known as the coercivity inequality on the sphere which is discussed in Appendix \ref{a:coercivity}, where also the precise definition of the vectorfields $\Oi{i}$ can be found.

\begin{proof}
Recall the induced metric \eqref{eq:sds:inducedmetric} on $\Sigma_r$.
Let $t\,,\ t_0\in I$, and $\xi\in\mathbb{S}^2$, then
\begin{equation*}
  \lvert\psi^2(t;\xi)-\psi^2(t_0;\xi)\rvert\leq 2\int_{t_0}^t\lvert\psi\rvert\lvert\dd{t}{\psi}\rvert\ud t\,.
\end{equation*}
Therefore by Corollary \ref{cor:sobolev:spherer}, there exists a constant $C>0$ such that 
\begin{multline*}
  \lvert r^2 \psi^2(t)-r^2 \psi^2(t_0)\rvert\leq\\
  \leq C\int_I\int_{S_r}\lvert\psi\rvert^2+\sum_{i=1}^3\Bigl(\Oi{i}\psi\Bigr)^2+\sum_{i,j=1}^3\Bigl(\Oi{i}\Oi{j}\psi\Bigr)^2\dm{\gamma_r}\ud t\\
  +C\int_I\int_{S_r}\lvert T\psi\rvert^2+\sum_{i=1}^3\Bigl(\Oi{i}T\psi\Bigr)^2+\sum_{i,j=1}^3\Bigl(\Oi{i}\Oi{j}T\psi\Bigr)^2\dm{\gamma_r}\ud t\,.
\end{multline*}
Now choose $(t_0,\xi)\in I\times\mathbb{S}^2$ such that
\begin{equation*}
  r^2\psi^2(t_0;\xi)\leq\frac{1}{4\pi\lvert I\rvert}\int_I\int_{S_r}\psi^2\:\dm{\gamma_r}\ud t\,,
\end{equation*}
and we obtain the stated inequality in view of the expression \eqref{eq:sds:overview:volumeform} for the volume form of $\gb{r}$.
\end{proof}

The application of Prop.~\ref{prop:sds:sobolev} to the functions  $\phi^\frac{3}{2}T\cdot\psi$ and $\phi^\frac{1}{2}\frac{1}{r}\Oi{i}\psi:i=1,2,3$ yields quantities on the right hand side which are monotone by our results of Section~\ref{sec:sds:globalredshift}, provided $\psi$ is a solution to the wave equation \eqref{eq:sds:wave}. Indeed, we have
\begin{multline}
  \sup_{\Sigma_r}\biggl\{(r\phi)^2\,(T\psi)^2+r^2\lvert\nablab\psi\rvert_{\gamma_r}^2\biggr\}\leq\displaybreak[1]\\
  \leq C \int_{\Sigma_r}\biggl\{\phi^2\sq{T\psi}+\sqv{\nablab\psi}+\sum_{i=1}^3\phi^2\bigl(T\Oi{i}\psi\bigr)^2+\sum_{i=1}^3\sqv{\nablab\Oi{i}\psi}\\+\sum_{i,j=1}^3\phi^2\bigl(T\Oi{i}\Oi{j}\psi\bigr)^2+\sum_{i,j=1}^3\sqv{\nablab\Oi{i}\Oi{j}\psi}\biggr\}\,\phi\:\dm{\gb{r}}\displaybreak[0]\\
+C\int_{\Sigma_r}\biggl\{\phi^2\bigl(T^2\psi\bigr)^2+\sqv{\nablab T\psi}+\sum_{i=1}^3\phi^2\bigl(T^2\Oi{i}\psi\bigr)^2+\sum_{i=1}^3\sqv{\nablab T\Oi{i}\psi}\\+\sum_{i,j=1}^3\phi^2\bigl(T^2\Oi{i}\Oi{j}\psi\bigr)^2
  +\sum_{i,j=1}^3\sqv{\nablab T\Oi{i}\Oi{j}\psi}\biggr\}\,\phi\:\dm{\gb{r}}\,,
\end{multline}
which is bounded by Prop.~\ref{prop:r:expanding}, because $T$ and $\Oi{i}$ are Killing vectorfields, and thus $\Oi{i}\psi$, $\Oi{i}\Oi{j}\psi$, and $T\psi$, $T\Oi{i}\psi$, $T\Oi{i}\Oi{j}\psi$ are solutions to the wave equation. 

In view of
\begin{equation*}
  \lim_{r\to\infty}\: \sq{r\phi}=\frac{3}{\Lambda}\,,\qquad
  r^2\sqv{\nablab\psi}=\sqv{\nablabc\psi}\,,
\end{equation*}
the pointwise estimate \eqref{eq:thm:sds:pointwise} is now obtained analogously to Thm.~\ref{thm:sds} as discussed in Section~\ref{sec:r:integral}.

We may also apply Prop.~\ref{prop:sds:sobolev} to $\frac{1}{\phi}\dd{r}{\psi}\phi^\frac{1}{2}$ which yields by Prop.~\ref{prop:r:expanding} a monotone quantity. In fact we obtain,
\begin{equation}
  (r\phi)^{-2}\sup_{\Sigma_r}\sq{\dd{r}{\psi}}\leq\frac{C}{r^4}D_c[\psi]\,,
\end{equation}
which implies by integration from $\Sigma$ along $t$-constant lines the final bound \eqref{eq:thm:pointwise:psi}.

This completes the proof of Corollary~\ref{thm:sds:pointwise}.

\section{Applications to a general class of expanding spacetimes}
\label{sec:general}

In this Section we demonstrate that the decay mechanism manifested in the global redshift effect is sufficiently stable to extend to a general class of expanding spacetimes close to the Schwarzschild de Sitter cosmology.

The level of generality is comparable to the boundedness result of Dafermos and Rodnianski for the wave equation on a class of backgrounds close to the Schwarzschild spacetime \cite{dr:kerr:bounded}. Here however, \emph{no symmetries} are required in the expanding region, and \emph{no smallness assumption on the angular momentum} of the black hole is made (as long as it remains subextremal).

We consider a general class of $\mathrm{C}^1$-metrics with the global causal geometry of a Schwarzschild de Sitter metric and with the property that  the deformation tensor of the global redshift vectorfield decays sufficiently fast towards the future boundary.

\begin{defn}\label{def:appl:class}
Let $\Lambda>0$, and $0<3m\sqrt{\Lambda}<1$ be fixed, and denote the corresponding Schwarzschild de Sitter metric simply by $g_m$. Recall $g_m$ is here defined on a domain $\mathcal{D}=\mathcal{S}\cup\mathcal{C}\cup\mathcal{R}\cup\mathcal{\bar{C}}\cup\mathcal{S}$ covered by a single chart discussed in Section~\ref{sec:sds:dn}. Moreover in Section~\ref{sec:sds:foliation} we introduced an additional chart for the expanding region $\mathcal{R}^+$, (where $(\theta,\phi)$ are polar coordinates on the sphere):
\begin{equation}
  \label{eq:appl:chart:R}
  \Bigl\{(-\infty,\infty)\times(\rc,\infty)\times(0,\pi)\times(0,2\pi)\:,(t,r,\theta,\phi)\Bigr\}
\end{equation}
Let $\delta>0$ and $g$ be a  Lorentzian $\mathrm{C}^1$-metric defined on the fixed differentiable structure of $(\mathcal{D},g_m)$. We say {\boldmath{$g\in\mathcal{G}^{(\delta)}$}} if
\begin{enumerate}
\item  $g$ converges uniformly in a $\mathrm{C}^0$-sense to $g_m$ as $\Sigma^+$ is approached, such that
  \begin{gather*}
    \frac{\lvert g_{\mu\nu}-(g_m)_{\mu\nu}\rvert}{\sqrt{\lvert (g_m)_{\mu\mu}\rvert\lvert (g_m)_{\nu\nu}\rvert}}=\mathcal{O}\Bigl(\frac{1}{r^\delta}\Bigr)\displaybreak[0]\\
    \frac{\lvert (g^{-1})^{\mu\nu}-(g_m^{-1})^{\mu\nu}\rvert}{\sqrt{\lvert (g_m^{-1})^{\mu\mu}\rvert\lvert (g_m^{-1})^{\nu\nu}\rvert}}=\mathcal{O}\Bigl(\frac{1}{r^\delta}\Bigr)
  \end{gather*}
  with respect to the coordinate chart \eqref{eq:appl:chart:R}, \footnote{The condition as stated simply means that $g-g_m$ evaluated on the \emph{unit frame} $(E_0,E_1,E_2,E_3)$ --- with $E_0$, $E_1$, $E_2$, $E_3$ colinear to $\partial_r$, $\partial_t$, $\partial_\theta$, $\partial_\phi$, respectively, and unit w.r.t to $g_m$ ---  has the prescribed asymptotics.}
\item $g$ converges uniformly as a $\mathrm{C}^1$-metric to $g_m$ as $\Sigma^+$ is approached in the sense that the deformation tensor of the vectorfield 
  \begin{equation*}
    M^\prime=\frac{\partial}{\partial r}\,,\qquad {}^\prime\pi=\frac{1}{2}\mathcal{L}_{M^\prime} g\,,
  \end{equation*}
  decays sufficiently fast to its counterpart with respect to $g_m$, namely:
  \begin{gather*}
    {}^\prime\pi_m=\frac{1}{2}\mathcal{L}_{M^\prime}g_m\\
    \frac{\lvert {}^\prime\pi_{\mu\nu}-({}^\prime\pi_m)_{\mu\nu}\rvert}{\sqrt{\lvert(g_m)_{\mu\mu}\rvert\lvert(g_m)_{\nu\nu}\rvert}}=\mathcal{O}\Bigl(\frac{1}{r^{1+\delta}}\Bigr)\,,
  \end{gather*}
  with respect to the chart \eqref{eq:appl:chart:R}.

  \item $\mathcal{C}^+$ and $\bar{\mathcal{C}}^+$ are null hypersurfaces for $g$ which are generated by a Killing vectorfield $V$ with
    \begin{equation*}
      \nabla_V V=\kappa \,V\,,\qquad\kappa>0\,;
    \end{equation*}
i.e.~$\mathcal{C}^+$ and $\bar{\mathcal{C}}^+$ are Killing horizons with positive surface gravity. 
Moreover $V$ extends smoothly to a complete vectorfield $V$ on a neighborhood of $\mathcal{C}^+\cup\bar{\mathcal{C}}^+$ such that the flow along its integral curves has the following uniformity property (c.f.~Epilogue in \cite{dr:c}):
Let $\varphi_t$ be the 1-parameter group of diffeomorphisms generated by $V$, and $\Sigma=\Sigma_1\cup\Sigma_2$ be compact spacelike hypersurface segments such that $\Sigma_1\cap\mathcal{C}^+\neq\emptyset$, $\Sigma_2\cap\bar{\mathcal{C}}^+\neq\emptyset$, and $\min_{\Sigma_i}r=\rc-2\epsilon$, $\max_{\Sigma_i}r=\rc+2\epsilon$, $i=1,2$, $\epsilon>0$; then for all $t\geq 0$,
\begin{equation*}
  [\rc-\epsilon,\rc+\epsilon]\subset\Bigl(\inf_{\Sigma_t}r,\sup_{\Sigma_t}r\Bigr)\,,\quad\text{where }\Sigma_t=\varphi_t\bigl(\Sigma\bigr)\,.
\end{equation*}

\item Finally $g$ is time-oriented in the sense that there exists a vectorfield $M_0$ to the future of $\mathcal{C}^+\cup\bar{\mathcal{C}}^+$ such that:
\begin{equation*}
  g(M^\prime,M_0)<0\qquad\text{: on }\mathcal{R}^+\,.
\end{equation*}
    \end{enumerate}
    We refer to (1) and (2) as \emph{asymptotic closeness} conditions, and (3) and as a \emph{nondegenerate horizon} condition.
\end{defn}

The single most important example of a family of elements in that class are the \emph{Kerr de Sitter} spacetimes \cite{carter:lecture}. These spacetimes are not spherically symmetric, but they do retain two Killing fields $T$, and $\Phi$, and the metric $g$ decomposes into a metric $f$ on the group orbits of $T$, $\Phi$, and a metric $h$ on the orthogonal surfaces:
\begin{equation}
  \label{eq:kerrdesitter}
  g=f+h
\end{equation}
In \emph{Boyer-Lindquist}-type coordinates $(t,r,\theta,\phi)$ where $T=\partial_t$ and $\Phi=\partial_\phi$ we have
\begin{equation}
  \label{eq:kerrdesitter:f}
  f=\sin^2\theta\,\frac{\Delta_\theta}{\rho^2}\,\Bigl(a\,\ud t-\frac{r^2+a^2}{\Delta_0}\,\ud\phi\Bigr)^2-\frac{\Delta_r}{\rho^2}\Bigl(\ud t-\frac{a\sin^2\theta}{\Delta_0}\,\ud\phi\Bigr)^2\,,
\end{equation}
and
\begin{equation}
  \label{eq:kerrdesitter:h}
  h=\rho^2\Bigl[\frac{1}{\Delta_r}\,\ud r^2+\frac{1}{\Delta_\theta}\,\ud\theta^2\Bigr]\,,
\end{equation}
where we adopt the conventional notation,
\begin{subequations}\label{eq:kerrdesitter:notation}
  \begin{gather}
   \Delta_r=\bigl(r^2+a^2\bigr)\Bigl(1-\frac{\Lambda}{3}r^2\Bigr)-2mr\\
  \Delta_\theta=1+\frac{\Lambda}{3}\,a^2\,\cos^2\theta\\
  \Delta_0=1+\frac{\Lambda}{3}\,a^2\\
  \rho^2=r^2+a^2\cos^2\theta\,.
\end{gather}
\end{subequations}
Here $\lvert a\rvert>0$ quantifies the deviation from spherical symmetry, or physically speaking the rotation of the black hole. Note in particular that \eqref{eq:kerrdesitter} reduces to the Schwarzschild de Sitter metric in the case where $a=0$:
\begin{equation}
  \label{eq:appl:kerr:schwarzschild}
  g\rvert_{a=0}=g_m\,.
\end{equation}

\begin{prop}\label{prop:kerr:G}
  The members of the Kerr de Sitter family $(\mathcal{D},g_a)$ with $a$ in the subextremal range are contained in the general class of expanding cosmologies specified in Definition~\ref{def:appl:class}. In fact,
  \begin{equation*}
    g_a\in\mathcal{G}^{(2)}\,.
  \end{equation*}
\end{prop}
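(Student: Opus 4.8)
The plan is to exhibit an explicit chart of the form \eqref{eq:appl:chart:R} in which the Kerr de Sitter metric is manifestly close to a Schwarzschild de Sitter metric, and then to verify the four items of Definition~\ref{def:appl:class} by an asymptotic expansion in the area radius. The essential first observation is that the Boyer--Lindquist coordinates of \eqref{eq:kerrdesitter:f}--\eqref{eq:kerrdesitter:h} are \emph{not} of the required type: a short computation of the leading behaviour as $r\to\infty$ gives $g_{\phi\phi}=\frac{r^2\sin^2\theta}{\Delta_0}+\mathcal{O}(1)$, $g_{\theta\theta}=\frac{r^2}{\Delta_\theta}+\mathcal{O}(1)$ and $g_{t\phi}=\mathcal{O}(r^2)$, so that the conformal geometry at $\Sigma^+$ is a squashed, and rotating, rather than a round sphere, and item~(1) fails identically in these coordinates. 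I would therefore first pass to ``asymptotically de Sitter'' coordinates, keeping the letters $(t,r,\theta,\phi)$ for the new chart: the azimuthal Killing parameter is shifted affinely, $\phi\mapsto\phi+c\,t$ with $c$ fixed so as to remove the rotation of $\Sigma^+$, and $(\tilde r,\tilde\theta)$ (the old radial and polar coordinates) is replaced by the oblate--spheroidal straightening determined by $\Delta_0\,r^2=\tilde r^2+a^2\sin^2\tilde\theta+\tfrac{\Lambda}{3}a^2\tilde r^2\cos^2\tilde\theta$ and $r^2\cos^2\theta=\tilde r^2\cos^2\tilde\theta$ --- the de Sitter analogue of the transformation relating Boyer--Lindquist to asymptotically flat coordinates in Kerr. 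Using the subextremal root structure of $\Delta_r$ (a simple largest root $\tilde r=\tilde r_{\mathcal C}(a)>0$, with $\Delta_r<0$ beyond it, all of this stable in the subextremal range) one checks that this is a diffeomorphism of the expanding region together with a collar of $\mathcal{C}^+\cup\bar{\mathcal{C}}^+$ onto a domain of the form \eqref{eq:appl:chart:R}; the comparison metric $g_m$ is then the Schwarzschild de Sitter solution with the same $\Lambda$ and with the ``conformal'' mass $\bar m=m/\Delta_0^2$ --- the mass of the asymptotic Schwarzschild de Sitter geometry actually seen from $\Sigma^+$.

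In these coordinates every component of $g_a$ and of $g_a^{-1}$ is an explicit rational function of $r$ with $\theta$-dependent coefficients; the case $m=0$ reproduces de Sitter exactly, and one finds that the $m$-term agrees at leading order in $1/r$ with the Schwarzschild de Sitter correction of mass $\bar m$, so carrying the expansion one order further shows that the frame-normalised difference $g_a-g_m$, and likewise $g_a^{-1}-g_m^{-1}$, is $\mathcal{O}(1/r^2)$ in the sense of Definition~\ref{def:appl:class}(1): this is $\delta=2$. For item~(2) one notes that $M^\prime=\partial/\partial r$ has constant components in the new chart, so ${}^\prime\pi=\tfrac12\partial_r g_a$ and ${}^\prime\pi_m=\tfrac12\partial_r g_m$ componentwise; since $g_a-g_m$ is a rational function of $r$ which is $\mathcal{O}(1/r^2)$ in the normalised sense, differentiation in $r$ lowers the order by one and produces $\mathcal{O}(1/r^3)=\mathcal{O}(1/r^{1+\delta})$, as required. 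The one genuinely new subtlety here is the off-diagonal component $g_{r\theta}$, absent from $g_m$ and generated by the oblate straightening; one checks it enters at order $\mathcal{O}(1/r^2)$ relative to $\sqrt{\lvert(g_m)_{rr}\rvert\lvert(g_m)_{\theta\theta}\rvert}$ and hence, after differentiation, still at the permitted order.

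Items~(3) and~(4) are essentially classical facts about Kerr de Sitter in the subextremal range. The cosmological horizon $\{\tilde r=\tilde r_{\mathcal C}(a)\}$ (also a level set $r=\mathrm{const}$ of the new radial coordinate) is a Killing horizon for $V=\partial_t+\Omega_{\mathcal C}\,\partial_\phi$, where $\Omega_{\mathcal C}$ is the angular velocity of the horizon --- nonzero even after the $\phi$-shift, which removed the angular velocity only \emph{at infinity} --- and $\nabla_V V=\kappa V$ with $\kappa=\lvert\Delta_r^\prime(\tilde r_{\mathcal C})\rvert/\bigl(2(\tilde r_{\mathcal C}^2+a^2)\bigr)>0$ precisely because $\tilde r_{\mathcal C}$ is a simple root; $V$ extends smoothly to a complete vectorfield on a neighbourhood of $\mathcal{C}^+\cup\bar{\mathcal{C}}^+$ whose flow enjoys the uniformity property of Definition~\ref{def:appl:class}(3), since near a nondegenerate Killing horizon the flow of $V$ is conjugate to a Lorentz boost --- exactly the structure exploited in the redshift constructions of \cite{dr:c}. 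For item~(4) one uses that $\Delta_r<0$ throughout $\mathcal{R}^+$ (again preserved in the subextremal range), so that $r$ is a time function on $\mathcal{R}^+$ and $\partial/\partial r$ is timelike there; taking $M_0=\partial/\partial r$ gives $g_a(M^\prime,M_0)<0$ on $\mathcal{R}^+$.

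The principal obstacle is the content of the first two paragraphs: identifying the correct chart --- the oblate--spheroidal straightening that ``un-squashes'' and ``un-rotates'' the conformal infinity --- together with the correct comparison mass $\bar m=m/\Delta_0^2$, and then pushing the asymptotic expansion of the explicit Kerr de Sitter components far enough to confirm that the deviation from $g_m$ sets in exactly at order $1/r^2$ and no slower. A secondary technical point is that, because this straightening mixes $\tilde r$ and $\tilde\theta$, one must re-examine the regularity of the chart down to a collar of $\mathcal{C}^+\cup\bar{\mathcal{C}}^+$ and control the off-diagonal term it introduces; once the $1/r$-expansion is in hand, conditions~(1) and~(2) --- and with them $g_a\in\mathcal{G}^{(2)}$ --- follow by inspection.
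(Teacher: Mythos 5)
Your proposal follows essentially the same route as the paper's Appendix~\ref{a:kerr}: the co-rotating, oblate--spheroidal change of coordinates \eqref{eq:trans:rot} (going back to \cite{carter:lecture,hawking:rot:ads}), followed by an explicit $1/r$-expansion of the metric, its inverse, and the deformation tensor ${}^\prime\pi$ in the new chart, with the nondegenerate-horizon and time-orientation conditions treated as standard facts about the subextremal family. The only deviation is your comparison mass $\bar m=m/\Delta_0^2$ where the paper simply keeps $m$; this is immaterial here, since the two Schwarzschild de Sitter metrics differ from each other at normalized order $\mathcal{O}(r^{-3})$, below the $\delta=2$ threshold (the binding $\mathcal{O}(r^{-2})$ terms come from the $a^2$-corrections, not the mass term).
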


\begin{rmk}
  In so far as the rate of convergence is concerned the general class $\mathcal{G}^{(\delta)}$, $\delta>0$, is considerably larger than would be needed to encompass the perturbations introduced by slowly rotating Kerr de Sitter cosmologies. Indeed, Prop.~\ref{prop:kerr:G} states in particular that with respect to the chart \eqref{eq:appl:chart:R},
  \begin{equation}
    \frac{\lvert ({}^\prime\pi_a)_{\alpha\beta}-({}^\prime\pi_m)_{\alpha\beta}\rvert}{\sqrt{\lvert (g_m)_{\alpha\alpha}\rvert\lvert (g_m)_{\beta\beta}\rvert}}=\mathcal{O}\Bigl(r^{-3}\Bigr)\,.
  \end{equation}
  Moreover, $g_a$ is endowed with two Killing vectorfields which are not required in the expanding region $\mathcal{R}^+$ for $g_a$ to fall into the class $\mathcal{G}^{(2)}$.
\end{rmk}

The closeness property of the Kerr de Sitter family to the Schwarzschild de Sitter cosmology with the same value of the mass $m$ asserted in Prop.~\ref{prop:kerr:G} is \emph{not} immediately apparent from the expressions \eqref{eq:kerrdesitter:f} and \eqref{eq:kerrdesitter:h}. In fact, while \eqref{eq:appl:kerr:schwarzschild} is true, we have for example
\begin{equation}
  f_{t\phi}=-\frac{a}{\Delta_0}\sin^2\theta\Bigl[\frac{\Lambda}{3}\bigl(r^2+a^2\bigr)+\frac{2mr}{\rho^2}\Bigr]
\end{equation}
in Boyer-Lindquist coordinates, which \emph{diverges} for arbitrary small parameter $\lvert a\rvert>0$ as $r\to\infty$ from its counterpart $(g_m)_{t\phi}=0$ in the Schwarzschild de Sitter geometry.
In order to capture the stated closeness property of Kerr de Sitter we pass to a ``co-rotating'' coordinate system $(t_0,\phi_0,\theta_0,\phi_0)$, where in particular
\begin{equation}
  \phi_0(t,\phi)=\phi-\frac{\Lambda}{3}a\,t\,.
\end{equation}
The relevant transformation already appears in \cite{carter:lecture, hawking:rot:ads}; see also \cite{teitelboim:ads}. So as not to distract from the main argument in this section we defer the proof of the asymptotic closeness conditions of Prop.~\ref{prop:kerr:G} to Appendix~\ref{a:kerr}.

The nondegenerate horizon condition is a standard property of the (cosmological) horizons of \emph{subextremal} Kerr de Sitter spacetimes. 
\smallskip

The purpose of the following proposition is to show that the assumptions on the class $\mathcal{G}^{(\delta)}$ are sufficiently restrictive for the global redshift effect to come into play. Since for any given $g\in\mathcal{G}^{(\delta)}$ the differentiable structure of $(\mathcal{D},g)$ contains the chart \eqref{eq:appl:chart:R} we can adhere to our choice of the global redshift vectorfield with respect to the coordinates $(t,r,\phi,\theta)$:
\begin{equation}
  M=\frac{1}{r}\frac{\partial}{\partial r}
\end{equation}
We continue to denote by $\phi$ the lapse function of the foliation of $\mathcal{R}^+$ by the level sets $\Sigma_r$ of the coordinate function $r$ with respect to the metric $g$, and by $n$ the normal to $\Sigma_r$ with respect to $g$. In formulas:
\begin{equation}
  n=\phi\, V\,,\qquad V^\mu=g^{\mu\nu}\partial_\nu r\,,\quad \phi^{-2}=-g(V,V)\,.
\end{equation}

\begin{prop}\label{prop:appl:globalredshift}
  Let $\psi$ be a solution to the wave equation on $(\mathcal{D},g)$, where $g\in \mathcal{G}^{(\delta)}$ for some $\delta>0$. Then on $\mathcal{R}^+$:
  \begin{equation*}
    \phi\:\nabla\cdot J^M\geq\frac{1}{r} J^M\cdot n-\mathcal{O}\Bigl(\frac{1}{r^{1+\delta}}\Bigr)J^M\cdot n\,.
  \end{equation*}  
\end{prop}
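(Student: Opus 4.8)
The plan is to reduce this to the Schwarzschild de Sitter computation of Prop.~\ref{prop:sds:globalredshift} by an exact algebraic identity valid for an arbitrary metric, and then to absorb the perturbation using the asymptotic closeness conditions of Def.~\ref{def:appl:class}. Writing $M=\tfrac1r\Mp$ with $\Mp=\dr$ and using $\mathcal{L}_{fX}g_{\mu\nu}=(\partial_\mu f)X_\nu+(\partial_\nu f)X_\mu+f(\mathcal{L}_Xg)_{\mu\nu}$ with $f=1/r$, one obtains
\begin{equation*}
  \deformt{M}^{\mu\nu}=\frac1r\,{}^\prime\pi^{\mu\nu}-\frac1{2r^2}\bigl(g^{r\mu}\delta^\nu_r+g^{r\nu}\delta^\mu_r\bigr)\,,
\end{equation*}
hence, contracting with the conserved tensor $T_{\mu\nu}[\psi]$, $\nabla\cdot J^M=K^M=\tfrac1r K^{\Mp}-\tfrac1{r^2}g^{r\mu}T_{\mu r}$ with $K^{\Mp}={}^\prime\pi^{\mu\nu}T_{\mu\nu}[\psi]$. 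Since the normal to $\Sigma_r$ is $n^\mu=-\phi\,g^{\mu r}$ and $\phi^{-2}=-g(V,V)=-g^{rr}$, one has $J^M\cdot n=-\tfrac\phi r g^{\mu r}T_{\mu r}=T(n,M)\geq0$ (with $n$, $M$ future-directed, cf.~condition (4)), so that
\begin{equation*}
  \phi\,\nabla\cdot J^M-\frac1r\,J^M\cdot n=\frac\phi r\,K^{\Mp}
\end{equation*}
identically. It therefore suffices to prove $K^{\Mp}\geq-\OdO\,\tfrac1\phi\,J^M\cdot n$ on $\mathcal{R}^+$; for $g=g_m$ this is just the inequality $K^{\Mp}\geq0$ established in the proof of Prop.~\ref{prop:sds:globalredshift}.

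To this end I would compare $K^{\Mp}$ with its background analogue ${}^{(m)}K^{\Mp}:=({}^\prime\pi_m)^{\mu\nu}T^{(m)}_{\mu\nu}[\psi]$ (indices raised and energy--momentum tensor formed with $g_m$), which satisfies ${}^{(m)}K^{\Mp}\geq0$ on $\mathcal{R}^+$ by that same computation. I would expand both scalars in the $g_m$-orthonormal frame $(E_0,E_1,E_2,E_3)$ colinear to $(\dr,\pd{t},\pd\theta,\pd\phi)$ --- the frame in which the conditions of Def.~\ref{def:appl:class} are phrased. Two structural facts enter: a direct computation from \eqref{eq:sds:metric:tr} shows that the frame components of ${}^\prime\pi_m$ decay like $\mathcal{O}(1/r)$ (this $r^{-1}$ gain is the geometric imprint of the expansion), while the frame components of $T_{\mu\nu}[\psi]$ are bounded by the unit energy density $\lvert E\psi\rvert^2:=\sum_a(E_a\psi)^2$, which moreover equals $\tfrac2{\phi_m^2}T^{(m)}_{rr}$. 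Then $K^{\Mp}-{}^{(m)}K^{\Mp}$ is a sum of terms each carrying either a factor ${}^\prime\pi-{}^\prime\pi_m=\Od{1}$ (condition (2)), or a factor $g-g_m=\OdO$ or $g^{-1}-g_m^{-1}=\OdO$ (condition (1)) together with a frame component of ${}^\prime\pi_m$, hence a factor $\mathcal{O}(1/r)$; in either case its contribution is $\Od{1}\lvert E\psi\rvert^2$. Consequently $K^{\Mp}\geq{}^{(m)}K^{\Mp}-C\,\Od{1}\,\lvert E\psi\rvert^2\geq-C\,\Od{1}\,\lvert E\psi\rvert^2$.

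It remains to convert this into a bound in terms of $J^M\cdot n$. From the $\mathrm{C}^0$-closeness (condition (1)) one has $\phi=(1+\OdO)\phi_m$ and $T_{rr}=(1+\OdO)T^{(m)}_{rr}$, while the off-diagonal corrections in $J^M\cdot n=\tfrac1{r\phi}T_{rr}-\tfrac\phi r\sum_{\mu\neq r}g^{r\mu}T_{\mu r}$ are relatively $\OdO$; hence $J^M\cdot n=\tfrac\phi{2r}\lvert E\psi\rvert^2(1+\OdO)$, so $\lvert E\psi\rvert^2\leq C\,\tfrac r\phi\,J^M\cdot n$ for $r$ large (and, after enlarging the constant, on all of $\mathcal{R}^+$, with the routine proviso that on a range $\rc<r\leq r_0$ --- where $r^{-1-\delta}$ is bounded away from $0$ --- the inequality is weak and follows from the identity of the first step together with ${}^{(m)}K^{\Mp}\geq0$ and the pointwise comparability $\lvert E\psi\rvert^2\simeq\tfrac r\phi J^M\cdot n$, using only the $\mathrm{C}^1$ regularity of $g$). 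Combining, $\tfrac\phi r K^{\Mp}\geq-C\,\Od{1}\,\tfrac\phi r\lvert E\psi\rvert^2\geq-C'\,\OdO\cdot\tfrac1r\,J^M\cdot n=-\Od{1}\,J^M\cdot n$, which with the identity of the first step is the assertion.

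The main obstacle is the weight bookkeeping in the middle two steps: one has to verify that the perturbation error is exactly \emph{one power of $r$ smaller} than the leading redshift term ${}^{(m)}K^{\Mp}$ --- which is precisely why condition (2) of Def.~\ref{def:appl:class} asks for the rate $r^{-1-\delta}$ rather than the generic $r^{-\delta}$ --- and that this gained power is exactly what offsets the weight $\phi/r$ relating $K^{\Mp}$ to $J^M\cdot n$. The technical core is thus to identify the frame and normalization in which the various a priori unrelated decay rates of Def.~\ref{def:appl:class} become directly comparable, and to check there both the $\mathcal{O}(1/r)$ decay of ${}^\prime\pi_m$ and the identity $\lvert E\psi\rvert^2=\tfrac2{\phi_m^2}T^{(m)}_{rr}$.
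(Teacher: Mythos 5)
Your argument is correct and follows essentially the same route as the paper: compare $K^{\Mp}={}^\prime\pi^{\mu\nu}T_{\mu\nu}$ with its Schwarzschild de Sitter counterpart in the $g_m$-unit frame, use the $\mathcal{O}(1/r)$ size of the frame components of $\pirm$ (computed in the proof of Prop.~\ref{prop:sds:globalredshift}) together with conditions (1) and (2) of Def.~\ref{def:appl:class} to bound the difference by $\Od{1}$ times the unit energy density $\sum_\alpha\lvert g_m^{\alpha\alpha}\rvert(\partial_\alpha\psi)^2=2\lvert g_m^{rr}\rvert(T_m)_{rr}$, and then convert to $J^M\cdot n$ via the comparison of $n$ with $n_m$. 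Your only (welcome) streamlining is the exact identity $\phi\,\nabla\cdot J^M-\tfrac1r J^M\cdot n=\tfrac{\phi}{r}K^{\Mp}$, where the paper instead approximates the term $-\tfrac{1}{r^2}g^{\mu r}T_{\mu r}$ by its background value $-\tfrac{1}{r^2}g_m^{rr}T_{rr}$ and carries an additional $\Od{2}$ error.
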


\begin{proof}
  We denote by $T$, $\pi$, ${}^\prime\pi$, $K^M$, $K^{M^\prime}$, the energy-momentum tensor \eqref{eq:energymomentum}, the deformation tensor of $M$, that of $M^\prime=r M$, and the currents \eqref{eq:KMd} of $M$, and $M^\prime$, \emph{with respect to $g$}, respectively, and by $T_m$, ${}^\prime\pi_m$, $\pi_m$, $K_m^M$, $K_m^{M^\prime}$ the corresponding quantities with respect to $g_m$.

It follows immediately from Def.~\ref{def:appl:class}~(1) that:
\begin{gather*}
  \lvert g^{\alpha\beta}\partial_\alpha\psi\partial_\beta\psi- g_m^{\alpha\beta}\partial_\alpha\psi\partial_\beta\psi\rvert=\OdO\sum_{\alpha}\lvert g_m^{\alpha\alpha}\rvert(\partial_\alpha\psi)^2\\
  \lvert T_{\mu\nu}-(T_m)_{\mu\nu}\rvert=\OdO\Bigl[\lvert(g_m)_{\mu\nu}\rvert+\sqrt{\lvert(g_m)_{\mu\mu}\rvert\lvert(g_m)_{\nu\nu}\rvert}\Bigr]\sum_\alpha\lvert g_m^{\alpha\alpha}\rvert\bigl(\partial_\alpha\psi\bigr)^2
\end{gather*}
We verify readily using the explicit expressions for $\pirm^{tt}$, $\pirm^{rr}$,$\pirm^{AB}$ already calculated in the proof of Prop.~\ref{prop:sds:globalredshift} that
\begin{equation*}
    \sqrt{\lvert(g_m)_{\mu\mu}\rvert\lvert(g_m)_{\nu\nu}\rvert}\lvert\lvert\pirm^{\mu\nu}\rvert=\lvert(g_m)_{\mu\mu}\rvert\lvert\pirm^{\mu\mu}\rvert=\frac{4}{r}+\mathcal{O}\Bigl(\frac{1}{r^3}\Bigr)\,.
\end{equation*}
Therefore
\begin{equation*}
  \lvert g_m^{\mu\alpha}g_m^{\nu\beta}\pirm_{\alpha\beta}(T_{\mu\nu}-(T_m)_{\mu\nu})\rvert=\Od{1}\sum_\alpha\lvert g_m^{\alpha\alpha}\rvert\bigl(\partial_\alpha\psi\bigr)^2\,.
\end{equation*}
Now by Def.~\ref{def:appl:class}~(2) this is matched by
\begin{equation*}
  \lvert g_m^{\mu\alpha}g_m^{\nu\beta}({}^\prime\pi_{\alpha\beta}-\pirm_{\alpha\beta})(T_m)_{\mu\nu}\rvert=\Od{1}\sum_\alpha\lvert g_m^{\alpha\alpha}\rvert\bigl(\partial_\alpha\psi\bigr)^2\,,
\end{equation*}
and thus
\begin{equation*}
  g_m^{\mu\alpha}g_m^{\nu\beta}\,{}^\prime\pi_{\alpha\beta}\,T_{\mu\nu}
  =g_m^{\mu\alpha}g_m^{\nu\beta}\pirm_{\alpha\beta}(T_m)_{\mu\nu}+\Od{1}\sum_\alpha\lvert g_m^{\alpha\alpha}\rvert\bigl(\partial_\alpha\psi\bigr)^2\,.
\end{equation*}
Since
\begin{equation*}
  \lvert g_m^{rr}\rvert (T_m)_{rr}=\frac{1}{2}\sum\lvert g_m^{\alpha\alpha}\rvert\bigl(\partial_\alpha\psi)^2\,,
\end{equation*}
we obtain finally
\begin{equation*}
  K^{M^\prime}=g^{\mu\alpha}g^{\nu\beta}\,{}^\prime\pi_{\alpha\beta}\,T_{\mu\nu}=K_m^{M^\prime}+\Od{1}\frac{1}{\phi_m^2}(T_m)_{rr}\,,
\end{equation*}
where $\phi_m$ denotes the lapse function \eqref{eq:sds:lapse}.
The crucial inequality in the proof of the global redshift proposition~\ref{prop:sds:globalredshift} is
\begin{equation*}
  K_m^{M^\prime}\geq 0\,,
\end{equation*}
which yields in the present setting:
\begin{equation*}
  K^{M^\prime}\geq -\Od{1}\frac{1}{\phi_m^2}(T_m)_{rr}\,.
\end{equation*}
Since, in view of the properties Def.~\ref{def:appl:class}~(1),
\begin{gather*}
  \pi_{\alpha\beta}=-\frac{1}{2}\frac{1}{r^2}(\partial_\alpha r)g_{r\beta}-\frac{1}{2}\frac{1}{r^2}(\partial_\beta r)g_{\alpha r_0}+\frac{1}{r}{}^\prime\pi_{\alpha\beta}\\
  K^M=-\frac{1}{r^2}g^{\mu r}g^{\nu\beta}g_{r\beta}T_{\mu\nu}+\frac{1}{r}K^{M^\prime}
  =-\frac{1}{r^2}g_m^{rr} T_{rr}+\frac{1}{r}K^{M^\prime}
  +\Od{2}\lvert g_m^{rr}\rvert T_{rr}\,,
\end{gather*}
we see that our bound on $K^{M^\prime}$ translates into the following bound on $K^M$:
\begin{equation}
  K^M\geq\frac{1}{r^2}\frac{1}{\phi_m^2}T_{rr}-\Od{2}\frac{1}{\phi_m^2}T_{rr}\,.
\end{equation}
Also for the normal $n$ we find in comparison to $n_m$:
\begin{gather*}
  V^\mu=V_m^\mu+\OdO\sqrt{ \lvert g_m^{\mu\mu}\rvert\lvert g_m^{rr}\rvert}\,,\\
  g(V,V)=g(V_m,V_m)+\OdO\lvert g_m^{rr}\rvert\,,\qquad
  \frac{1}{\phi^2}=\frac{1}{\phi_m^2}+\OdO\frac{1}{\phi_m^2}\,,\\
  n=\phi V=n_m+\OdO n_m+\OdO\sqrt{\lvert g_m^{\mu\mu}\rvert}\partial_\mu\,,
\end{gather*}
where
\begin{equation*}
  T(M,\sqrt{\lvert g_m^{\mu\mu}\rvert}\partial_\mu)\leq\mathcal{O}(1)\:T(M,n_m)\,.
\end{equation*}
So we have
\begin{equation}
  \phi \:K^M\geq\frac{1}{r}T(M,n_m)-\Od{1}T(M,n_m)
\end{equation}
which implies finally by the above
\begin{equation}
  \phi \:\nabla\cdot J^M\geq\frac{1}{r} J^M\cdot n-\Od{1}J^M\cdot n\,.\qedhere
\end{equation}
\end{proof}

The fact that the additional error term in Prop.~\ref{prop:appl:globalredshift} (c.f.~Prop.~\ref{prop:sds:globalredshift}) is \emph{integrable} allows us essentially to repeat the argument leading to Prop.~\ref{prop:r:expanding} in this larger class of expanding spacetimes $\mathcal{G}^{(\delta)}$. 
Indeed, since with respect to an orthonormal frame field\footnote{We may complement $n$ by an orthonormal frame $(e_1,e_2,e_3)$ tangential to $\Sigma_r$, and introduce a dual frame $\theta^\mu$ such that $g=g_{\alpha\beta}\theta^\alpha\otimes\theta^\beta=-\theta^0\otimes\theta^0+{(\gb{r})}_{ij}\theta^i\otimes\theta^j$ where $\theta^0=\phi\,\ud r$. Note that $r$ parametrizes the integral curves of $\phi\, n$, and that $\ud r$ is here dual to $\phi \,n$ as opposed to the coordinate vectorfield $\partial_r$ in the chart $(t,r,\theta,\phi)$.}  we have
\begin{equation}
  g=-\phi^2\ud r^2+\gb{r}\,,
\end{equation}
where $\gb{r}$ denotes the first fundamental form of $\Sigma_r$ in $(\mathcal{D},g)$, we obtain from the energy identity for $J^M$ on the domain bounded in the past by $\Sigma_{r_1}$ and to the future by $\Sigma_{r_2}$, for any $r_2>r_1\gg\rc$,
\begin{multline}
  \int_{\Sigma_{r_2}}J^M\cdot n\:\dm{\gb{r_2}}-\int_{\Sigma_{r_1}}J^M\cdot n\:\dm{\gb{r_1}}
  =-\int_{r_1}^{r_2}\ud r\int_{\Sigma_r}\phi\,\nabla\cdot J^M\dm{\gb{r}}\\
  \leq-\int_{r_1}^{r_2}\ud r\frac{1}{r}\int_{\Sigma_r}J^M\cdot n\:\dm{\gb{r}}+\int_{r_1}^{r_2}\ud r\:\mathcal{O}\Bigl(\frac{1}{r^{1+\delta}}\Bigr)\int_{\Sigma_r}J^M\cdot n\:\dm{\gb{r}}\,,
\end{multline}
where we have used the coarea formula and applied Prop.~\ref{prop:appl:globalredshift}. Hence the energy flux through $\Sigma_r$,
\begin{equation}
  f(r)=\int_{\Sigma_r}J^M\cdot n\:\dm{\gb{r}}\,,
\end{equation}
or in fact the rescaled energy $r f(r)$ satisfies the differential inequality:
\begin{equation}
  \bigl[ r f(r)\bigr]^\prime= f(r)+r\:f^\prime(r)\leq \mathcal{O}\Bigl(\frac{1}{r^{1+\delta}}\Bigr) \bigl[r\,f\bigr]\,.
\end{equation}
The immediate application of Gronwall's Lemma yields
\begin{equation}
  r f(r)\leq r_+ f(r_+)\: \exp\Bigl[{C\int_{r_+}^r\frac{1}{{r^\prime}^{1+\delta}}\ud r^\prime}\Bigr]\qquad (r>r_+\gg \rc)\,,
\end{equation}
where $C$ is a constant that only depends on $(\Lambda,m)$.
In particular, for any $\delta>0$,
\begin{equation}
  \lim_{r\to\infty}\int_{\Sigma_r}r\:J^M\cdot n\:\dm{\gb{r}}\leq C(r_+, a, \delta, \Lambda, m)\: \int_{\Sigma_{r_+}} J^M\cdot n\:\dm{\gb{r_+}}
\end{equation}
and the left hand side captures the \emph{same} rescaled asymptotic energy already obtained in the unperturbed Schwarzschild de Sitter setting, c.f.~\eqref{eq:sds:overview:finiteintegral}, as $g$ converges to $g_m$ according to Def.~\ref{def:appl:class}~(1), and consequently $T$ and $n$ converge to its corresponding quantities, $T_m$, $n_m$ respectively.

Since we have a strictly timelike vectorfield $M_0$ on $\mathcal{R}^+$ for any $g\in\mathcal{G}^{(\delta)}$ there is a bounded positive function $B$ on $(\rc,\infty)$ such that
\begin{equation}
  \lvert \nabla\cdot J^{M_0}\rvert \leq B(r)\: J^{M_0}\cdot n\qquad\text{: on }\Sigma_r\:\text{for }r>\rc\,,
\end{equation}
so that the trivial bound from Gronwall's lemma gives:
\begin{equation}
  \int_{\Sigma_{r_+}}J^{M_0}\cdot n\,\dm{\gb{r_+}}\leq C(r_+,r_-,a)\int_{\Sigma_{r_-}}J^{M_0}\cdot n\,\dm{\gb{r_-}}\,,
\end{equation}
for any fixed $r_+>r_->\rc$, where $r_-$ can be chosen as close to the cosmological horizons $\mathcal{C}\cup\bar{\mathcal{C}}$ as desired, c.f.~Fig.~\ref{fig:appl:R}.

\begin{figure}[tb]
  \centering
  \includegraphics{}
  \caption{Global redshift argument on a general class of expanding spacetimes $\mathcal{G}_a^{(\delta)}$.}
  \label{fig:appl:R}
\end{figure}

We have thus divided the domain to the future of the cosmological horizons into a sufficiently fast expanding region $r>r_+$ where the global redshift effect dominates and an intermediate region $r_-\leq r \leq r_+$ where the energy is uniformly bounded due to the presence of timelike vectorfield on a domain of compact range in $r$. In the remaining region $\rc\leq r\leq r_+$ the behaviour of linear waves is controlled by the local redshift effect.

\begin{prop}
  Let $\Sigma$ be a spacelike hypersurface in $\mathcal{D}$ with $\mathcal{R}^+\cap\Sigma\ne\emptyset$, and denote by $\Sigma^\prime=\Sigma\cap\mathcal{R}^+$, $\mathcal{C}_0^+=\mathcal{C}\cap J^+(\Sigma)$, and $\Sigma_r^\prime=\Sigma_r\cap J^+(\Sigma)$. Given $g\in\mathcal{G}^{(\delta)}$ there exists a timelike vectorfield $N$ which is invariant under the flow generated by~$V$ and defined on a neighborhood of $\mathcal{C}_0^+$ which includes $\Sigma_{r_-}^\prime$ for $r_->\rc$ chosen sufficiently small, and a constant $C$ that only depends on $g$, such that
  \begin{equation*}
    \int_{\Sigma_{r_-}^\prime}J^M[\psi]\cdot n\:\dm{\gb{r_-}}\leq C\int_{\mathcal{C}_0^+} {}^\ast J^N[\psi]+C\int_{\Sigma^\prime}J^N[\psi]\cdot n\:\dm{\gb{}}
  \end{equation*}
for all solutions $\psi$ to the wave equation on $(\mathcal{D},g)$.
\end{prop}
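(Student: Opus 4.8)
The plan is to carry over the local redshift construction of Dafermos and Rodnianski (the Epilogue of~\cite{dr:c}), used in the spherically symmetric setting in Section~\ref{sec:sds:localredshift}, to the class~$\mathcal{G}^{(\delta)}$; only the nondegenerate horizon condition Def.~\ref{def:appl:class}~(3) and the time orientation Def.~\ref{def:appl:class}~(4) will be used, not the asymptotic closeness conditions. First I would fix the cross-section $S=\Sigma\cap\mathcal{C}^+$ and introduce on it a null frame $(V,Y,E_A:A=1,2)$, where $V$ is the horizon generator of Def.~\ref{def:appl:class}~(3), $Y$ is the conjugate null vectorfield normalized by $g(V,Y)=-2$ and orthogonal to $S$, and $E_A$ is an orthonormal frame tangent to~$S$. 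As no symmetry is available in $\mathcal{R}^+$, I would propagate $Y$ and $E_A$ along $\mathcal{C}_0^+$ by Lie transport with respect to the flow $\varphi_t$ of~$V$, so that the frame --- and every object built from it below --- is $\varphi_t$-invariant. Since $V$ is Killing in a neighborhood of $\mathcal{C}^+$, the identity $\nabla_V V=\kappa V$ ($\kappa>0$) of Def.~\ref{def:appl:class}~(3) yields on $\mathcal{C}_0^+$, modulo terms tangent to the sections, the relations $\nabla_Y V=\nabla_V Y=-\kappa Y$ and $[V,Y]=0$, exactly as in~\eqref{eq:sds:kC:alternate}.

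Next I would extend $Y$ off the horizon by the transport equation $\nabla_Y Y=-\sigma(Y+V)$, with $\sigma>0$ a constant to be fixed large, and set $N=V+Y$. Because $V$ is Killing near $\mathcal{C}^+$ this extension remains $\varphi_t$-invariant, and since $g(N,N)\onc=2g(V,Y)=-4<0$ the vectorfield $N$ is timelike on~$\mathcal{C}_0^+$. Computing $K^N[\psi]=\deformt{N}^{\mu\nu}T_{\mu\nu}[\psi]$ on $\mathcal{C}_0^+$ reproduces the coercive expression~\eqref{eq:sds:redshift:div}: once $\sigma$ is chosen large enough, depending on $\kappa$ and the geometry of $S$, one has $K^N[\psi]\onc\geq c\bigl(\sq{Y\cdot\psi}+\sq{V\cdot\psi}+\sqv{\nablab\psi}\bigr)\geq 0$. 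The step where the argument departs from Section~\ref{sec:sds:localredshift} is that $\mathcal{C}_0^+$ is \emph{noncompact}: I would carry out the construction over the compact $S$ (so that all constants are finite) and then invoke the $\varphi_t$-invariance of $N$, $g$ and $K^N$ together with the uniformity clause of Def.~\ref{def:appl:class}~(3) --- which asserts precisely that $\varphi_t$ sweeps out a fixed radial interval around $\rc$ uniformly in $t$ --- to conclude that $K^N\geq 0$ and $N$ is timelike on a neighborhood $\mathcal{N}=\{r_0^-\leq r\leq r_0^+\}\cap J^+(\Sigma)$ of $\mathcal{C}_0^+\cup\bar{\mathcal{C}}_0^+$ whose radial width $[r_0^-,r_0^+]\ni\rc$ is uniform along the whole of $\mathcal{C}_0^+$. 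I expect this uniformization to be the main obstacle; everything else is a routine repetition of Section~\ref{sec:sds:localredshift}. (The corresponding vectorfield near $\bar{\mathcal{C}}^+$ is obtained by the mirror construction.)

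Finally I would write the energy identity for the current $J^N$ on the spacetime region $\mathrm{J}^+(\Sigma^\prime\cup\mathcal{C}_0^+)\cap\mathrm{J}^-(\Sigma_{r_-})$, choosing $\rc<r_-<r_0^+$ small enough that this region lies inside $\mathcal{N}$ (and that $\Sigma_{r_-}\cap\Sigma\neq\emptyset$). Its future boundary is $\Sigma_{r_-}^\prime$, its past boundary is $\Sigma^\prime\cup\mathcal{C}_0^+$, and the spacetime term $\int K^N\,\dm{g}\geq 0$ has a favourable sign and is discarded; with $N$ and $n$ both timelike and future-directed on $\Sigma_{r_-}^\prime$ this gives
\[
  \int_{\Sigma_{r_-}^\prime}J^N[\psi]\cdot n\,\dm{\gb{r_-}}\leq\int_{\mathcal{C}_0^+}{}^\ast J^N[\psi]+\int_{\Sigma^\prime}J^N[\psi]\cdot n\,\dm{\gb{}}\,.
\]
It then remains to bound the integrand on the left below by a multiple of $J^M[\psi]\cdot n$: by Def.~\ref{def:appl:class}~(4) the vectorfield $M=\frac{1}{r}\pd{r}$ is causal and future-directed on $\mathcal{R}^+$, $N$ is timelike there, and $T$ satisfies the dominant energy condition, so $J^M[\psi]\cdot n\leq C\,J^N[\psi]\cdot n$ pointwise on $\Sigma_{r_-}^\prime$; the constant $C$ is uniform along the noncompact $\Sigma_{r_-}^\prime$ by the $\varphi_t$-invariance of $N$ (and $g$) and the compactness of the radial range, exactly as in the remark following~\eqref{eq:sds:prop:redshift:energy}. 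Combining this comparison with the displayed inequality yields the asserted estimate, any flux through $\bar{\mathcal{C}}_0^+$ being either of good sign or absorbed into the $\Sigma^\prime$ term as in Prop.~\ref{prop:sds:local}.
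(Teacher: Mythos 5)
Your proposal is correct and follows essentially the same route as the paper: the paper disposes of this proposition by citing Theorem~7.1 of the epilogue of \cite{dr:c} (the general local redshift construction for non-extremal Killing horizons, applicable here by Def.~\ref{def:appl:class}~(3) and its uniformity clause), and what you write out --- Lie-transporting the null frame along the flow of $V$, extending $Y$ by $\nabla_YY=-\sigma(Y+V)$, taking $N=V+Y$, and closing with the energy identity and the pointwise comparison $J^M\cdot n\leq C\,J^N\cdot n$ --- is precisely that construction made explicit, mirroring Section~\ref{sec:sds:localredshift}.
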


This is an immediate application of Theorem~7.1 in the epilogue of \cite{dr:c}. The construction therein of the local redshift vectorfield $N$ applies to all non-extremal Killing horizons, in particular the cosmological horizons of $g\in\mathcal{G}^{(\delta)}$ under the assumption (3) of Definition~\ref{def:appl:class}.

\smallskip
The desired concluding estimate, however,
\begin{equation}
  \label{eq:appl:boundedness:S}
  \int_{\mathcal{C}_0^+}{}^\ast J^N[\psi]\leq \int_{\Sigma} J^n[\psi]\cdot n\:\dm{\gb{}}\,,
\end{equation}
\emph{cannot} be established in the generality of perturbations $g\in\mathcal{G}^{(\delta)}$. The reason is that for general perturbations of the Schwarzschild de Sitter metric in the static region $\mathcal{S}$ the previously causal vectorfield $T$ may be spacelike with respect to $g$ thus introducing an \emph{ergoregion} near the event horizon \cite{dr:c}. In \cite{dr:kerr:bounded} Dafermos and Rodnianski establish a uniform boundedness result analogous to \eqref{eq:appl:boundedness:S} for solutions to the wave equation on \emph{stationary axisymmetric} spacetimes which are $\mathrm{C}^1$-close to the Schwarzschild metric, (and whose Killing fields span the null generator of the event horizon). Their method which only relies on the presence of two Killing fields $T$, and $\Phi$ --- and in particular does not exploit any \emph{hidden} symmetries of the Kerr black holes --- can be adapted to the present setting as discussed in their concluding notes in \cite{dr:kerr:bounded}.

\section{Applications to Klein-Gordon equations}
\label{sec:kg}

The approach followed in Section~\ref{sec:proofs:global} lends itself to the study of solutions to the Klein-Gordon equation \eqref{eq:sds:kg},
\begin{equation}
  \label{eq:sds:kg:r}
  \Box_g\psi=\m\psi\,,
\end{equation}
i.e.~wave equations on Schwarzschild de Sitter backgrounds with a \emph{positive mass term} $\m>0$.

\begin{thm}\label{thm:sds:kg}
  Let $\Sigma$ be a $\Sigma^+$-hypersurface with normal $n$ in a (subextremal) Schwarzschild de Sitter spacetime $(\M,g)$; c.f.~figure~\ref{fig:cauchyexpanding}. Fix $\m>0$, then all solutions $\psi$ to \eqref{eq:sds:kg:r} with finite energy on $\Sigma$,
  \begin{equation*}
    D_{\m}[\psi]=\int_\Sigma\Bigl\{T[\psi](n,n)+\m\:\psi^2\Bigr\}\dm{\gb{}}<\infty\,,
  \end{equation*}
  are uniformly bounded on $\mathcal{R}^+$ in the norm
  \begin{equation*}
    \Vert\psi\Vert_{\Sigma_r}=\int_{\Sigma_r}\Bigl\{T(n,n)+\frac{1}{2}\m\psi^2\Bigr\}\frac{\phi}{r^2}\dm{\gb{r}}
  \end{equation*}
and have a limit on $\Sigma^+$ in $\mathrm{H}^1(\mathbb{R}\times\mathbb{S}^2)$. Moreover the limit $\psi\rvert_{\Sigma^+}$ is a square integrable function on the standard cylinder $\mathbb{R}\times\mathbb{S}^2$ and we have the bound:
  \begin{equation*}
    \int_{\Sigma^+}\psi^2\:\dm{\cg}\leq C\:D_{\m}[\psi]\,,
  \end{equation*}
  where $C$ is a constant that only depends on $m$, $\Lambda$, $\m$, and $\Sigma$.
\end{thm}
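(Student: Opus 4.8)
The plan is to run the global redshift argument of Section~\ref{sec:sds:globalredshift} with the energy--momentum tensor of the Klein--Gordon field,
\begin{equation*}
  T_{\mu\nu}[\psi]=\partial_\mu\psi\,\partial_\nu\psi-\frac12\,g_{\mu\nu}\bigl(\partial^\alpha\psi\,\partial_\alpha\psi+\m\,\psi^2\bigr)\,,
\end{equation*}
which is conserved, $\nabla\cdot T=0$, precisely when $\psi$ solves \eqref{eq:sds:kg:r}, while keeping the \emph{same} multiplier $M=\frac1r\dr$ and hence the same current $J^M_\mu=T_{\mu\nu}M^\nu$. Since $T[\psi]$ differs from the wave--equation tensor \eqref{eq:energymomentum} (evaluated on the Klein--Gordon solution $\psi$) only by the trace term $-\frac12\m\psi^2 g$, and the pointwise inequality $K^M\geq\frac1{r^2}\frac1{\phi^2}T_{rr}$ established inside the proof of Prop.~\ref{prop:sds:globalredshift} is purely algebraic --- it concerns the contraction $\deformt{M}^{\mu\nu}T_{\mu\nu}$ and is unaffected by adding a multiple of $g$ to $T$ except through $g^{\mu\nu}\deformt{M}_{\mu\nu}=\nabla\cdot M=\frac1{r^2}$ --- one obtains at once the Klein--Gordon analogue of Prop.~\ref{prop:sds:globalredshift},
\begin{equation*}
  \phi\:\nabla\cdot J^M[\psi]\geq\frac1r\,J^M[\psi]\cdot n-\frac{\phi\,\m\,\psi^2}{r^2}\qquad\text{on }\mathcal R^+\,,
\end{equation*}
the only new feature being the lower--order error $-\phi\m\psi^2/r^2$.

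First I would absorb this error into the rescaled energy. Applying the divergence theorem for $J^M$ on $\mathcal R_{r_1}^{r_2}$ and the coarea formula exactly as in \eqref{eq:sds:global:coarea}, and setting $E(r)=\int_{\Sigma_r}J^M[\psi]\cdot n\,\dm{\gb{r}}$ and $P(r)=\int_{\Sigma_r}\m\psi^2\,\dm{\cg}$ --- note $\frac{\phi}{r^2}\dm{\gb{r}}=\dm{\cg}$ by \eqref{eq:sds:overview:volumeform} --- the inequality above gives $E^\prime(r)\leq-\frac1r E(r)+P(r)$. The decisive point is that the weight $\frac1{r^2}=\nabla\cdot M$ carried by the mass term is the reciprocal of the $r^2$ in the volume form, so that $P(r)$ is no larger than the \emph{rescaled} energy: since, exactly as in the computation of $J^M\cdot n$ at the end of Section~\ref{sec:sds:globalredshift} supplemented by the extra trace term,
\begin{multline*}
  2r\,E(r)=\int_{\Sigma_r}\Bigl\{\frac1{\phi^2}\sq{\dd{r}{\psi}}+\phi^2\sq{\dd{t}{\psi}}+\sqv{\nablab\psi}+\m\psi^2\Bigr\}\phi\,\dm{\gb{r}}\\
  \geq\int_{\Sigma_r}\m\psi^2\,\phi\,\dm{\gb{r}}=r^2\,P(r)\,,
\end{multline*}
discarding the nonnegative derivative terms yields $P(r)\leq\frac2r E(r)$. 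Hence $E^\prime(r)\leq\frac1r E(r)$, equivalently $\bigl(E(r)/r\bigr)^\prime\leq 0$, so $\frac1r E(r)=\int_{\Sigma_r}T(n,n)\,\dm{\cg}$ is nonincreasing on $(\rc,\infty)$ --- the Klein--Gordon counterpart of the monotonicity of Prop.~\ref{prop:r:expanding}. Consequently $\Vert\psi\Vert_{\Sigma_r}=\int_{\Sigma_r}T(n,n)\,\dm{\cg}+\frac12 P(r)\leq\frac2r E(r)\leq\frac2{r_1}E(r_1)$ and $\int_{\Sigma_r}\psi^2\,\dm{\cg}=\frac1\m P(r)\leq\frac2{\m r_1}E(r_1)$ are bounded uniformly in $r>r_1>\rc$.

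It then remains to bound $E(r_1)$ at a fixed $r_1\in(\rc,r_0^+)$ by $D_{\m}[\psi]$, which proceeds along the lines of Section~\ref{sec:r:integral}. The local redshift vectorfield $N=T+Y$ of Prop.~\ref{prop:sds:redshift} still produces, with the Klein--Gordon tensor, a current $J^N$ whose divergence equals the corresponding wave--equation expression minus $\frac12\m\psi^2\,\nabla\cdot N$; since $\nabla\cdot N$ is bounded and the redshift slab $\{r_0^-\leq r\leq r_0^+\}\cap\mathrm{J}^+(\Sigma)$ has bounded $r$--width, this $\psi^2$ error is absorbed by a Hardy inequality in $r$ over a slab of bounded extent, fed by the $\psi^2$ datum contained in $D_{\m}[\psi]$; this yields the analogue of Prop.~\ref{prop:sds:local}. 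The remaining control of the nondegenerate horizon flux $\int_{\mathcal C_0^+}{}^\ast J^N[\psi]$ by $D_{\m}[\psi]$ is an argument parallel to Prop.~\ref{prop:r:static} of \cite{dr:sds}, the mass term being lower order in the static region where, moreover, it merely adds a nonnegative $\frac12\m\psi^2$ to the energy density $T(n,n)$. Finally, the existence of the limit $\psi\rvert_{\Sigma^+}\in\mathrm{H}^1(\mathbb R\times\mathbb S^2)$ and the bound $\int_{\Sigma^+}\psi^2\,\dm{\cg}\leq C\,D_{\m}[\psi]$ follow from the density--plus--Sobolev argument of Section~\ref{sec:limit}, now carried out for the full (inhomogeneous) $\mathrm{H}^1$ norm --- commutation with the Killing fields $T$ and $\Oi{i}$ maps solutions of \eqref{eq:sds:kg:r} to solutions of \eqref{eq:sds:kg:r} --- together with Fatou's lemma for the final inequality.

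The main obstacle, and the reason the theorem holds, is the zeroth--order mass term, which is \emph{not} of definite sign in $\nabla\cdot J^M$. In the expanding region this turns out to be harmless: the mass contribution enters $\nabla\cdot J^M$ only through $\nabla\cdot M=\frac1{r^2}$, a decaying weight that the volume form cancels exactly, so after rescaling it is dominated by the rescaled energy itself, and no Poincar\'e or Hardy inequality in the noncompact $t$--direction is needed. The genuinely technical step is the closure of the estimate near the cosmological horizon and in the static region, where the mass error has to be absorbed by a Hardy inequality over a slab of bounded $r$--extent; this is where a more careful treatment would be required, but it is routine.
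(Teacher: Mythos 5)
Your treatment of the expanding region is correct, and it reaches the paper's conclusion by a slightly different but equivalent route. The paper retains the massless energy--momentum tensor \eqref{eq:energymomentum} and works with the modified current $J^{M;\m}=J^M-\frac{\m}{2}\psi^2 M^\flat$ --- which is literally the same 1-form as your ``Klein--Gordon tensor contracted with $M$'' --- but chooses the multiplier $M=\frac{1}{r^3}\pd{r}$ rather than your $\frac{1}{r}\pd{r}$. The extra $r$-weights make $\tr\deformt{M}=-\frac{1}{r^4}<0$, so the zeroth-order bulk term $-\frac{\m}{2}\tr\deformt{M}\,\psi^2$ comes out with a \emph{favourable} sign and the clean inequality $\phi\,\nabla\cdot J^{M;\m}\geq\frac{1}{r}J^{M;\m}\cdot n$ holds with no error term (Prop.~\ref{prop:r:expanding:kg}). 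You instead accept the wrong-signed error $-\m\phi\psi^2/r^2$ and absorb it via $P(r)\leq\frac{2}{r}E(r)$, which is valid precisely because the flux itself carries the term $\frac{\m\phi}{2r}\psi^2$; your Gronwall then yields monotonicity of $\frac{1}{r}E(r)$, which coincides with the paper's monotone quantity $r\int_{\Sigma_r}J^{M;\m}\cdot n\,\dm{\gb{r}}=\Vert\psi\Vert_{\Sigma_r}$. Both arguments are sound; the paper's choice of weight simply trades your absorption step for a sign computation on $\tr\deformt{M}$.

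The weaker parts of your proposal are the two steps behind $\Sigma_{r_1}$. For the local redshift the paper needs no Hardy inequality: Prop.~\ref{prop:localredshift:kg} shows directly that $\tr\deformt{N}=-\sigma+\frac{4}{\rc}<0$ for $\sigma$ chosen large, so the modified current $J^{N;\m}$ has nonnegative divergence on its own and the argument of Prop.~\ref{prop:sds:redshift} goes through verbatim; your Hardy-type absorption is plausible but both unnecessary and unproven as stated. More seriously, the static-region boundedness (the analogue of Prop.~\ref{prop:r:static}, namely Prop.~\ref{prop:r:static:kg}) is \emph{not} routine: the paper states explicitly that this statement is missing from the literature and proves it only for $m=0$ and $\m\geq 2\frac{\Lambda}{3}$, because the Morawetz current it employs has bulk term $\frac{1}{r}\sqv{\nablab\psi}+\frac{1}{r}\frac{\Lambda r^2}{3}\bigl(\m-2\frac{\Lambda}{3}\bigr)\psi^2$, whose positivity fails for small mass. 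Your remark that the mass term ``merely adds a nonnegative $\frac{1}{2}\m\psi^2$ to the energy density'' addresses the boundary fluxes but not the bulk positivity needed to close the boundedness argument in the presence of trapping; this is a genuine gap (one the paper shares but acknowledges), not a routine adaptation of \cite{dr:sds}.
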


\begin{rmk}
We can commute \eqref{eq:sds:kg:r} with $T$ and $\Oi{i}=i:1,2,3$ to recover the bound of Thm.~\ref{thm:sds} for the homogeneous wave equation.
While as in the case $\m=0$ we obtain that the limit can be viewed as a function of finite energy on the standard cylinder, we obtain here for $\m>0$ that in addition $\psi\rvert_{\Sigma^+}\in\mathrm{L}^2(\mathbb{R}\times\mathbb{S}^2)$.
\end{rmk}

To prove Thm.~\ref{thm:sds:kg} we proceed as in Section~\ref{sec:r:integral} and Section~\ref{sec:proofs:global} for the proof of Thm.~\ref{thm:sds}. In other words, we show analogous versions of the local and global redshift effect for Klein-Gordon equations, which yield the monotone rescaled energy $\Vert\psi\Vert_{\Sigma_r}$ in the statement of theorem.

The energy currents used in Section~\ref{sec:proofs:global} can be adapted to incorporate the presence of a positive mass term. In fact, we merely replace \eqref{eq:standardcurrent} by the \emph{modified energy current}
\begin{equation}  
  \label{eq:modifiedcurrent}
  J^{M;\m}\doteq J^M-\frac{\m}{2}\psi^2\,M^\flat\,,
\end{equation}
where $M^\flat$ is simply the 1-form corresponding to the vectorfield $M$, i.e.~$M^\flat\cdot X=g(M,X)$. Here we retain \eqref{eq:energymomentum} as the definition of the energy momentum tensor. Then we obtain the following analogous statement to Prop.~\ref{prop:r:expanding}.

\begin{prop}[Global redshift property for Klein-Gordon equations]
  \label{prop:r:expanding:kg}
  Let $J^{M;\m}$ be the modified energy current \eqref{eq:modifiedcurrent} associated to the multiplier
  \begin{equation}
    \label{eq:M:kg}
    M=\frac{1}{r^3}\frac{\partial}{\partial r}\,,
  \end{equation}
  and a fixed mass $\m>0$. Then for all solutions $\psi$ to the Klein-Gordon equation \eqref{eq:sds:kg:r} we have
  \begin{equation*}
    \phi\:\nabla\cdot J^{M;\m}\geq\frac{1}{r} J^{M;\m}\cdot n\,,
  \end{equation*}
  on $\mathcal{R}^+$, i.e.~for $r>\rc$.
\end{prop}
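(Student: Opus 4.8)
The plan is to mirror the structure of the proof of Proposition~\ref{prop:sds:globalredshift}, computing the divergence $\nabla\cdot J^{M;\m}$ for the modified current \eqref{eq:modifiedcurrent} and showing it dominates $\frac{1}{r}J^{M;\m}\cdot n$ pointwise on $\mathcal{R}^+$. First I would record that, for a solution of \eqref{eq:sds:kg:r}, the divergence of the unmodified current picks up a mass term: $\nabla\cdot J^M = K^M[\psi] - \m\,\psi\,(M\cdot\psi)$, where $K^M = {}^{(M)}\pi^{\mu\nu}T_{\mu\nu}$ is exactly the quantity computed in Section~\ref{sec:sds:globalredshift}. Next, differentiating the correction term, $\nabla\cdot\bigl(\frac{\m}{2}\psi^2 M^\flat\bigr) = \m\,\psi\,(M\cdot\psi) + \frac{\m}{2}\psi^2\,(\nabla\cdot M)$, so that the $\m\,\psi\,(M\cdot\psi)$ terms cancel and one is left with
\begin{equation*}
  \nabla\cdot J^{M;\m} = K^M[\psi] - \frac{\m}{2}\psi^2\,\nabla\cdot M\,.
\end{equation*}
The divergence $\nabla\cdot M$ for $M=r^{-3}\partial_r$ is computed from the connection coefficients already tabulated in the proof of Prop.~\ref{prop:sds:globalredshift}; the key point will be to check that $\nabla\cdot M \le 0$ on $\mathcal{R}^+$ (or more precisely that $-\frac{\m}{2}\psi^2\nabla\cdot M$ combined with the $r^{-3}$ weight produces the correct coefficient $\frac{1}{r}$ of the $\psi^2$ term in $J^{M;\m}\cdot n$), so that the mass contribution has a favourable sign.

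For the principal (derivative) part, the weight has changed from $r^{-1}\partial_r$ to $r^{-3}\partial_r$, so I would write $M = r^{-3}M'$ with $M'=\partial_r$ and use the decomposition ${}^{(M)}\pi_{\alpha\beta} = r^{-3}\,{}^{(M')}\pi_{\alpha\beta} - \frac{3}{2}r^{-4}\bigl((\partial_\alpha r)g_{r\beta}+(\partial_\beta r)g_{\alpha r}\bigr)$, exactly as in the unweighted case but with $1/r$ replaced by $3/r$ in the second term and the overall prefactor $r^{-3}$. Since the proof of Prop.~\ref{prop:sds:globalredshift} already establishes $K^{M'}\ge 0$ and, more sharply, gives the explicit lower bounds on $K_0+K_1\pm K_2$ in terms of $(r-\rH)(r-\rc)(r+\vrcb)$, the only new bookkeeping is to track how these combine with the extra factor in the $g_{rr}$ correction term; the net effect is that
\begin{equation*}
  K^M = \frac{\Lambda}{3}\frac{1}{r^3}(r-\rH)(r-\rc)(r+\vrcb)\,T_{rr}\cdot\frac{3}{r} + \frac{1}{r^3}K^{M'} \ge \frac{1}{r}\,\frac{1}{r^2\phi^2}T_{rr}
\end{equation*}
in the limit $r\to\infty$, and more carefully for all $r>\rc$ using the sign of the lower-order pieces. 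Matching this against $J^{M;\m}\cdot n = \frac{1}{r^3}\frac{1}{\phi}T_{rr} - \frac{\m}{2}\psi^2 g(M,n)$ and noting $g(M,n) = -r^{-3}\phi^{-1}$ (since $n=\phi V$ and $V$ is colinear to $\partial_r$ with $g(\partial_r,\partial_r)=-\phi^{-2}$, scaled by $r^{-3}$... here one must be careful with signs: $g(M,n)<0$ on $\mathcal{R}^+$ because $r$ increases to the future) gives that the $\psi^2$ contribution to $J^{M;\m}\cdot n$ is positive, so the desired inequality $\phi\,\nabla\cdot J^{M;\m}\ge\frac{1}{r}J^{M;\m}\cdot n$ follows by adding the derivative and mass parts.

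The main obstacle I anticipate is pinning down the precise weight exponent (here $r^{-3}$ rather than $r^{-1}$) for which \emph{both} the derivative terms $K^{M'}$ and the mass term $-\frac{\m}{2}\psi^2\nabla\cdot M$ come with the right sign \emph{and} the correct coefficient $\frac{1}{r}$ relative to the redefined current $J^{M;\m}\cdot n$; this is exactly the kind of fine-tuning that forces the choice \eqref{eq:M:kg} and is different from the massless case precisely because the $\psi^2$ term in the energy density must also be shown monotone with the appropriate rescaling $\phi/r^2$. Once the pointwise inequality is in hand, the passage to the monotonicity of $\Vert\psi\Vert_{\Sigma_r}$ and to the limit on $\Sigma^+$ proceeds verbatim as in Sections~\ref{sec:sds:globalredshift} and~\ref{sec:limit}, via the coarea formula and the Gronwall Lemma~\ref{lemma:gronwall:df}, together with \eqref{eq:sds:overview:volumeform} to identify the limiting $\mathrm{L}^2$ quantity on the cylinder; the local redshift argument of Section~\ref{sec:sds:localredshift} carries over after the analogous modification $J^{N;\m}=J^N-\frac{\m}{2}\psi^2 N^\flat$, whose divergence retains positivity near $\mathcal{C}^+$ because the extra term $-\frac{\m}{2}\psi^2\nabla\cdot N$ is lower order and $N$ is timelike.
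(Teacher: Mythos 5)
Your proposal follows the paper's proof essentially verbatim: the paper likewise writes $\nabla\cdot J^{M;\m}=T_{\mu\nu}\deformt{M}^{\mu\nu}-\frac{\m}{2}\bigl(\tr\deformt{M}\bigr)\psi^2$, computes $\tr\deformt{\Mp}=\frac{2}{r}$ from the tabulated connection coefficients so that $\tr\deformt{M}=-\frac{3}{r^4}+\frac{2}{r^4}=-\frac{1}{r^4}$ (whence the mass term matches $\frac{1}{r}$ times the $\psi^2$ part of $J^{M;\m}\cdot n$ with exact equality, confirming the check you flagged as the key point), and uses $K^{\Mp}\geq 0$ together with $-\frac{3}{r^4}g^{rr}T_{rr}=\frac{3}{r^4}\frac{1}{\phi^2}T_{rr}$ for the derivative part. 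The only blemishes are a consistent extra power of $r$ on both sides of your displayed bound for $K^M$ (the first term should read $\frac{3}{r^4}\frac{1}{\phi^2}T_{rr}$ and the target $\frac{1}{r}\cdot\frac{1}{r^3}\frac{1}{\phi^2}T_{rr}$) and the unnecessary hedge about the limit $r\to\infty$: the inequality is exact for all $r>\rc$ with a factor of $3$ to spare in the $T_{rr}$ term, and no lower-order pieces need attention.
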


Note that the global redshift vectorfield \eqref{eq:M:kg} carries additional weights in $r$ as compared to \eqref{eq:sds:M}.
\begin{proof}
  By virtue of \eqref{eq:sds:kg:r} we have
  \begin{equation*}
    \nabla^\mu J_\mu^{M;\m}=T_{\mu\nu}\deformt{M}^{\mu\nu}-\frac{1}{2}\m\tr\deformt{M}\:\psi^2
  \end{equation*}
  Recall our calculation of the deformation tensor of $M^\prime=\partial_r$ in the proof of Prop.~\ref{prop:sds:globalredshift}. Since, using the notation therein,
  \begin{equation*}
    \tr\deformt{M^\prime}=\Gamma_{rt}^t+\Gamma_{rr}^r+\frac{2}{r}=K_1-K_0+\frac{2}{r}=\frac{2}{r}\,,
  \end{equation*}
  we obtain for the vectorfield \eqref{eq:M:kg}:
  \begin{equation*}
    \tr\deformt{M}=-\frac{3}{r^4}+\frac{1}{r^3}\tr\deformt{M^\prime}=-\frac{1}{r^4}\,.
  \end{equation*}
  Therefore
  \begin{equation*}
    \nabla\cdot J^{M;\m}=-\frac{3}{r^4}g^{rr}T_{rr}+\frac{1}{r^3}K^{M^\prime}+\frac{\m}{2}\frac{1}{r^4}\psi^2
    \geq\frac{3}{r^4}\frac{1}{\phi^2}T_{rr}+\frac{\m}{2}\frac{1}{r^4}\psi^2\,,
  \end{equation*}
  because $K^{M^\prime}=T_{\mu\nu}\deformt{M^\prime}^{\mu\nu}\geq 0$ as shown in the proof of Prop.~\ref{prop:sds:globalredshift}.
Now,
\begin{equation*}
  J^{M;\m}\cdot n=T(M,n)-\frac{\m}{2}\psi^2\:g(M,n)=\frac{1}{\phi}\frac{1}{r^3}T_{rr}+\frac{\m}{2}\frac{1}{r^3}\phi\:\psi^2\,,
\end{equation*}
where we have used the relation $\partial_r=\phi n$.
This completes the proof of the global redshift property.
\end{proof}

\begin{rmk}
  It is clear that the global redshift property is stable under perturbations discussed in Section~\ref{sec:general}. Similarly to our generalisation of Prop.~\ref{prop:sds:globalredshift} to Prop.~\ref{prop:appl:globalredshift} we can show that Prop.~\ref{prop:r:expanding:kg} holds in a class of nearby cosmologies with an error that is lower order in $r$; c.f.~Prop.~\ref{prop:appl:globalredshift}.
\end{rmk}

As in (\ref{eq:M:redshift}--\ref{eq:sds:global:weighted}) we conclude from Prop.~\ref{prop:r:expanding:kg} that
\begin{equation}
  r_2\int_{\Sigma_{r_2}}J^{M;\m}\cdot n\,\dm{\gb{r_2}}\leq r_1\int_{\Sigma_{r_1}}J^{M;\m}\cdot n\,\dm{\gb{r_1}}\,,
\end{equation}
for all $r_2>r_1>\rc$, which is the uniform boundedness statement in Thm.~\ref{thm:sds:kg}.
Moreover in view of \eqref{eq:sds:overview:volumeform} we obtain in particular the finiteness of the limit:
\begin{equation}
  \lim_{r\to\infty}r\int_{\Sigma_r}J^{M;\m}\cdot n\:\dm{\gb{r}}=
  \int_{\Sigma^+}\Bigl\{T(n,n)+\frac{1}{2}\m\psi^2\Bigr\}\dm{\cg}
\end{equation}

Let us turn to the local redshift effect at the cosmological horizons.
We need to show that Prop.~\ref{prop:sds:redshift} remains valid in the presence of a mass $\m>0$.
\begin{prop}[Local redshift effect for Klein-Gordon equations]
  \label{prop:localredshift:kg}
  Let $N$ be constructed as in Section~\ref{sec:sds:localredshift}. Then for all solutions~$\psi$ to \eqref{eq:sds:kg:r} we have
  \begin{equation*}
    \nabla\cdot J^{N;\m}\geq b\: J^{N;\m}\cdot N\quad\text{: on }\mathcal{C}^+\,,
  \end{equation*}
  where $b$ is a constant that only depends on $m$, $\Lambda$.
\end{prop}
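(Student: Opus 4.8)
The plan is to recognise the modified current $J^{N;\m}$ as the energy current of the \emph{genuine} Klein--Gordon energy-momentum tensor
$\tilde{T}_{\mu\nu}=\partial_\mu\psi\,\partial_\nu\psi-\tfrac12 g_{\mu\nu}\bigl(\partial^\alpha\psi\,\partial_\alpha\psi+\m\psi^2\bigr)$,
which is conserved, $\nabla^\mu\tilde{T}_{\mu\nu}=0$, for solutions of \eqref{eq:sds:kg:r}. Indeed $\tilde{T}_{\mu\nu}N^\nu=J^N_\mu-\tfrac{\m}{2}\psi^2\,N^\flat_\mu$, so that exactly as in \eqref{eq:KMd} one has $\nabla\cdot J^{N;\m}=\tilde{T}_{\mu\nu}\deformt{N}^{\mu\nu}=K^N[\psi]-\tfrac{\m}{2}\psi^2\,\tr\deformt{N}$; in particular the terms linear in $N\cdot\psi$ cancel automatically. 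Since $T$ is Killing, $\deformt{N}=\deformt{Y}$ and $K^N=K^Y$, so on $\mathcal{C}^+$ the first two terms are already controlled by the redshift estimate \eqref{eq:sds:redshift:div}, and the only genuinely new computation is that of the trace $\tr\deformt{Y}$ on the horizon.

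First I would evaluate $\tr\deformt{Y}$ on $\mathcal{C}^+$ in the null frame $(T,Y,E_A)$ of \eqref{eq:sds:local:nullframe:horizon}. Using $\nabla_TY=\nabla_YT=-\kC Y$ (from \eqref{eq:sds:kC:alternate}) and the extension law \eqref{eq:sds:redshift:extension}, $\nabla_YY=-\sigma(Y+T)$, together with $g(T,Y)\VC=-2$, one finds $\deformt{Y}(T,Y)\VC=\tfrac12\bigl(g(\nabla_TY,Y)+g(T,\nabla_YY)\bigr)\VC=\sigma$; on the other hand, since $Y$ is transverse to the spheres of symmetry with $Y\cdot r\VC=2$, the restriction of $\deformt{Y}$ to a section of the horizon is $\tfrac{Y\cdot r}{\rc}$ times the induced metric (no shear, by spherical symmetry). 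Contracting with the inverse metric --- for which $g^{TT}\VC=g^{YY}\VC=0$ and $g^{TY}\VC=-\tfrac12$ --- yields the \emph{constant}
\[
  \tr\deformt{Y}\VC=-\sigma+\frac{4}{\rc}\,,
\]
so that $\nabla\cdot J^{N;\m}\VC=K^Y\VC+\tfrac{\m}{2}\bigl(\sigma-\tfrac{4}{\rc}\bigr)\psi^2$.

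It remains to compare with the energy flux. Using $g(N,N)\VC=2\,g(T,Y)=-4$ and the null-frame form of $T_{\mu\nu}$ one computes $J^{N;\m}\cdot N\VC=\tilde{T}(N,N)\VC=\sq{T\psi}+\sq{Y\psi}+2\sqv{\nablab\psi}+2\m\psi^2$, a positive-definite quadratic expression. The proposition then follows by choosing the parameter $\sigma$ in the construction of Section~\ref{sec:sds:localredshift} large enough --- which is permitted, the construction only requiring $\sigma$ above a threshold --- so that simultaneously $\sigma>\bigl(\tfrac{2}{\rc}\bigr)^2\tfrac{2}{\kC}$ (giving $K^Y\VC\geq c_0\bigl(\sq{T\psi}+\sq{Y\psi}+\sqv{\nablab\psi}\bigr)$ with $c_0>0$ depending only on $m,\Lambda$, exactly as in \eqref{eq:sds:redshift:div}) and $\sigma>\tfrac{4}{\rc}$ (so that the mass contribution is nonnegative); one then takes $b=\min\{c_0/2,\ \tfrac14(\sigma-4/\rc)\}$, which depends only on $m$ and $\Lambda$ because $\rc$ and $\kC$ do. The only subtle point --- and the main obstacle --- is precisely the sign of the trace term: $-\tfrac{\m}{2}\psi^2\tr\deformt{Y}$ contributes with the \emph{wrong} sign unless $\sigma>4/\rc$, so one must check that enlarging $\sigma$ does not destroy the positivity of $K^Y$ on $\mathcal{C}^+$; this is immediate, since increasing $\sigma$ only strengthens \eqref{eq:sds:redshift:div} and leaves $g(N,N)\VC=-4$, hence the explicit formula for $J^{N;\m}\cdot N$, unchanged.
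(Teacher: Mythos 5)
Your proposal is correct and follows essentially the same route as the paper: both reduce the claim to the positivity of $K^N$ already established in \eqref{eq:sds:redshift:div} plus the single new computation $\tr\deformt{N}\onc=-\sigma+4/\rc<0$ for $\sigma$ large, carried out in the null frame $(T,Y,E_A)$ exactly as you do. Your explicit evaluation of $J^{N;\m}\cdot N$ on the horizon merely spells out a step the paper leaves implicit.
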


\begin{proof}
  It suffices to show that
  \begin{equation*}
    -\tr\deformt{N}> 0\,.
  \end{equation*}
  Recall the null frame $(T,Y,E_1,E_2)$ on $\mathcal{C}^+$. We have
  \begin{equation*}
    \tr\deformt{N}=-\deformt{N}(T,Y)+g^{AB}\deformt{N}_{AB}\,,
  \end{equation*}
  where $N=T+Y$.
  On one hand, on $\mathcal{C}^+$,
  \begin{equation*}
    \deformt{N}(T,Y)=\deformt{Y}(T,Y)=\frac{1}{2}g(\nabla_T Y,Y)+\frac{1}{2}g(T,\nabla_Y Y)=-\frac{1}{2}\sigma g(T,Y)=\sigma\,,
  \end{equation*}
  by construction, according to which $\nabla_Y Y=-\sigma N$.
  On the other hand, on $\mathcal{C}^+$,
  \begin{equation*}
    \nabla_{E_A}Y=\frac{2}{\rc}E_A
  \end{equation*}
  and thus
  \begin{equation*}
    \deformt{N}_{AB}=\frac{2}{\rc}g_{AB}\,.
  \end{equation*}
  Hence
  \begin{equation*}
    \tr\deformt{N}=-\sigma+\frac{4}{\rc}<0\,,
  \end{equation*}
  for $\sigma>0$ chosen large enough.
\end{proof}

We conclude the proof of the theorem with the required boundedness statement for the nondegenerate energy flux through the cosmological horizon.

\begin{prop}\label{prop:r:static:kg}
  Let $\Sigma$ and $\mathcal{C}^+_0$ be as in Prop.~\ref{prop:r:static}; c.f.~figure~\ref{fig:not:redshift:props}.
  Then for all solutions $\psi$ to \eqref{eq:sds:kg:r}, with $\m>0$ and $N$ as in Prop.~\ref{prop:localredshift:kg},
  \begin{equation*}
    \int_{\mathcal{C}^+_0}{}^\ast J^{N;\m}[\psi]\leq C \int_\Sigma J^{n;\m}[\psi]\cdot n\:\dm{\gb{}}\,,
  \end{equation*}
  where $C$ is a constant that only depends on $n$, $\Lambda$ and $\Sigma.$
\end{prop}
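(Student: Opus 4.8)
The plan is to follow the proof of Prop.~\ref{prop:r:static} in \cite{dr:sds} line by line, replacing throughout each energy current $J^X$ by the modified current $J^{X;\m}=J^X-\frac{\m}{2}\psi^2\,X^\flat$ of \eqref{eq:modifiedcurrent}, and checking that the positive mass term spoils none of the sign conditions that the massless argument rests on. Three structural facts make that proof work, and each has an immediate counterpart here. $(i)$ The modified current of the Killing field $T$ is \emph{conserved}: since $\deformt{T}=0$ we have $\tr\deformt{T}=0$ and hence $\nabla\cdot J^{T;\m}=K^T-\frac{\m}{2}(\tr\deformt{T})\psi^2=0$ for every solution of \eqref{eq:sds:kg:r}. $(ii)$ On $\overline{\mathcal S}$ the vectorfield $T$ is everywhere future causal --- there is no ergoregion in the static region of Schwarzschild de Sitter --- and is timelike away from the horizons; consequently $J^{T;\m}\cdot n=T(T,n)+\frac{\m}{2}\psi^2\lvert g(T,n)\rvert\geq 0$ on any spacelike hypersurface, and this quantity is coercive, controlling $(\partial\psi)^2+\m\,\psi^2$, on compact subsets of the open static region. $(iii)$ Near the cosmological horizon the modified red-shift current satisfies $\nabla\cdot J^{N;\m}\geq b\,J^{N;\m}\cdot N$ by Prop.~\ref{prop:localredshift:kg}, and the identical construction at the event horizon --- which also has positive surface gravity --- gives the same inequality there; by continuity both inequalities persist in a one-sided neighborhood of the respective horizons.

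With these in hand I would proceed in two steps. First, apply the divergence theorem to $J^{T;\m}$ on the region $\mathcal W=\overline{\mathcal S}\cap J^+(\Sigma)$, bounded to the past by $\Sigma$ and to the future by the segments of the cosmological and event horizons lying to the future of the bifurcation spheres. Since $\nabla\cdot J^{T;\m}=0$ and, by $(ii)$, every boundary flux through a null horizon segment is non-negative (for $T$ future causal and the null normal $L$ future directed one has $T(T,L)\geq 0$ and the mass contribution $-\frac{\m}{2}\psi^2 g(T,L)\geq 0$), one obtains $\int_{\mathcal C_0^+}{}^\ast J^{T;\m}[\psi]\leq\int_\Sigma J^{T;\m}[\psi]\cdot n\,\dm{\gb{}}\leq C\,D_{\m}[\psi]$; i.e.\ the \emph{degenerate} energy on $\mathcal C_0^+$ is already controlled. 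Second, introduce the vectorfield $N$ that coincides with $T$ outside fixed neighborhoods of the horizons, with the red-shift vectorfields of $(iii)$ on the horizons, and is an arbitrary fixed future-timelike interpolation in between, and apply the divergence theorem to $J^{N;\m}$ on $\mathcal W$. The bulk term splits into a non-negative contribution supported in the red-shift neighborhoods (by $(iii)$), a vanishing contribution from the region where $N=T$ (by $(i)$), and an error supported in the transition shells at bounded $r$ away from all horizons, which by $(ii)$ is bounded pointwise by $C\,J^{T;\m}\cdot n_{\mathrm{loc}}$ for a suitable local future-timelike normal. Absorbing this error against the conserved $T$-energy on the $\{t=\mathrm{const}\}$ slices of $\mathcal S$, together with the coercive bulk gain $b\,J^{N;\m}\cdot N$ from the red-shift, exactly as in \cite{dr:sds}, upgrades the degenerate bound to $\int_{\mathcal C_0^+}{}^\ast J^{N;\m}[\psi]\leq C\int_\Sigma J^{N;\m}[\psi]\cdot n\,\dm{\gb{}}$. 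Finally, since $N$ is a fixed future-timelike vectorfield, $J^{N;\m}[\psi]\cdot n\leq C(\Sigma)\,J^{n;\m}[\psi]\cdot n$ pointwise on $\Sigma$ (positivity of the energy--momentum tensor on pairs of future causal vectors together with $\lvert g(N,n)\rvert\leq C(\Sigma)$), which yields the stated estimate.

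The one genuinely delicate point, inherited verbatim from the boundedness theorem of \cite{dr:sds}, is the coupling in the transition shell between the merely conserved (and, on the horizons, degenerate) current $J^{T;\m}$ and the coercive red-shift current $J^{N;\m}$, and in particular the uniformity of the resulting estimate as one moves to the future along $\mathcal C^+$: one must ensure that the interpolation error does not accumulate along the horizon, which is exactly what the red-shift coercivity in $(iii)$ prevents. The mass term itself introduces no new difficulty: the zeroth-order quantity $\psi^2$ enters every current with the favorable sign, is already part of the finiteness hypothesis $D_{\m}[\psi]<\infty$, is transported by the conservation law $(i)$, drops out of the degenerate horizon flux (where $g(T,L)=0$) but reappears with a strictly positive weight in the nondegenerate flux ${}^\ast J^{N;\m}$ on $\mathcal C_0^+$ (where $g(N,L)\neq 0$), so that no Hardy inequality is needed. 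I would expect this last bookkeeping to be routine.
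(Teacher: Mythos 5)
Your preliminary observations are all sound --- the conservation of $J^{T;\m}$ for Killing $T$, the nonnegativity and interior coercivity of $J^{T;\m}\cdot n$ in the absence of an ergoregion, and the persistence of the redshift inequality at horizons of positive surface gravity in the presence of the mass term --- and they do yield your first step (the degenerate bound) as well as the uniform boundedness of the slice energies $\int_{\Sigma_\tau}J^{N;\m}\cdot n$. The gap is in the final leap to the flux through the \emph{semi-infinite} segment $\mathcal{C}_0^+$. In the energy identity for $J^{N;\m}$ on $J^+(\Sigma)\cap J^-(\mathcal{C}_0^+)$, the bulk error coming from the transition shell $r_0\leq r\leq r_1<\rc$ is a \emph{spacetime} integral, $\int_{\tau_0}^{\infty}\ud\tau\int_{\Sigma_\tau\cap\{r_0\leq r\leq r_1\}}(\cdot)$. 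Conservation of the $T$-energy controls each slice separately, so this term is only bounded by $C\,(\tau-\tau_0)\,D_{\m}[\psi]$ and grows linearly in $\tau$; and it cannot be absorbed into the coercive redshift bulk term, which is supported in the disjoint region adjacent to the horizon. Your mechanism therefore bounds the nondegenerate horizon flux over segments of fixed length, uniformly in their starting time, but not over all of $\mathcal{C}_0^+$. What is actually required is an integrated local energy decay (Morawetz) estimate $\int \chi\, J^{n;\m}\cdot n\:\dm{g}\leq C\int_\Sigma J^{n;\m}\cdot n\:\dm{\gb{}}$ for a cutoff $\chi$ supported in the shell.

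This is precisely where the paper parts ways with you: its proof reduces the proposition to exactly that ILED estimate and then establishes it with the modified Morawetz current $J^{X,1;\m}$, $X=f(r)\,\partial_{r^\ast}$ --- and it is able to do so only in the special case $m=0$ and $\m\geq 2\frac{\Lambda}{3}$. The first restriction removes the photon sphere at $r=3m$ (trapping is the known obstruction to such spacetime estimates for $m>0$, and your sketch never confronts it), and the second is needed for the sign of the zeroth-order bulk term $\frac{1}{r}\frac{\Lambda r^2}{3}\bigl(\m-2\frac{\Lambda}{3}\bigr)\psi^2$. So the ``routine bookkeeping'' you defer at the end is exactly the step the paper identifies as nontrivial and, for general $m>0$ and small $\m>0$, does not carry out, remarking only that the statement is missing from the literature and would require revisiting the currents of \cite{dr:sds} with the mass term included.
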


We remark that this statement is missing in the literature. A proof however can be obtained by revisiting the currents used in \cite{dr:sds}. Since the treatment of the static region is not at the centre of our interest in this paper, we only give a proof in the case $\m\geq 2\frac{\Lambda}{3}$, and $m=0$, that is for the Klein-Gordon equation on de Sitter (c.f.~Section~\ref{sec:ds}) where the discussion simplifies considerably.

\begin{proof}[Proof for $m=0$ and $\m\geq 2\frac{\Lambda}{3}$.]
  Apply the energy identity associated to the current $J^{N;\m}$ to the domain $J^+(\Sigma)\cap J^-(\mathcal{C}_0^+)$. The statement of the proposition follows provided
  \begin{equation*}
    \int_{J^+(\Sigma)\cap J^-(\mathcal{C}_0^+)} \nabla\cdot J^{N;\m}[\psi]\ \dm{g}\leq C\int_{\Sigma} J^{n;\m}[\psi]\cdot n\,\dm{\gb{}}\,.
  \end{equation*}
  In view of the definition and local redshift property of $N$, it sufficies to show that
  \begin{equation*}
    \int_{J^+(\Sigma)\cap J^-(\mathcal{C}_0^+)} \chi\: J^{n;\m}[\psi]\cdot n\:\dm{g}\leq C\int_{\Sigma} J^{n;\m}[\psi]\cdot n\,\dm{\gb{}}\,,
  \end{equation*}
  where $0\leq \chi\leq 1$ is function supported in a radial region $r_0\leq r\leq r_1<\rc$ away from the cosmological horizon.
  This is achieved with the current
  \begin{equation*}
    J^{X,1;\m}=J^{X,1}-\frac{\m}{2}X^\flat \psi^2
  \end{equation*}
  where, $X$ is a Morawetz vectorfield, c.f.~\eqref{eq:sds:tortoise},
  \begin{gather*}
    X=f(r)\frac{\partial}{\partial r^\ast}\,,\\
    J^{X,1}_\mu=J^X_\mu+\frac{1}{4}\:\bigl( f^\prime +\frac{2}{r} f\bigr)\bigl(1-\frac{\Lambda r^2}{3}\bigr)\:\partial_\mu\bigl(\psi^2\bigr)-\frac{1}{4}\:\partial_\mu\Bigl[ \bigl( f^\prime +\frac{2}{r} f\bigr)\bigl(1-\frac{\Lambda r^2}{3}\bigr)\Bigr]\:\psi^2\,.
  \end{gather*}
  Then with the simple choice $f=1$ we have for all solutions to \eqref{eq:sds:kg:r},
  \begin{equation*}
    \nabla\cdot J^{X,1;\m}=\frac{1}{r}\sqv{\nablab\psi}+\frac{1}{r}\frac{\Lambda r^2}{3}\Bigl(\m-2\frac{\Lambda}{3}\Bigr)\psi^2\geq 0\,.
  \end{equation*}
  The boundary terms of this current are controlled with the conserved energy associated to the vectorfield $T$.
   The remaining derivatives can be obtained with auxillary currents. For example choosing $f=r^2$ yields
   \begin{equation*}
     \nabla\cdot J^{X;\m}=2r\sq{\frac{\partial\psi}{\partial t}}+\bigl(2\frac{\Lambda r^2}{3}-1\bigr)r\sqv{\nablab\psi}+\m r \bigl(3\frac{\Lambda r^2}{3} -2\bigr)\psi^2\,,
   \end{equation*}
   which allows us to retrieve the $T$ derivative; similarly for the radial derivative.
\end{proof}

\section{Epilogue: The wave equation on de Sitter spacetimes}
\label{sec:ds}

The de Sitter cosmology is \emph{homogeneous} and can be cast as a hyperboloid in~$\mathbb{R}^5$ with spatially \emph{compact} sections $\mathbb{S}^3$.
The global study of solutions to the wave equation reduces in this setting in principle to a \emph{local} problem due to the presence of additional symmetries. 
In fact, it suffices to localise to the past of a point on the future boundary to obtain a global result; see also the work of Vasy~\cite{av:ds} and \cite{b:strichartz:a,baskin:parametrix,b:strichartz}.

In view of applications to Kerr de Sitter cosmologies however, we may entertain an approach to the global study of linear waves on de Sitter which \emph{does not make use of the homogeneity of the spacetime}. This consists in viewing the spacetime as spherically symmetric with respect to a \emph{fixed} timelike geodesic, and corresponds precisely to the case $m=0$ (see Section~\ref{sec:geometry:spherical:lambda}, or \cite{gh:ceh}). It does have the advantage that a global causal geometry emerges that shares some essential features with the Schwarzschild de Sitter geometry, as depicted in figure~\ref{figure:thm:ds}; in particular, there exists an expanding region in the sense described above, where the metric takes the form
\begin{equation}
  \label{eq:ds:intro:metric}
  g=-\phi^2\ud r^2+\gb{r}=-\frac{1}{\frac{\Lambda}{3}r^2-1}\ud r^2+\gb{r}\,,
\end{equation}
which is bounded in the past by cosmological horizons, and to the future by a spacelike hypersurface $\Sigma^+$ at infinity.

\begin{figure}[bt]
\includegraphics{}
\caption{Penrose diagram of de Sitter depicted as a spherically symmetric spacetime with respect to a chosen timelike geodesic $\Gamma$.}
\label{figure:thm:ds}
\end{figure}

This approach is carried out in all detail in my thesis \cite{vs:thesis}. The analysis yields results analogous to those of Section~\ref{sec:r}, and can form the basis of perturbative treatments. In particular we establish Theorem~\ref{thm:sds} in the case $m=0$.

Moreover, in \cite{vs:thesis} we provide an alternative proof of Prop.~\ref{prop:r:expanding} in double null coordinates, using the global redshift multiplier
\begin{equation}
  M=\frac{1}{1+\sqrt{\frac{\Lambda}{3}}r}\,\bigl(\bar{Y}+Y\bigr)\,,
  \qquad\text{where}\quad\bar{Y}=\frac{1}{\dd{u}{r}}\pd{u}\,,\quad Y=\frac{1}{\dd{v}{r}}\pd{v}\,,
\end{equation}
which highlights the global redshift effect as a suitable extension of the local redshift effect, (c.f.~Section~\ref{sec:sds:localredshift}).

\section{Acknowledgements}

This paper has grown out of work that I completed in significant parts in my thesis.
I would like to thank foremost my advisor Mihalis Dafermos for many insightful discussions and for suggesting this problem to me. I am very grateful for his support during my Ph.D. and beyond.
I have also benefited greatly from conversations with Willie Wong during his time at Cambridge, and more recently at MSRI.
The author thanks the Engineering Physical Sciences Research Council (UK), the Cambridge European Trust and the European Research Council for their financial support during his graduate studies, and the National Science Foundation (US) for their support of postdoctoral research at MSRI.

\appendix

\section{Coercivity inequality on the sphere}
\label{a:coercivity}

In this appendix we shall prove the coercivity formula for the standard sphere, and recall the classic Sobolev inequality on the sphere.

Let us denote by
\begin{equation}
  S_r=\Bigl\{x\in\mathbb{R}^3:\quad\lvert x\rvert=r\Bigr\}
\end{equation}
the sphere of radius $r$ in Euclidean space, a submanifold of
\begin{equation}
  \Bigl(\mathbb{R}^3,e=(\ud {x^1})^2+(\ud {x^2})^2+(\ud {x^3})^2\Bigr).
\end{equation}
We denote the metric of the round sphere $S_r$ by
\begin{equation}
  \gamma_r=e\bigr\rvert_{\mathrm{T}S_r}=r^2\g\,,
\end{equation}
where $\g$ is the standard metric on the unit sphere $\mathbb{S}^2$.

Let
\begin{equation}
  \label{eq:def:Omegai}
  \Omega_{(i)}=\epsilon_{ijk}\,x^j\,\pd{x^k}\qquad i=1,2,3\,,
\end{equation}
where $\epsilon$ is the volume form of $e$.
We have
\begin{equation}
  \sum_{i=1}^3\Oi{i}^m\Oi{i}^n=\lvert x\rvert^2\delta_{mn}-x^n x^m\,,
\end{equation}
and thus for all $x\in\mathbb{R}^3\backslash\{0\}$ and $u:\mathbb{R}^3\to\mathbb{R}$ differentiable:
\begin{equation}
  \sum_{i=1}^3\bigl(\Oi{i}u\bigr)^2(x)=\lvert x\rvert^2\Bigl[\lvert\nabla u\rvert^2-\bigl\langle\frac{x}{\lvert x\rvert},\nabla u\bigr\rangle^2\Bigr]=\lvert x\rvert^2\lvert\nablab u\rvert^2\,.
\end{equation}
This is the coercivity formula on the sphere. Here $\nablab=\Pi\nabla$, and
\begin{equation}
  \Pi_a^b(\xi)=\delta_a^b-\xi_a\,\xi^b\,,\qquad\xi\in\mathbb{S}^2\,,
\end{equation}
is the projection to the sphere; by uniqueness $\nablab$ is the connection of $\gamma_r$.

\begin{lemma}[Coercivity inequalities on the sphere]
Let $u$ be a smooth function on $S_r$, then
\begin{equation}
  \label{eq:coercivity:1st}
  r^2\lvert\nablab u\rvert_{\gamma_r}^2\leq\sum_{i=1}^3\bigl(\Oi{i}u\bigr)^2\,,
\end{equation}
and
\begin{equation}
  \label{eq:coercivity:2nd}
  r^4\lvert\nablab^2 u\rvert_{\gamma_r}^2\leq\sum_{i,j=1}^3\bigl(\Oi{i}\Oi{j}u\bigr)^2\,.
\end{equation}
\end{lemma}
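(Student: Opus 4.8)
The first inequality \eqref{eq:coercivity:1st} is precisely the coercivity formula already displayed just above — indeed it holds with equality — so the plan is to concentrate on the Hessian bound \eqref{eq:coercivity:2nd}. In fact I would prove the sharper identity
\begin{equation*}
  \sum_{i,j=1}^3\bigl(\Oi{i}\Oi{j}u\bigr)^2=r^4\bigl\lvert\nablab^2 u\bigr\rvert_{\gamma_r}^2+r^2\bigl\lvert\nablab u\bigr\rvert_{\gamma_r}^2\,,
\end{equation*}
where $\nablab$ now denotes the Levi--Civita connection of $\gamma_r$ and $\nablab^2$ the associated Hessian; then \eqref{eq:coercivity:2nd} follows by discarding the last, nonnegative, term.

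The first step is algebraic: for \emph{any} symmetric $(0,2)$-tensor $S$ on $S_r$ one has $\sum_{i,j}S(\Oi{i},\Oi{j})^2=r^4\lvert S\rvert_{\gamma_r}^2$. This comes from contracting $S\otimes S$ twice against $\sum_i\Oi{i}\otimes\Oi{i}$ and using that the $\mathbb{R}^3$-identity $\sum_i\Oi{i}^m\Oi{i}^n=\lvert x\rvert^2\delta^{mn}-x^mx^n$, restricted to vectors orthogonal to $x$, reads $\sum_i\Oi{i}^a\Oi{i}^b=r^2(\gamma_r^{-1})^{ab}$ on $\mathrm{T}S_r$ (the quadratic form $\lvert x\rvert^2\delta-x\otimes x$ acts there as $r^2$ times the identity). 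Applied with $S=\nablab^2 u$ this already produces the first term on the right-hand side.

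The second step is to pass from $\Oi{i}\Oi{j}u$ to the Hessian through $\Oi{i}\Oi{j}u=\nablab^2 u(\Oi{i},\Oi{j})+(\nablab_{\Oi{i}}\Oi{j})u$, which is legitimate since the $\Oi{i}$ are tangent to $S_r$. I would then compute the correction term explicitly: in flat $\mathbb{R}^3$ one gets $\nabla_{\Oi{i}}\Oi{j}=x^j\partial_i-\delta_{ij}\,x^k\partial_k$, and the Gauss formula (the induced connection is the tangential projection of the flat one) kills the purely radial second term, leaving $\nablab_{\Oi{i}}\Oi{j}=x^j\,(\partial_i)^{\top}$, the superscript denoting tangential projection to $S_r$; hence $C_{ij}:=(\nablab_{\Oi{i}}\Oi{j})u=x^j\,\langle\partial_i,\nablab u\rangle$ (Euclidean pairing, which only sees the tangential part of $\partial_i$). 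Writing $H_{ij}:=\nablab^2 u(\Oi{i},\Oi{j})$ and expanding $\sum_{i,j}(H_{ij}+C_{ij})^2$, the $H$--$H$ sum is $r^4\lvert\nablab^2 u\rvert_{\gamma_r}^2$ by Step~1, while the $C$--$C$ sum factors as $\bigl(\sum_j(x^j)^2\bigr)\bigl(\sum_i\langle\partial_i,\nablab u\rangle^2\bigr)=r^2\lvert\nablab u\rvert_{\gamma_r}^2$.

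The decisive point — and the only part that is not pure bookkeeping — is that the cross term vanishes: since $\sum_j x^j\Oi{j}=\epsilon_{jab}x^jx^a\partial_b=0$ by antisymmetry, one gets $\sum_j H_{ij}x^j=\nablab^2 u\bigl(\Oi{i},\sum_j x^j\Oi{j}\bigr)=0$, and therefore $\sum_{i,j}H_{ij}C_{ij}=\sum_i\langle\partial_i,\nablab u\rangle\sum_j H_{ij}x^j=0$. Adding the three contributions yields the identity, and hence \eqref{eq:coercivity:2nd}. The main obstacle to watch is simply keeping the ambient $\mathbb{R}^3$ covariant derivative and the intrinsic connection $\nablab$ of $\gamma_r$ cleanly separated while computing $\nablab_{\Oi{i}}\Oi{j}$, and then recognising the $\gamma_r$-orthogonality that makes the first-order correction terms decouple from the Hessian terms once summed over $i,j$.
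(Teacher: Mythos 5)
Your proof is correct, and it takes a genuinely different route from the one in the paper. The paper handles \eqref{eq:coercivity:2nd} by quoting the identity \eqref{eq:lie1form} for the Lie derivative of an $S_r$-one-form (whose proof it outsources to Lemma 11.2 of \cite{ch:formation}), substituting $\theta=\db u$, using that Lie derivatives commute with exterior derivatives to get $r^2\lvert\nablab^2u\rvert^2_{\gamma_r}\leq\sum_i\lvert\db(\Oi{i}u)\rvert^2_{\gamma_r}$, and then applying \eqref{eq:coercivity:1st} to each scalar $\Oi{i}u$. You instead expand $\Oi{i}\Oi{j}u$ directly into the intrinsic Hessian plus the first-order correction $(\nablab_{\Oi{i}}\Oi{j})u$ via the Gauss formula, and organize the sum of squares using the contraction identity $\sum_i\Oi{i}^a\Oi{i}^b=r^2(\gamma_r^{-1})^{ab}$ on $\mathrm{T}S_r$ together with the orthogonality $\sum_jx^j\Oi{j}=0$ that kills the cross terms. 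The two arguments prove the same underlying fact --- indeed, chaining the paper's identities with equalities throughout reproduces exactly your sharper identity $\sum_{i,j}(\Oi{i}\Oi{j}u)^2=r^4\lvert\nablab^2u\rvert^2_{\gamma_r}+r^2\lvert\nablab u\rvert^2_{\gamma_r}$ --- but yours is self-contained (no external lemma on Lie derivatives of one-forms is needed) and makes the extra positive term $r^2\lvert\nablab u\rvert^2_{\gamma_r}$ explicit, at the modest cost of a slightly longer computation of the connection coefficients $\nabla_{\Oi{i}}\Oi{j}=x^j\partial_i-\delta_{ij}x^k\partial_k$ and their tangential projections. All the individual steps check out: the tensor contraction in Step 1, the Hessian decomposition, the evaluation of the $C$--$C$ sum as $r^2\lvert\nablab u\rvert^2_{\gamma_r}$, and the vanishing of the cross term.
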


We have already shown the first inequality. For the second inequality we can use that more generally for any $S_r$-1-form (a 1-form on $\mathbb{R}^3$ such that $\theta\cdot X=\theta\cdot\Pi X$) it holds
\begin{equation}
  \label{eq:lie1form}
  \sum_{i=1}^3\bigl\lvert{\mathcal{L}\!\!\!/}_{\Oi{i}}\theta\bigr\rvert_{\gamma_r}^2=r^2\bigl\lvert\nablab\theta\bigr\rvert_{\gamma_r}^2+\bigl\lvert\theta\bigr\rvert_{\gamma_r}^2\,,
\end{equation}
where $\mathcal{L}\!\!\!/$ denotes the Lie derivative on $S_r$. (To prove \eqref{eq:lie1form} one can proceed analogeously to Lemma 11.2 in \cite{ch:formation}.)
By substituting
\begin{equation}
  \theta=\db u\doteq\ud u\rvert_{\mathrm{T}S_r}\,,
\end{equation}
we then obtain
\begin{multline}\label{eq:coercivity:commute}
  r^2\bigl\lvert\nablab^2 u\bigr\rvert_{\gamma_r}^2=r^2\bigl\lvert\nablab\db u\bigr\rvert^2_{\gamma_r}\leq\\
  \leq\sum_{i=1}^3\bigl\lvert\Lb_{\Oi{i}}\db u\bigr\rvert^2_{\gamma_r}=\sum_{i=1}^3\bigl\lvert\db\Lb_{\Oi{i}}u\bigr\rvert_{\gamma_r}^2=\sum_{i=1}^3\bigl\lvert\nablab(\Oi{i})\bigr\rvert^2_{\gamma_r}\,,
\end{multline}
because Lie derivatives commute with exterior derivatives. Inequality \eqref{eq:coercivity:2nd} then follows from \eqref{eq:coercivity:commute} using \eqref{eq:coercivity:1st}.

We recall the classical Sobolev inequality on the sphere.
\begin{lemma}[Sobolev embedding on $\mathbb{S}^2$]\label{lemma:sobolevembeddingsphere}
 Let $u\in\mathrm{H}^2(\mathbb{S}^2)$, then $u\in\mathrm{L}^\infty(\mathbb{S}^2)$ and
 \begin{equation}
   \label{eq:lemma:sobolev}
   \lVert u\rVert_{\mathrm{L}^\infty(\mathbb{S}^2)}\leq C\,\lVert u\rVert_{\mathrm{H}^2(\mathbb{S}^2)}\,.
 \end{equation}
\end{lemma}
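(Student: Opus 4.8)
The plan is to prove the stated embedding $\mathrm{H}^2(\mathbb{S}^2)\hookrightarrow\mathrm{L}^\infty(\mathbb{S}^2)$ in the classical way for Sobolev spaces on a closed surface; I sketch two equivalent routes. \emph{Reduction to the Euclidean case.} First I would fix a finite atlas of coordinate balls $(U_k,\varphi_k)$, $k=1,\dots,K$, covering $\mathbb{S}^2$, together with a subordinate partition of unity $\chi_k\in\mathrm{C}^\infty_c(U_k)$, $\sum_k\chi_k\equiv1$. For $u\in\mathrm{H}^2(\mathbb{S}^2)$ the functions $v_k\doteq(\chi_k u)\circ\varphi_k^{-1}$ lie in $\mathrm{H}^2(\mathbb{R}^2)$ with compact support and $\lVert v_k\rVert_{\mathrm{H}^2(\mathbb{R}^2)}\leq C_k\lVert u\rVert_{\mathrm{H}^2(\mathbb{S}^2)}$, the constant depending only on the chart and on $\chi_k$; since $\lvert u\rvert\leq\sum_k\lvert v_k\circ\varphi_k\rvert$ it then suffices to prove $\mathrm{H}^2(\mathbb{R}^2)\hookrightarrow\mathrm{L}^\infty(\mathbb{R}^2)$. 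The latter follows from Fourier inversion, $\lvert v(x)\rvert\leq(2\pi)^{-1}\lVert\hat v\rVert_{\mathrm{L}^1}$, together with the Cauchy--Schwarz estimate
\begin{equation*}
  \lVert\hat v\rVert_{\mathrm{L}^1(\mathbb{R}^2)}\leq\Bigl(\int_{\mathbb{R}^2}\frac{\ud\xi}{(1+\lvert\xi\rvert^2)^2}\Bigr)^{1/2}\Bigl(\int_{\mathbb{R}^2}(1+\lvert\xi\rvert^2)^2\lvert\hat v(\xi)\rvert^2\,\ud\xi\Bigr)^{1/2}=C\lVert v\rVert_{\mathrm{H}^2(\mathbb{R}^2)}\,,
\end{equation*}
the first integral converging since $\int_0^\infty r\,(1+r^2)^{-2}\,\ud r<\infty$ in dimension two.

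\emph{Spectral route.} Alternatively, expanding $u=\sum_{\ell\geq0}\sum_{\lvert m\rvert\leq\ell}c_{\ell m}Y_{\ell m}$ in spherical harmonics, with $-\Delta Y_{\ell m}=\ell(\ell+1)Y_{\ell m}$, a standard elliptic estimate on the closed surface $\mathbb{S}^2$ (equivalently the Bochner identity, which on $(\mathbb{S}^2,\g)$ relates $\int\lvert\nablabc^2 u\rvert^2$ to $\int(\Delta u)^2$ and $\int\lvert\nablabc u\rvert^2$) gives $\lVert u\rVert_{\mathrm{H}^2(\mathbb{S}^2)}^2\sim\sum_{\ell,m}(1+\ell(\ell+1))^2\lvert c_{\ell m}\rvert^2$. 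I would then invoke the addition theorem $\sum_{\lvert m\rvert\leq\ell}\lvert Y_{\ell m}(\xi)\rvert^2=(2\ell+1)/4\pi$ and apply Cauchy--Schwarz first in $m$ and then in $\ell$ to obtain
\begin{equation*}
  \lvert u(\xi)\rvert\leq\sum_\ell\Bigl(\sum_m\lvert c_{\ell m}\rvert^2\Bigr)^{1/2}\sqrt{\tfrac{2\ell+1}{4\pi}}\leq\Bigl(\sum_\ell\frac{2\ell+1}{4\pi\,(1+\ell(\ell+1))^2}\Bigr)^{1/2}\lVert u\rVert_{\mathrm{H}^2(\mathbb{S}^2)}\,,
\end{equation*}
where the last $\ell$-sum converges because its summand is $\mathcal{O}(\ell^{-3})$.

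Since this is the classical fact that $\mathrm{H}^s\hookrightarrow\mathrm{L}^\infty$ whenever $s>n/2$, applied with $n=2$ and $s=2$, there is no genuine difficulty in the argument. The only point that deserves care --- and what I would regard as the ``main obstacle'' in a fully self-contained write-up --- is that the $\mathrm{H}^2$ norm used here is the one built from the covariant derivatives $\nablabc u$ and $\nablabc^2 u$ on $(\mathbb{S}^2,\g)$, so one must match it with the norms actually estimated above: with the full Euclidean second-derivative norm in each chart in the first route, and with the spectrally defined norm $\lVert(1-\Delta)u\rVert_{\mathrm{L}^2}$ in the second. Both matchings are routine consequences of elliptic regularity on the compact manifold $\mathbb{S}^2$, and for the purposes of this paper it would in fact be enough to cite the embedding as the standard result that it is.
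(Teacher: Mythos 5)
Your proposal is correct; note, however, that the paper itself gives no proof of this lemma at all --- it simply ``recall[s] the classical Sobolev inequality on the sphere'' and states it, so there is nothing internal to compare against. Both of your routes (chart/partition-of-unity reduction to $\mathrm{H}^2(\mathbb{R}^2)\hookrightarrow\mathrm{L}^\infty(\mathbb{R}^2)$ via Fourier inversion, and the spherical-harmonics argument via the addition theorem) are standard and sound, and your closing remark --- that the only point requiring care is matching the geometric $\mathrm{H}^2$ norm built from $\nablabc$, $\nablabc^2$ with the chartwise or spectral norms, and that for the purposes of this paper citing the embedding suffices --- is exactly the stance the paper takes.
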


Given a function $u$ on $S_r$ we can apply Lemma \ref{lemma:sobolevembeddingsphere} to $u\circ h$, where
\begin{equation}
  h:\mathbb{S}^2\to S_r,\quad\xi\mapsto r\xi\,.
\end{equation}
Since, in the coordinates
\begin{equation}
  \gamma_r=r^2\g=r^2\g_{AB}\ud y^A\ud y^B\,,
\end{equation}
$h$ is the identity mapping, we get
\begin{equation}
  \lvert\stackrel{\circ}{\nablab}(u\circ h)\rvert^2_{\g}=r^2\lvert(\nablab u)\circ h\rvert_{\gamma_r}^2\,,
  \quad\lvert{\stackrel{\circ}{\nablab}}^2(u\circ h)\rvert^2_{\g}=r^4\lvert(\nablab^2 u)\circ h\rvert^2_{\gamma_r}\,,
\end{equation}
and thus
\begin{multline}
  r\lvert u\rvert\Bigr\rvert_{S_r}\leq C\Bigl(\int_{S_r}\lvert u\rvert^2\dm{\gamma_r}\Bigr)^\frac{1}{2}+C\Bigl(\int_{S_r}r^2\lvert \nablab u\rvert^2\dm{\gamma_r}\Bigr)^\frac{1}{2}\\+C\Bigl(\int_{S_r}r^4\lvert \nablab^2 u\rvert^2\dm{\gamma_r}\Bigr)^\frac{1}{2}\,.
\end{multline}

\begin{cor}\label{cor:sobolev:spherer}
Let $u\in\mathrm{H}^2(S_r)$, then
\begin{multline}
  r\lvert u\rvert\Bigr\rvert_{S_r}\leq C\Bigl(\int_{S_r}\lvert u\rvert^2\dm{\gamma_r}\Bigr)^\frac{1}{2}+C\Bigl(\int_{S_r}\sum_{i=1}^3\bigl( \Oi{i}u\bigr)^2\dm{\gamma_r}\Bigr)^\frac{1}{2}\\+C\Bigl(\int_{S_r}\sum_{i,j=1}^3\bigl(\Oi{i}\Oi{j}u\bigr)^2\dm{\gamma_r}\Bigr)^\frac{1}{2}\,.
\end{multline}
\end{cor}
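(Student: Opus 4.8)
The plan is to deduce Corollary~\ref{cor:sobolev:spherer} by combining the scaled Sobolev inequality on $S_r$ established immediately above with the two coercivity estimates \eqref{eq:coercivity:1st} and \eqref{eq:coercivity:2nd}, so that the intrinsic angular derivatives $\nablab u$, $\nablab^2 u$ appearing there are replaced by the rotation generators $\Oi{i}$. Since all three ingredients are already in hand, the argument is short; the only thing to watch is that the powers of $r$ match up.

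To keep the account self-contained I would first re-derive the weighted Sobolev inequality on $S_r$. Consider the dilation $h\colon\mathbb{S}^2\to S_r$, $\xi\mapsto r\xi$, a homothety of ratio $r$ between $(\mathbb{S}^2,\g)$ and $(S_r,\gamma_r=r^2\g)$; in the coordinates $y^A$ in which $\gamma_r=r^2\g_{AB}\,\ud y^A\ud y^B$ it is the identity map. Hence $\lVert u\rVert_{\mathrm{L}^\infty(S_r)}=\lVert u\circ h\rVert_{\mathrm{L}^\infty(\mathbb{S}^2)}$; the volume elements satisfy $h^\ast\dm{\gamma_r}=r^2\dm{\g}$; and, since each contraction with the inverse metric contributes a factor $r^2$, $\lvert\nablabc(u\circ h)\rvert^2_{\g}=r^2\bigl(\lvert\nablab u\rvert^2_{\gamma_r}\circ h\bigr)$ and $\lvert{\nablabc}^2(u\circ h)\rvert^2_{\g}=r^4\bigl(\lvert\nablab^2 u\rvert^2_{\gamma_r}\circ h\bigr)$. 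Applying Lemma~\ref{lemma:sobolevembeddingsphere} to $u\circ h$ and changing variables back to $S_r$ then yields
\[
 r\,\lVert u\rVert_{\mathrm{L}^\infty(S_r)}\leq C\Bigl(\int_{S_r}\lvert u\rvert^2\dm{\gamma_r}\Bigr)^{1/2}+C\Bigl(\int_{S_r}r^2\lvert\nablab u\rvert^2\dm{\gamma_r}\Bigr)^{1/2}+C\Bigl(\int_{S_r}r^4\lvert\nablab^2 u\rvert^2\dm{\gamma_r}\Bigr)^{1/2},
\]
which is exactly the inequality recorded before the statement.

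Finally I would apply the coercivity estimates pointwise on $S_r$: \eqref{eq:coercivity:1st} gives $r^2\lvert\nablab u\rvert^2_{\gamma_r}\le\sum_{i=1}^3\sq{\Oi{i}u}$ and \eqref{eq:coercivity:2nd} gives $r^4\lvert\nablab^2 u\rvert^2_{\gamma_r}\le\sum_{i,j=1}^3\sq{\Oi{i}\Oi{j}u}$; integrating these over $S_r$ and inserting them into the displayed inequality produces the assertion. There is no genuine obstacle here: the scaling is forced by homogeneity, and the $r$-weights it generates are precisely those absorbed by \eqref{eq:coercivity:1st}--\eqref{eq:coercivity:2nd}. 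The only slightly non-elementary input is the identity \eqref{eq:lie1form} underlying the second-order coercivity estimate, which one checks as in Lemma~11.2 of \cite{ch:formation} and which may here be taken as given.
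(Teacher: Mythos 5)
Your proposal is correct and coincides with the paper's own argument: the weighted Sobolev inequality on $S_r$ is obtained by pulling back through the homothety $h\colon\mathbb{S}^2\to S_r$ and applying Lemma~\ref{lemma:sobolevembeddingsphere}, and the resulting $r$-weighted intrinsic derivatives are then replaced by the rotation fields via the coercivity inequalities \eqref{eq:coercivity:1st} and \eqref{eq:coercivity:2nd}. The scaling bookkeeping ($h^\ast\dm{\gamma_r}=r^2\dm{\g}$ and the factors $r^2$, $r^4$ on the first and second derivatives) matches the paper exactly.
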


\section{Kerr de Sitter in comoving coordinates}
\label{a:kerr}

In this appendix we show the statement of Proposition~\ref{prop:kerr:G} that subextremal Kerr de Sitter spacetimes satisfy the closeness assumptions of Definition~\ref{def:appl:class}.

A revealing aspect of the Boyer-Lindquist coordinates $(t,r,\theta,\phi)$ is that setting $m=0$ in (\ref{eq:kerrdesitter}--\ref{eq:kerrdesitter:h}) reduces the Kerr de Sitter metric $g$ to the \emph{de Sitter}-metric in ``co-rotating'' coordinates. The change of coordinates that transforms the de Sitter metric $g_0$ in standard coordinates $(t_0,r_0,\theta_0,\phi_0)$,
\begin{equation}
  g_0=-\frac{1}{\frac{\Lambda}{3}r_0^2-1}\ud r_0^2+\Bigl(\frac{\Lambda}{3}r_0^2-1\Bigr)\ud t_0^2+r_0^2\bigl(\ud\theta_0^2+\sin^2\theta_0\ud\phi_0^2\bigr)\,,
\end{equation}
to the expression obtained from the Kerr de Sitter metric in Boyer-Lindquist coordinates by setting $m=0$ is well-known since \cite{carter:lecture,hawking:rot:ads} and explicitly given by:
\begin{equation}
  (t_0,r_0,\theta_0,\phi_0)\to (t,r,\theta,\phi)\,,
\end{equation}
where for some fixed $a>0$, with the notation from \eqref{eq:kerrdesitter:notation},
\begin{subequations}\label{eq:trans:rot}
\begin{gather}
  t_0(t)=t\\
  \phi_0(t,\phi)=\phi-\frac{\Lambda}{3}\, a\, t\\
  r_0^2(\theta,r)=\frac{1}{\Delta_0}\Bigl[r^2\Delta_\theta+a^2\sin^2\theta\Bigr]\\
  r_0(\theta,r) \cos\theta_0(\theta,r)= r\cos\theta\,.
\end{gather}
\end{subequations}
Indeed, the de Sitter metric expressed in these ``comoving'' coordinates takes the form:
\begin{equation}
  g_0=f_0+h_0\,,
\end{equation}
where
\begin{equation}
 f_0=\Bigl[\frac{\Lambda}{3}\bigl(r^2+a^2\sin^2\theta\bigr)-1\Bigr]\,\ud t^2-2\frac{\Lambda}{3}\,a\frac{r^2+a^2}{\Delta_0}\sin^2\theta\,\ud t\,\ud \phi+\frac{r^2+a^2}{\Delta_0}\,\sin^2\theta\,\ud\phi^2
\end{equation}
\begin{equation}
   h_0=\frac{\rho^2}{(r^2+a^2)(1-\frac{\Lambda}{3}r^2)}\,\ud r^2+\frac{\rho^2}{\Delta_\theta^2}\,\ud\theta^2\,.
\end{equation}
Thus in comparison to \eqref{eq:kerrdesitter:f} and \eqref{eq:kerrdesitter:h},
\begin{equation}
  g=g_0+\frac{2mr}{\rho^2}\Bigl(\ud t-\frac{a\sin^2\theta}{\Delta_0}\,\ud\phi\Bigr)^2+\frac{2mr\rho^2}{\Delta_r\rvert_{m=0}\Delta_r}\,\ud r^2\,.
\end{equation}
It is then easy to deduce using \eqref{eq:trans:rot} that with respect to $(t_0,r_0,\theta_0,\phi_0)$-coordinates,
\begin{subequations}\label{eq:kerrdesitter:desitter}
\begin{gather}
  \bigl(g-g_0\bigr)_{t_0 t_0}=\frac{2mr}{\rho^2}\frac{\Delta_\theta^2}{\Delta_0^2}\\
  \bigl(g-g_0\bigr)_{t_0 \phi_0}=-\frac{2mr}{\rho^2}\frac{\Delta_\theta}{\Delta_0}\frac{a\sin^2\theta}{\Delta_0}\\
  \bigl(g-g_0\bigr)_{\phi_0\phi_0}=\frac{2mr}{\rho^2}\frac{a^2\sin^4\theta}{\Delta_0^2}\displaybreak[0]\\
  \bigl(g-g_0\bigr)_{r_0r_0}=\frac{2mr}{\rho^2}\frac{1}{\Delta_r}\frac{\bigl(\Delta_0 r_0 (r-r_0)+r^2+a^2\bigr)^2}{(r^2+a^2)(1-\frac{\Lambda r^2}{3})}\\
  \bigl(g-g_0\bigr)_{r_0\theta_0}=-2\frac{2mr}{\rho^2}\frac{1}{\Delta_r}a^2 r_0\sin\theta_0\cos\theta\Bigl(\frac{\Delta_0 r_0 (r-r_0)}{r^2+a^2}+1\Bigr)\\
  \bigr(g-g_0\bigr)_{\theta_0\theta_0}=\frac{2mr}{\rho^2}\frac{1}{\Delta_r}a^4\frac{1-\frac{\Lambda r^2}{3}}{r^2+a^2}r_0^2\sin^2\theta_0\cos^2\theta\,.
\end{gather}
\end{subequations}
Similar albeit slightly longer expressions can be derived for the components of the inverse metric $g^{-1}$ in comparison to $g_0^{-1}$. Note for that purpose that
\begin{equation}
  \label{eq:kerrdesitter:detf}
  \det f=-\frac{\sin^2\theta}{\Delta_0^2}\:\Delta_r\:\Delta_\theta\,,
\end{equation}
and thus
\begin{subequations}
\begin{gather}
  (f^{-1})^{tt}=-\frac{\Delta_0}{\Delta_r\,\Delta_\theta}\biggl[r^2+a^2+\frac{2mr}{\rho^2}\frac{a^2}{\Delta_0}\sin^2\theta\biggr]\\
  (f^{-1})^{t\phi}=-a\frac{\Delta_0}{\Delta_r\Delta_\theta}\biggl[\frac{\Lambda}{3}\Bigl(r^2+a^2\Bigr)+\frac{2mr}{\rho^2}\biggr]\\
  (f^{-1})^{\phi\phi}=-\frac{1}{\sin^2\theta}\frac{\Delta_0^2}{\Delta_r\Delta_\theta}\biggl[\frac{\Lambda}{3}\Bigl(r^2+a^2\sin^2\theta\Bigr)+\frac{2mr}{\rho^2}-1\biggr]
\end{gather}
\end{subequations}
while simply
\begin{equation}
    (h^{-1})^{rr}=\frac{\Delta_r}{\rho^2}\,,\qquad
  (h^{-1})^{\theta\theta}=\frac{\Delta_\theta}{\rho^2}\,.
\end{equation}
Analogous formulas to \eqref{eq:kerrdesitter:desitter} are then found by expressing the components of $g^{-1}$ in $(t_0,r_0,\theta_0,\phi_0)$-coordinates in terms of the components above with respect to $(t,r,\theta,\phi)$-coordinates using \eqref{eq:trans:rot}.

While the above algebraic manipulations essentially reveal the asymptotic closeness property of the Kerr de Sitter family to the de Sitter cosmology, we are here for convenience interested in its precise relation to the Schwarzschild de Sitter spacetime:
\begin{equation}
  g_m=-\frac{1}{\frac{\Lambda r_0^2}{3}+\frac{2m}{r_0}-1}\ud r_0^2+\Bigl(\frac{\Lambda r_0^2}{3}+\frac{2m}{r_0}-1\Bigr)\ud t_0^2+r_0^2\bigl(\ud\theta_0^2+\sin^2\theta_0\ud\phi_0^2\bigr)
\end{equation}
Note that by setting $a=0$ in \eqref{eq:kerrdesitter:desitter} we obtain precisely the difference betweeen $g_m$ and $g_0$ in standard $(t_0,r_0,\theta_0,\phi_0)$-coordinates. Now \eqref{eq:kerrdesitter:desitter} yields as $1/r\to 0$:
\begin{subequations}\label{eq:kerrdesitter:desitter:a}
\begin{gather}
  \bigl(g-g_m\bigr)_{t_0 t_0}=\mathcal{O}\Bigl(\frac{2m}{r_0}\Bigr)\\
  \bigl(g-g_m\bigr)_{t_0 \phi_0}=\mathcal{O}\Bigl(\frac{2m}{r_0}\Bigr)\\
  \bigl(g-g_m\bigr)_{\phi_0\phi_0}=\mathcal{O}\Bigl(\frac{2m}{r_0}\Bigr)\\
  \bigl(g-g_m\bigr)_{r_0r_0}=\frac{1}{\frac{\Lambda}{3}r_0^2+\frac{2m}{r_0}-1}\mathcal{O}\Bigl(\frac{2m}{r_0^3}\Bigr)\displaybreak[0]\\
  \bigl(g-g_m\bigr)_{r_0\theta_0}=\frac{1}{\frac{\Lambda}{3}r_0^2+\frac{2m}{r_0}-1}\mathcal{O}\bigl(\frac{2m}{r_0^2}\bigr)\\
  \bigl(g-g_m\bigr)_{\theta_0\theta_0}=\mathcal{O}\Bigl(\frac{2m}{r_0^3}\Bigr)\,.
\end{gather}
Similarly we note for the components of the inverse:
\end{subequations}
\begin{subequations}\label{eq:kerrdesitter:desitter:inverse:a}
  \begin{gather}
  (g-g_m)^{t_0t_0}=\frac{1}{\frac{\Lambda}{3}r_0^2+\frac{2m}{r_0}-1}\mathcal{O}\Bigl(\frac{a^2}{r_0^2}\Bigr)\\
  (g-g_m)^{t_0\phi_0}=\frac{1}{\frac{\Lambda}{3}r_0^2+\frac{2m}{r_0}-1}\mathcal{O}\Bigl(\frac{2m}{r_0^3}\Bigr)\\
  (g-g_m)^{\phi_0\phi_0}=\mathcal{O}\Bigl(\frac{2m}{r_0^5}\Bigr)\displaybreak[0]\\
  (g-g_m)^{r_0r_0}=\mathcal{O}\Bigl(\frac{2m}{r_0}\Bigr)\\
  (g-g_m)^{r_0\theta_0}=\mathcal{O}\Bigl(\frac{2m}{r_0^4}\Bigr)\\
  (g-g_m)^{\theta_0\theta_0}=\mathcal{O}\Bigl(\frac{2m}{r_0^7}\Bigr)
\end{gather}
\end{subequations}

The condition~(1) of Def.~\ref{def:appl:class} on $\mathrm{C}^0$-convergence to $g_m$ in the chart $(t_0,r_0,\theta_0,\phi_0)$ is thus evidently verified by the Kerr de Sitter family with $\delta=2$.

The results \eqref{eq:kerrdesitter:desitter:a} and \eqref{eq:kerrdesitter:desitter:inverse:a} can also be used to calculate the connections coefficients of $g$ in $(t_0,r_0,\theta_0,\phi_0)$-coordinates, which in turn lead to detailed bounds on the components of ${}^\prime\pi$, c.f.~Def.~\ref{def:appl:class}~(2).
For definiteness we note here:
\begin{subequations}
\begin{gather}
  {}^\prime\pi_{t_0t_0}-{({}^\prime\pi_m)}_{t_0t_0}
  =\Bigl(\frac{\Lambda r_0^2}{3}+\frac{2m}{r_0}-1\Bigr)\mathcal{O}\Bigl(\frac{1}{r_0^3}\Bigr)\\
  {}^\prime\pi_{t_0\phi_0}
=\mathcal{O}\Bigl(\frac{2m}{r_0^2}\Bigr)\displaybreak[0]\\
  {}^\prime\pi_{t_0r_0}=0\qquad {}^\prime\pi_{t_0\theta_0}=0\\
  {}^\prime\pi_{\phi_0\phi_0}-{({}^\prime\pi_m)}_{\phi_0\phi_0}
  =\mathcal{O}\Bigl(\frac{2m}{r_0^2}\Bigr)\displaybreak[0]\\
    {}^\prime\pi_{\phi_0 r_0}=0\qquad {}^\prime\pi_{\phi_0\theta_0}=0\qquad
  {}^\prime\pi_{r_0t_0}=0\qquad{}^\prime\pi_{r_0\phi_0}=0\\
    {}^\prime\pi_{r_0r_0}-{({}^\prime\pi_m)}_{r_0r_0}
  =\frac{1}{\frac{\Lambda r_0^2}{3}+\frac{2m}{r_0}-1}\mathcal{O}\Bigl(\frac{2m}{r_0^4}\Bigr)\displaybreak[0]\\
    {}^\prime\pi_{r_0\theta_0}
  =\frac{1}{\frac{\Lambda}{3}r_0^2+\frac{2m}{r_0}-1}\mathcal{O}\bigl(\frac{2m}{r_0^3}\bigr)\\
    {}^\prime\pi_{\theta_0\theta_0}-{({}^\prime\pi_m)}_{\theta_0\theta_0}
  =\mathcal{O}\Bigl(\frac{2m}{r_0^4}\Bigr)
\end{gather}
\end{subequations}
This completes the proof of the asymptotic closeness properties for Proposition~\ref{prop:kerr:G}.

\begin{thebibliography}{HHTR99}

\bibitem[AMV]{k:fs}
Artur Alho, Filipe Mena, and Juan~A. {Valiente Kroon}, \emph{The
  {E}instein-{F}riedrich-nonlinear scalar field system and the stability of
  scalar field cosmologies}, arXiv:1006.3778.

\bibitem[Are11]{a:in}
Stefanos Aretakis, \emph{Stability and instability of extreme
  {R}eissner-{N}ordstr\"om black hole spacetimes for linear scalar
  perturbations}, Comm. Math. Phys. \textbf{307} (2011), 17--63, also available
  online at http://arxiv.org/abs/1110.2007.

\bibitem[Bas10a]{baskin:parametrix}
Dean Baskin, \emph{A parametrix for the fundamental solution of the
  {K}lein-{G}ordon equation on asymptotically de {S}itter spaces}, J. Funct.
  Anal. \textbf{259} (2010), no.~7, 1673--1719.

\bibitem[Bas10b]{b:strichartz}
\bysame, \emph{A {S}trichartz estimate for de {S}itter space}, The {AMSI}-{ANU}
  {W}orkshop on {S}pectral {T}heory and {H}armonic {A}nalysis, Proc. Centre
  Math. Appl. Austral. Nat. Univ., vol.~44, Austral. Nat. Univ., Canberra,
  2010, pp.~97--104.

\bibitem[Bas13]{b:strichartz:a}
\bysame, \emph{Strichartz estimates on asymptotically de {S}itter spaces}, Ann.
  Henri Poincar\'e \textbf{14} (2013), no.~2, 221--252.

\bibitem[BH08]{bh:sds}
Jean-Francois Bony and Dietrich H\"afner, \emph{Decay and non-decay of the
  local energy for the wave equation on the de {S}itter-{S}chwarzschild
  metric}, Communications in Mathematical Physics \textbf{282} (2008), no.~3,
  697--719.

\bibitem[Car72]{carter:lecture}
Brandon Carter, \emph{Black hole equilibrium states}, Black Holes (New York)
  (B.S. DeWitt and C.~DeWitt, eds.), Les Houches Lectures, Gordon and Breach,
  1972, pp.~59--214.

\bibitem[Chr95]{ch:fluids:sym}
Demetrios Christodoulou, \emph{Self-gravitating relativistic fluids: A
  two-phase model}, Arch. Rational Mech. Anal. \textbf{130} (1995), 343--400.

\bibitem[Chr08]{ch:mpI}
\bysame, \emph{Mathematical problems of general relativity}, Zurich Lectures in
  Advanced Mathematics, European Mathematical Society, 2008.

\bibitem[Chr09]{ch:formation}
\bysame, \emph{The formation of black holes in general relativity}, EMS
  Monographs in Mathematics, European Mathematical Society, 2009.

\bibitem[Daf05]{d:interior}
Mihalis Dafermos, \emph{The interior of charged black holes and the problem of
  uniqueness in general relativity}, Communications on Pure and Applied
  Mathematics \textbf{LVIII} (2005), 0445--0504.

\bibitem[Daf09]{d:bhsp}
\bysame, \emph{The black hole stability problem}, Oberwolfach Reports, vol.~6,
  2009, pp.~2589--2594.

\bibitem[DR07]{dr:sds}
Mihalis Dafermos and Igor Rodnianski, \emph{The wave equation on
  {S}chwarzschild-de {S}itter spacetimes}, arXiv:0709.2766v1, 2007.

\bibitem[DR08]{dr:c}
\bysame, \emph{Lectures on black holes and linear waves}, Clay Mathematics
  Proceedings, 2008, to appear, available online
  http://arxiv.org/abs/0811.0354.

\bibitem[DR09]{dr:redshift}
\bysame, \emph{The red-shift effect and radiation decay on black hole
  spacetimes}, Communications on Pure and Applied Mathematics \textbf{LXII}
  (2009), 0859--0919.

\bibitem[DR11a]{dr:bhslp}
\bysame, \emph{The black hole stability problem for linear scalar
  perturbations}, Proceedings of the Twelfth Marcel Grossmann Meeting on
  General Relativity (Singapore) (Thibault Damour, Robert Jantzen, and Remo
  Ruffini, eds.), World Scientific, 2011, related article available online at
  http://www.arxiv.org/abs/1010.5137, pp.~132--189.

\bibitem[DR11b]{dr:kerr:bounded}
\bysame, \emph{A proof of the uniform boundedness of solutions to the wave
  equation on slowly rotating kerr backgrounds}, Inventiones mathematicae
  \textbf{185} (2011), no.~3, 467--550.

\bibitem[DV12]{dv:resolvent}
Kiril Datchev and Andr{\'a}s Vasy, \emph{Semiclassical resolvent estimates at
  trapped sets}, Ann. Inst. Fourier (Grenoble) \textbf{62} (2012), no.~6,
  2379--2384 (2013).

\bibitem[Dya11a]{sd:beyond}
Semyon Dyatlov, \emph{Exponential energy decay for {K}err--de {S}itter black
  holes beyond event horizons}, Math. Res. Lett. \textbf{18} (2011), no.~5,
  1023--1035.

\bibitem[Dya11b]{sd:quasi}
\bysame, \emph{Quasi-normal modes and exponential energy decay for the
  {K}err-de {S}itter black hole}, Comm. Math. Phys. \textbf{306} (2011), no.~1,
  119--163.

\bibitem[Fri86]{f:ds}
Helmut Friedrich, \emph{On the existence of n-geodesically complete or future
  complete solutions of {E}instein's field equations with smooth asymptotic
  structure}, Comm. Math. Phys. \textbf{107} (1986), no.~4, 587--609.

\bibitem[GH77]{gh:ceh}
Gary~W. Gibbons and Stephen~W. Hawking, \emph{Cosmological event horizons,
  thermodynamics, and particle creation}, Phys. Rev. D \textbf{15} (1977),
  no.~10, 2738--2751.

\bibitem[HE73]{he:gr}
Stephen~W. Hawking and G.F.R. Ellis, \emph{The large scale structure of
  space-time}, Cambridge Monographs on Mathematical Physics, Cambridge
  University Press, 1973.

\bibitem[Heu96]{h:uniqueness}
Markus Heusler, \emph{Black hole uniqueness theorems}, Cambridge Lecture Notes
  in Physics, Cambridge University Press, 1996.

\bibitem[HHTR99]{hawking:rot:ads}
S.W. Hawking, C.J. Hunter, and M.M. Taylor-Robinson, \emph{Rotation and the
  {AdS-CFT} correspondence}, Physical Review D \textbf{59} (1999), no.~064005,
  1--13.

\bibitem[HT85]{teitelboim:ads}
Marc Henneaux and Claudio Teitelboim, \emph{Asymptotically anti-de sitter
  spaces}, Communications in Mathematical Physics \textbf{98} (1985), no.~3,
  391--424.

\bibitem[Kot18]{kottler:sds}
F.~Kottler, \emph{{\"U}ber die physikalischen {G}rundlagen der {E}insteinschen
  {G}ravitationstheorie}, Ann. Phys. \textbf{56} (1918), 401--462.

\bibitem[LR77]{kr:effects}
Kayll Lake and R.C. Roeder, \emph{Effects of a nonvanishing cosmological
  constant on the spherically symmetric vacuum manifold}, Physical Review D
  \textbf{15} (1977), no.~12, 3513--3519.

\bibitem[Luk10]{jl:improved}
Jonathan Luk, \emph{Improved decay for solutions to the linear wave equation on
  a {S}chwarzschild black hole}, Ann. Henri Poincar\'e \textbf{11} (2010),
  805--880.

\bibitem[LVK13]{kl:pureradiation}
Christian L{\"u}bbe and Juan~Antonio Valiente~Kroon, \emph{A conformal approach
  for the analysis of the non-linear stability of radiation cosmologies}, Ann.
  Physics \textbf{328} (2013), 1--25.

\bibitem[MM87]{mm:mero}
Rafe~R. Mazzeo and Richard~B. Melrose, \emph{Meromorphic extension of the
  resolvent on complete spaces with asymptotically constant negative
  curvature}, J. Funct. Anal. \textbf{75} (1987), no.~2, 260--310.

\bibitem[MSV08]{mbv:sds}
Richard Melrose, Antonio {S\'a Barreto}, and Andr\'as Vasy, \emph{Asymptotics
  of solutions to the wave equation on de {S}itter-{S}chwarzschild space},
  arXiv:0811.2229v1, 2008.

\bibitem[NB09]{nb:discovering}
Harry Nussbaumer and Lydia Bieri, \emph{Discovering the expanding universe},
  Cambridge University Press, 2009.

\bibitem[Rin08]{hr:s}
Hans Ringstr\"om, \emph{Future stability of the {E}instein-non-linear scalar
  field system}, Invent. Math. \textbf{173} (2008), 123--208.

\bibitem[RS09]{rs:siee}
Igor Rodnianski and Jared Speck, \emph{The stability of the irrotational
  {E}uler-{E}instein system with a positive cosmological constant},
  arXiv:0911.5501v2, 2009.

\bibitem[Sch12]{vs:thesis}
Volker Schlue, \emph{Linear waves on higher dimensional {S}chwarzschild black
  holes and {S}chwarzschild de {S}itter spacetimes}, Ph.D. thesis, University
  of Cambridge, 2012.

\bibitem[Spe12]{s:se}
Jared Speck, \emph{The stabilizing effect of spacetime expansion on
  relativistic fluids with sharp results for the radiation equation of state},
  arXiv:1201.1963v1, 2012.

\bibitem[SZ97]{sabz:resonances}
Antonio {S\'a Baretto} and Maciej Zworski, \emph{Distribution of resonances for
  spherical black holes}, Math. Res. Letters. \textbf{4} (1997), 103--121.

\bibitem[Vas10]{av:ds}
Andr{\'a}s Vasy, \emph{The wave equation on asymptotically de {S}itter-like
  spaces}, Adv. Math. \textbf{223} (2010), no.~1, 49--97.

\bibitem[Vas13]{vd:microlocal}
\bysame, \emph{Microlocal analysis of asymptotically hyperbolic and {K}err-de
  {S}itter spaces (with an appendix by {S}emyon {D}yatlov)}, to appear in
  Inventiones mathematicae (2013), 1--133, arXiv:1012.4391v2.

\bibitem[Wey19]{weyl:sds}
Hermann Weyl, \emph{{\"U}ber die statischen kugelsymmetrischen {L}\"osungen von
  {E}insteins kosmologischen {G}ravitationsgleichungen}, Phys. Z. \textbf{20}
  (1919), 31--34.

\bibitem[WZ11]{wz:resolvent}
Jared Wunsch and Maciej Zworski, \emph{Resolvent estimates for normally
  hyperbolic trapped sets}, Ann. Henri Poincar\'e \textbf{12} (2011), no.~7,
  1349--1385.

\end{thebibliography}

\providecommand{\bysame}{\leavevmode\hbox to3em{\hrulefill}\thinspace}
\providecommand{\MR}{\relax\ifhmode\unskip\space\fi MR }
\providecommand{\MRhref}[2]{%
  \href{http://www.ams.org/mathscinet-getitem?mr=#1}{#2}
}
\providecommand{\href}[2]{#2}

\end{document}